\numberwithin{equation}{section}
\newtheorem{theorem}{Theorem}[section]
\newtheorem{lemma}[theorem]{Lemma}
\newtheorem{proposition}[theorem]{Proposition}
\newtheorem{definition}[theorem]{Definition}
\newtheorem{remark}[theorem]{Remark}
\begin{document}
\title[Decorrelation estimates]{Decorrelation estimates for a 1D 
tight \\ binding model in the localized regime}         
\author{Trinh Tuan Phong}       
\date{\today}         
\begin{abstract}
\noindent In this article, we prove decorrelation estimates for the eigenvalues of a 1D discrete tight binding model near two distinct energies in the localized regime. Consequently, for any integer $n\geq 2$, the asymptotic independence for local level statistics near $n$ distinct energies is obtained.
\end{abstract}
\maketitle

\section {Introduction}\label{S:intro}
The present paper deals with the following lattice Hamiltonian with off-diagonal disorder in dimension 1: for $u=\{u(n)\}_{n\in \mathbb Z}\in l^2(\mathbb Z),$ set
\begin{equation}\label{eq:1.1}
(H_{\omega}u)(n)=\omega_n (u(n)-u(n+1)) -\omega_{n-1}(u(n-1)-u(n)).
\end{equation}
The model (\ref{eq:1.1}) appears in the description of waves (light, acoustic waves, etc) which propagate through a disordered, discrete medium (c.f. \cite{AFAK94}). We can see $\{\omega_n\}_{n\in \mathbb Z}$ in this model as weights of bonds of the lattice $\mathbb Z.$\\ 
Throughout this article, we assume that $\omega:=\{\omega_n\}_{n\in \mathbb Z}$ are non-negative i.i.d. random variables (r.v.'s for short) with a bounded, compactly supported density $\rho$.\\
In addition, from Section \ref{S:intro} to Section \ref{S:decorrelation}, we assume more that  $\omega_n \in [\alpha_0, \beta_0]$ for all $n \in \mathbb Z$ where $\beta_0> \alpha_0>0.$ 
In Section \ref{S:remark}, we will comment on relaxing the hypothesis of the lower bound of r.v.'s $\omega$.\\ 
It is known that (see \cite{DM2011}): 
\begin{itemize}
\item the operator $H_{\omega}$ admits an almost sure spectrum $\Sigma:=[0,4\beta_0].$
\item  $H_{\omega}$ has an integrated density of states defined as follows:
\vskip 0.5 pt
\noindent $\omega-$a.s., the following limit exists and is $\omega$ independent: 
 \begin{equation}\label{eq:1.2}
N(E):=\lim_{|\Lambda|\rightarrow +\infty} \dfrac{\# \{\text {e.v. of $H_{\omega}$ less than E} \}}{|\Lambda|} \text{ for a.e. $E$}.
\end{equation}
As a direct consequence of the Wegner estimate (see Theorem \ref{T:Wegner} in Section ~\ref{S:Pre}), $N(E)$ is defined everywhere in $\mathbb R$ and absolutely continuous w.r.t. Lebesgue measure with a bounded derivative $\nu(E)$ called the density of states of $H_{\omega}.$  
\end{itemize}
In the present paper, we follow a usual way to study various statistics related to random operators. We restrict the operator $H_{\omega}$ on some interval $\Lambda \subset \mathbb Z$ of finite length with some boundary condition and obtain a finite-volume operator which is denoted by $H_{\omega}(\Lambda).$
Then, we study diverse statistics for this operator in the limit when $|\Lambda|$ goes to infinity.\\ 
Throughout this paper, the boundary condition to define $H_{\omega}(\Lambda)$ is always the periodic boundary condition.   
For example, if $\Lambda=[1,N],$ the operator $H_{\omega}(\Lambda)$ is a symmetric $N\times N$ matrix of the following form:  
\begin{equation*}
\left (
\begin{matrix}
\omega_N+\omega_1 &  -\omega_1 &0 &\dots   & 0 &-\omega_N\\
-\omega_1& \omega_1+\omega_2 & -\omega_2 &\dots &0 &0 \\
\vdots &\vdots &\vdots &\dots &\vdots &\vdots\\  
0 & 0 &0 &\dots & \omega_{N-2}+\omega_{N-1} &-\omega_{N-1}\\
-\omega_N &0 &0& \dots & -\omega_{N-1} &\omega_{N-1}+\omega_N
\end{matrix}
\right )   
\end{equation*}
For $L\in \mathbb N,$ let $\Lambda=\Lambda_{L}:=[-L,L]$ be a large interval in $\mathbb Z$ and $|\Lambda|:=(2L+1)$ be its cardinality.\\
We will denote the eigenvalues of $H_{\omega}(\Lambda)$ ordered increasingly and repeated according to multiplicity by $ E_1(\omega,\Lambda)\leqslant E_2(\omega,\Lambda)\leqslant \cdots \leqslant E_{|\Lambda|}(\omega,\Lambda).$\\
Let $I$ be the localized regime (the region of localization) in $\Sigma$ where the finite-volume fractional-moment criteria for localization are satisfied for the finite-volume operators $H_{\omega}(\Lambda)$ when $|\Lambda|$ is large enough (see Section 2 and \cite{ASFH2001} for more details).
In this region, the spectrum of $H_{\omega}$ is pure point and the corresponding eigenfunctions decay exponentially at infinity.\\
Pick $E$ a positive energy in $I$ with $\nu(E)>0,$ and define the local level statistics near $E$ as follows
\begin{equation}\label{eq:1.3}
\Xi(\xi,E,\omega,\Lambda)=\sum_{n=1}^{|\Lambda|}\delta_{\xi_n}(E,\omega,\Lambda)(\xi)
\end{equation} 
where 
\begin{equation}\label{eq:1.4}
\xi_{n}(E,\omega,\Lambda)=|\Lambda|\nu(E)(E_n(\omega,\Lambda)-E).
\end{equation}
For the model (\ref{eq:1.1}), it is known that the weak limit of the above point process is a Poisson point process:
\begin{theorem}\label{T:oldpoisson}[\cite{DM2011}]
Assume that $E$ is a positive energy in $I$ with $\nu(E)>0.$\\
Then, when $|\Lambda| \rightarrow +\infty,$ the point process $\Xi(\xi,E,\omega,\Lambda)$ converges weakly to a Poisson point process with the intensity $1$ i.e., for $(U_j)_{1\leqslant j\leqslant J},\; U_j \subset \mathbb R$ bounded measurable and $U_{j'}\cap U_j =\emptyset$ if $j\ne j'$ and $(k_j)_{1\leqslant j\leqslant J} \in \mathbb N^{J},$ we have
\begin{equation*}
\lim_{|\Lambda| \rightarrow +\infty} \left | \mathbb P \left (
\begin{Bmatrix}
\# \{j; \xi_j(E,\omega,\Lambda) \in U_1 \}  &= k_1\\ 
\vdots & \vdots\\ 
 \# \{j; \xi_j(E,\omega,\Lambda) \in U_j \}  &= k_J\\
\end{Bmatrix}
\right ) - \prod_{j=1}^J \dfrac{|U_j|^{k_j}}{k_j!} e^{-|U_j|}  \right | =0.
\end{equation*}   
\end{theorem}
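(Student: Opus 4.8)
The plan is to follow the now-standard Minami-type argument for Poisson convergence of rescaled eigenvalues in the localized regime, adapted here to the off-diagonal bond-disorder model with periodic boundary conditions. By the characterization of weak convergence recalled in the statement, it suffices to show that for disjoint bounded measurable sets $U_1,\dots,U_J$ the joint law of the counts $\#\{j:\xi_j\in U_m\}$ converges to that of $J$ independent Poisson variables with parameters $|U_1|,\dots,|U_J|$. The whole strategy rests on two probabilistic inputs for the finite-volume operators: the Wegner estimate (Theorem \ref{T:Wegner}), which controls the expected number of eigenvalues in a small energy window, and a Minami estimate, which bounds the probability of finding two or more eigenvalues there by the square of the window size; together with the exponential decay of eigenfunctions guaranteed on the localization interval $I$.

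First I would partition the large box $\Lambda_L=[-L,L]$ into $N_L\sim|\Lambda|/\ell_L$ consecutive sub-intervals $\Lambda_{L,k}$ of a common length $\ell_L$ chosen to grow slowly, $\ell_L\to\infty$ while $\ell_L/|\Lambda|\to 0$ (for instance a suitable power of $\log|\Lambda|$ compatible with the decay rate of the eigenfunction correlator). Because the bond variables $\{\omega_n\}$ are i.i.d., the block operators $H_\omega(\Lambda_{L,k})$ are independent up to the single coupling bond joining adjacent blocks, and this coupling can be removed at the cost of a low-rank error. The key analytic step is then to use localization to compare the spectrum of $H_\omega(\Lambda)$ near $E$, at scale $1/|\Lambda|$, with the spectrum of the direct sum $\bigoplus_k H_\omega(\Lambda_{L,k})$: an eigenfunction of $H_\omega(\Lambda)$ with energy close to $E$ decays exponentially away from its localization center, so the matrix elements of the coupling terms at the block boundaries are exponentially small, yielding an eigenvalue displacement that is negligible after rescaling by $|\Lambda|\nu(E)$.

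Once the decoupling is in place, each block contributes a point process that is independent of the others, and the problem reduces to a law-of-rare-events computation. For a window $E+U_m/(|\Lambda|\nu(E))$, the Wegner estimate shows the expected number of eigenvalues of a single block $H_\omega(\Lambda_{L,k})$ in this window is $\approx|U_m|/N_L$, while the Minami estimate shows the probability of two or more eigenvalues in the block is $O((\ell_L/|\Lambda|)^2)$, which sums to $o(1)$ over the $N_L$ blocks. Hence, up to a vanishing error, each block contributes an independent Bernoulli variable of success probability $p_{k,m}\approx|U_m|/N_L$, and the sum of $N_L$ such independent Bernoullis converges to a Poisson variable of parameter $\sum_k p_{k,m}\to|U_m|$ by the Poisson limit theorem. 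The disjointness of the $U_m$ together with the Minami bound forces the limiting counts in distinct windows to be carried by distinct blocks, which yields the asymptotic independence across $m$ and hence the product form in the statement.

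The main obstacle I expect is establishing and exploiting the Minami estimate for this specific model: the randomness enters through the off-diagonal bond weights rather than through a diagonal potential, so the spectral-averaging and rank arguments underlying Minami's inequality must be carried out with respect to the variables $\omega_n$, each of which appears in several matrix entries simultaneously, and the periodic boundary condition introduces a long-range coupling that has to be accommodated in the block-decoupling step. Controlling these two features, namely the non-monotone, multi-entry dependence on each $\omega_n$ and the wrap-around bond, is the technical heart of the argument; everything else is the by-now standard Poisson-approximation machinery.
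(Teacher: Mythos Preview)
The paper does not give its own proof of this theorem: it is quoted verbatim from \cite{DM2011} and used as background. So there is no ``paper's proof'' to compare against beyond that reference.

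Your sketch is the standard Minami--Molchanov strategy and is, in outline, exactly what \cite{DM2011} carries out for this model: decouple $\Lambda$ into independent sub-boxes using localization, show via Wegner that each sub-box contributes a Bernoulli count with the right expectation, use Minami to kill multiple occupancy, and apply the Poisson limit theorem. The obstacle you flag at the end---that the Minami inequality has to be redone because the disorder sits off-diagonally and each $\omega_n$ touches several matrix entries---is precisely the new input of \cite{DM2011}, and in the present paper it is simply quoted as Theorem~\ref{T:Minami}. So your worry is well placed but already resolved in the literature; once you accept Theorems~\ref{T:Wegner} and~\ref{T:Minami} as given, the rest of your outline goes through without difficulty. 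The periodic wrap-around bond you mention is a rank-one perturbation and is handled by the same localization-based decoupling that removes the inter-block bonds.
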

\noindent Recently, for the 1D discrete Anderson model, Klopp \cite{FK2011} showed more that if we pick two fixed, distinct energies $E$ and $E'$ in the localized regime, their two corresponding point processes $\Xi(\xi,E,\omega,\Lambda)$ and $\Xi(\xi,E', \omega,\Lambda)$ converge weakly, respectively to two independent Poisson point processes. In other words, the limits of $\Xi(\xi,E,\omega,\Lambda)$ and $\Xi(\xi,E',\omega,\Lambda)$ are stochastically independent.\\
\noindent It is known that the above statement holds true if one can prove a so-called decorrelation estimate.\\
That is exactly what we want to carry out here for the 1D discrete lattice Hamiltonian with off-diagonal disorder (\ref{eq:1.1}). Our decorrelation estimate is the following:
\begin{theorem}\label{T:decorrelation}
Let $E, E'$ be two positive, distinct energies in the localized regime. 
 Pick $\beta \in \left(1/2,1 \right)$ and $\alpha \in \left(0,1\right).$ Then, 
 for any $c>0,$ there exists $C>0$ such that, for $L$ large enough and $cL^{\alpha}\leqslant l\leqslant L^{\alpha}/c,$ one has 
\begin{equation*}
\mathbb P \Big(
\begin{Bmatrix}
\sigma (H_{\omega}(\Lambda_l))\cap (E+L^{-1}(-1,1)) \ne \emptyset\\ 
\sigma (H_{\omega}(\Lambda_l))\cap (E'+L^{-1}(-1,1)) \ne \emptyset\\ 
\end{Bmatrix}
\Big ) \leqslant C (l/L)^{2}e^{(\log L)^{\beta}}.
\end{equation*}
\end{theorem}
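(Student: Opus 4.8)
The plan is to adapt Klopp's method for decorrelation estimates to the off-diagonal structure of \eqref{eq:1.1}. The baseline is the single-energy bound coming from the Wegner estimate (Theorem \ref{T:Wegner}), namely $\mathbb P\big(\sigma(H_\omega(\Lambda_l))\cap (E+L^{-1}(-1,1))\neq\emptyset\big)\leq C\,l/L$ and the analogue near $E'$; the whole point is to promote the naive product of these two into a true $(l/L)^2$, losing only the subexponential factor $e^{(\log L)^\beta}$ that encodes the asymptotic independence of the two windows. First I would spatially localize the problem: fix an intermediate scale $\ell_0$, a power of $\log L$ with $\log L\ll\ell_0\ll l$, and partition $\Lambda_l$ into $\sim l/\ell_0$ sub-boxes of side $\ell_0$. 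In the localized regime the eigenfunctions of $H_\omega(\Lambda_l)$ decay (sub)exponentially, and since $\ell_0\gg\log L$ every eigenvalue in $E+L^{-1}(-1,1)$ or $E'+L^{-1}(-1,1)$ agrees, up to an error $\ll L^{-1}$, with an eigenvalue of $H_\omega$ restricted to one of the sub-boxes whose eigenfunction is concentrated there. Making the localization bounds uniform over the window and over the box is what produces the factor $e^{(\log L)^\beta}$, and forces $\beta>1/2$.

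After localization I would split according to the positions of the two localization centers. In the \emph{far} regime the eigenfunction near $E$ and the one near $E'$ sit in sub-boxes at distance $\geq\ell_0$, so the two spectral events depend on disjoint families of the independent variables $\{\omega_n\}$ and are genuinely independent; applying the Wegner bound to each sub-box and summing over the $\sim(l/\ell_0)^2$ ordered pairs of boxes returns exactly $(l/L)^2$. The delicate \emph{near} regime, where two eigenvalues close to the distinct energies $E\neq E'$ are localized in the same or adjacent sub-box, is the main obstacle and requires a genuine one-box decorrelation lemma.

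The structural input for that lemma is that the quadratic form of the model is $\langle u,H_\omega u\rangle=\sum_n\omega_n|u(n+1)-u(n)|^2$, so Feynman--Hellmann gives, for a simple eigenvalue,
\begin{equation*}
\frac{\partial E_j}{\partial\omega_n}=|\psi_j(n+1)-\psi_j(n)|^2\geq 0,\qquad E_j=\sum_n\omega_n\,|\psi_j(n+1)-\psi_j(n)|^2 .
\end{equation*}
Writing $v_j,v_k$ for the gradients of the eigenvalue $E_j$ near $E$ and $E_k$ near $E'$, I would show $v_j$ and $v_k$ are quantitatively linearly independent. Indeed, if $v_j=c\,v_k$ then the bond-currents $\phi_j(n):=\omega_n(\psi_j(n)-\psi_j(n+1))$ satisfy $|\phi_j(n)|=\sqrt{c}\,|\phi_k(n)|$ on every bond, while the energy identity above pins $c=E/E'$; when the signs of $\phi_j/\phi_k$ agree on all bonds, the discrete equation $\phi_j(n)-\phi_j(n-1)=E_j\psi_j(n)$ forces $\psi_j\parallel\psi_k$, contradicting orthogonality. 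Transversality then lets me pick two bonds $a,b$ for which the Jacobian $\partial_{\omega_a}E_j\,\partial_{\omega_b}E_k-\partial_{\omega_b}E_j\,\partial_{\omega_a}E_k$ is bounded below; freezing the other variables and changing variables from $(\omega_a,\omega_b)$ to $(E_j,E_k)$ bounds the joint density of $(E_j,E_k)$ by $C\|\rho\|_\infty^2$, so both eigenvalues lie in their windows of size $L^{-1}$ with probability $\leq CL^{-2}$. Summing over the relevant pairs of eigenvalues in a box and over the $\sim l/\ell_0$ boxes contributes $\lesssim l\ell_0/L^2\lesssim(l/L)^2$, since $\ell_0\ll l$.

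I expect the transversality estimate to be the real difficulty: establishing a lower bound on $\|v_j\wedge v_k\|$ that is uniform in $\omega$, in particular covering the configurations where $\phi_j$ and $\phi_k$ have equal moduli but opposite signs on some bonds (which the orthogonality argument above does not immediately exclude), and tracking how this bound depends on $|E-E'|$ and on the localization errors. Once that is quantified and one checks that the accumulated corrections remain within $e^{(\log L)^\beta}$ for $\beta>1/2$, combining the near regime with the far-regime independence closes the estimate.
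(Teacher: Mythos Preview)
Your overall architecture matches the paper's: reduce to boxes of logarithmic size, use Feynman--Hellmann to identify $\partial_{\omega_n}E_j=|\psi_j(n)-\psi_j(n+1)|^2$, exploit the Euler identity $\sum_n\omega_n\partial_{\omega_n}E_j=E_j$, and then argue that the gradients of $E_j$ and $E_k$ are transverse so that a two-variable change of coordinates gives the $(l/L)^2$. You also put your finger on exactly the right obstruction: the case where the bond increments $Tu(n)=u(n)-u(n+1)$ and $Tv(n)$ have equal modulus but \emph{opposite} signs on some bonds. That is the entire content of the theorem, and your proposal leaves it open.

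Concretely, the situation you must exclude is not $v_j=c\,v_k$ with a single global sign, but rather $|Tu(n)|\approx\sqrt{c}\,|Tv(n)|$ with a sign pattern $\varepsilon_n\in\{\pm1\}$ that varies with $n$; orthogonality of $u$ and $v$ says nothing against this. The paper's Lemma~\ref{L:mainlemma} handles precisely this case, and its proof is the technical core of the paper. One partitions the index set into $\mathcal P=\{n:\varepsilon_n=+1\}$ and $\mathcal Q=\{n:\varepsilon_n=-1\}$. Combining the relations $Tu(n)=\pm\sqrt{c}\,Tv(n)$ with the two eigenequations at consecutive sites, one first shows (Lemma~\ref{L:length}) that no maximal run of $\mathcal P$'s or $\mathcal Q$'s can contain five points (otherwise $|u(n)|^2+|u(n+1)|^2$ is exponentially small, contradicting the transfer-matrix lower bound of Lemma~\ref{L:lowerbound}). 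This forces many sign changes inside an interval $J$ of length $\asymp L^\beta$ where the eigenfunction has a uniform lower bound. For each block of four consecutive interior points of $J$ one then writes down an explicit $10\times10$ linear system in $(u(n-2),\dots,u(n+2),v(n-2),\dots,v(n+2))$ built from \eqref{eq:heart} and the eigenequations; a case analysis (Lemma~\ref{L:reduce}) reduces the sixteen possible sign patterns to four, and for each of those the determinant is computed by hand (Lemma~\ref{L:condition} and Appendix~\ref{S:A}) and shown to vanish only if a relation of the type
\[
\Big|\omega_n+\tfrac{E'-E}{4}\Big|\leq Ce^{-L^\beta/8}\quad\text{or}\quad\Big|\omega_{n-1}\omega_n-\tfrac{(E+E')^2}{4}\Big|\leq Ce^{-L^\beta/4}
\]
holds. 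This produces $\asymp L^\beta$ independent algebraic constraints on the $\omega_n$'s, each of probability $\lesssim e^{-cL^\beta}$, whence $\mathbb P^*\leq 2^L e^{-cL^{2\beta}}$ and the restriction $\beta>1/2$.

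In short, the gap in your proposal is not a matter of bookkeeping: the mixed-sign configurations are generic and are eliminated only through these explicit determinant computations, which convert near-collinearity of the gradients into polynomial constraints on the random variables. Without a substitute for this step, the transversality bound you need (a lower bound on $\|v_j\wedge v_k\|$ uniform in $\omega$) simply does not hold; it fails on a set of $\omega$ whose smallness is exactly what Lemma~\ref{L:mainlemma} quantifies.
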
 
\noindent This decorrelation estimate means that, up to a sub-polynomial error, the probability of obtaining simultaneously two eigenvalues near two distinct energies is bounded by the product of the estimates given by Wegner estimate for each of these two energies. Roughly speaking, two eigenvalues of our model near two distinct energies behave like two independent random variables.
\\
Thanks to Theorem \ref{T:decorrelation}, we can proceed as in Section 3 of \cite{FK2011} to obtain the asymptotic independence of the weak limits of $\Xi(\xi,E,\omega,\Lambda)$ and $\Xi(\xi,E',\omega,\Lambda)$ with $E, E' >0$ for the model (\ref{eq:1.1}): 
\begin{theorem}\label{T:Poisson}
Pick two positive, distinct energies $E$ and $E'$ in the localized regime such that $\nu(E)>0$ and  $ \nu(E')>0$.\\
When $|\Lambda| \rightarrow +\infty,$ the point processes $\Xi(\xi,E,\omega,\Lambda),$ and $\Xi(\xi,E',\omega,\Lambda)$ converge weakly respectively to two independent Poisson processes on $\mathbb R$ with intensity the Lebesgue measure.\\ 
That is, for $(U_j)_{1\leqslant j\leqslant J},\; U_j \subset \mathbb R$ bounded measurable and $U_{j'}\cap U_j =\emptyset$ if $j\ne j'$ and $(k_j)_{1\leqslant j\leqslant J} \in \mathbb N^{J}$ and $(U'_j)_{1\leqslant j\leqslant J'},\; U'_j \subset \mathbb R$ bounded measurable and $U'_{j'}\cap U'_j =\emptyset$ if $j\ne j'$ and $(k'_j)_{1\leqslant j\leqslant J'} \in \mathbb N^{J'},$ we have
\begin{align}\label{eq:poisson}
\mathbb P \left (
\begin{Bmatrix}
\# \{j; \xi_j(E,\omega,\Lambda) \in U_1 \}  &= k_1\\ 
\vdots & \vdots\\ 
 \# \{j; \xi_j(E,\omega,\Lambda) \in U_j \}  &= k_J\\
\# \{j; \xi_j(E',\omega,\Lambda) \in U'_1 \}  &= k'_1\\ 
\vdots & \vdots\\ 
\# \{j; \xi_j(E',\omega,\Lambda) \in U'_J \}  &= k'_J
\end{Bmatrix}
\right ) 
\rightarrow \prod_{j=1}^J \dfrac{|U_j|^{k_j}}{k_j!} e^{-|U_j|} \prod_{i=1}^{J'} \dfrac{|U'_i|^{k'_i}}{k'_i!}e^{-|U'_i|}  
\end{align}
as $|\Lambda| \rightarrow +\infty$.   
\end{theorem}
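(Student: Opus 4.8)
The plan is to deduce Theorem~\ref{T:Poisson} from the two ingredients already at our disposal: the marginal convergence of each rescaled process to a Poisson point process (Theorem~\ref{T:oldpoisson}) and the decorrelation estimate (Theorem~\ref{T:decorrelation}), following the abstract scheme of Section~3 of~\cite{FK2011}. The guiding principle is that, in the localized regime, eigenvalues in a small energy window are produced by spatially localized pieces of the sample, and the weights $\{\omega_n\}$ on disjoint pieces are independent; the only mechanism by which the windows around $E$ and $E'$ could fail to decouple is a single piece generating eigenvalues near \emph{both} energies, and it is exactly the probability of this event that Theorem~\ref{T:decorrelation} controls.

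First I would reduce the joint statement to an asymptotic factorization. Since the theory of point-process convergence reduces \eqref{eq:poisson} to the convergence of joint occupation probabilities, and since each marginal family already converges to the correct Poisson law by Theorem~\ref{T:oldpoisson}, it suffices to show that the occupation numbers recorded near $E$ and those recorded near $E'$ become asymptotically independent as $|\Lambda|\to\infty$. To obtain this, I would partition $\Lambda=\Lambda_L$ into $N\sim |\Lambda|/l$ disjoint sub-intervals $\Lambda_p$ of common length $l$ with $cL^{\alpha}\leqslant l\leqslant L^{\alpha}/c$, and invoke the exponential localization estimates underlying Theorem~\ref{T:oldpoisson} to approximate, up to a super-polynomially small error of high probability, the eigenvalues of $H_\omega(\Lambda)$ lying in the windows $E+|\Lambda|^{-1}\cdot(\text{rescaled } U_j)$ and $E'+|\Lambda|^{-1}\cdot(\text{rescaled } U'_i)$ by the eigenvalues of the decoupled operator $\bigoplus_p H_\omega(\Lambda_p)$. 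This block approximation, together with the single-energy estimate that two eigenvalues rarely fall in one window from the same block, shows that each relevant eigenvalue comes from exactly one block.

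Once the statistics are carried by the independent blocks $H_\omega(\Lambda_p)$, the contributions of distinct blocks are genuinely independent because the weights are i.i.d., so the joint law factorizes as soon as the $E$- and $E'$-contributions arise from disjoint collections of blocks. The only obstruction is a block $\Lambda_p$ whose spectrum meets both $E+L^{-1}(-1,1)$ and $E'+L^{-1}(-1,1)$. By Theorem~\ref{T:decorrelation} the probability that a fixed block is of this kind is at most $C(l/L)^2 e^{(\log L)^{\beta}}$, and summing over the $N\sim L^{1-\alpha}$ blocks yields
\begin{equation*}
N\cdot C\,(l/L)^2 e^{(\log L)^{\beta}}\ \sim\ L^{1-\alpha}\cdot L^{2(\alpha-1)} e^{(\log L)^{\beta}} \ =\ L^{\alpha-1} e^{(\log L)^{\beta}}\ \longrightarrow\ 0
\end{equation*}
as $L\to\infty$, since $\alpha<1$ and the factor $e^{(\log L)^{\beta}}$ is sub-polynomial. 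Hence, with probability tending to $1$, no block feeds both windows, which delivers the desired factorization and, combined with the marginal Poisson convergence, proves \eqref{eq:poisson}.

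The hard part will be the block-decoupling of the second step: controlling the discrepancy between the eigenvalues of $H_\omega(\Lambda)$ near each energy and those of $\bigoplus_p H_\omega(\Lambda_p)$. This requires the full strength of the fractional-moment localization bounds (exponential decay of eigenfunctions and of the finite-volume resolvent) and the single-energy non-concentration estimate that suppresses two eigenvalues per block and window. However, these are precisely the ingredients already used to establish Theorem~\ref{T:oldpoisson} in~\cite{DM2011}, so the genuinely new analytic content has been isolated in Theorem~\ref{T:decorrelation}; the remainder is the probabilistic assembly described above, transcribed from~\cite{FK2011}.
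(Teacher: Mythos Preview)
Your proposal is correct and follows essentially the same route as the paper. The paper itself does not spell out a proof of Theorem~\ref{T:Poisson} but simply refers to Section~3 of~\cite{FK2011}; the detailed version it does give (Section~\ref{S:morethan2}, for $n\geq 2$ energies) proceeds exactly as you describe---decompose $\Lambda$ into sub-intervals of size $l\asymp L^{\alpha}$, reduce via localization and Minami to the independent block operators, and use Theorem~\ref{T:decorrelation} to show that the event ``one block contributes to both energy windows'' has vanishing total probability, the only cosmetic difference being that the paper packages the factorization through Laplace functionals rather than through your direct union-bound on the bad event.
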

\noindent Moreover, in Section ~\ref{S:morethan2}, we will generalize Theorem \ref{T:Poisson} by considering not only two but any fixed number of distinct energies.\\  
\noindent To prove Theorem ~\ref{T:decorrelation} for the model (\ref{eq:1.1}), we follow the strategy introduced in ~\cite{FK2011}.
The key point of the proof of decorrelation estimates for the 1D discrete Anderson model in ~\cite{FK2011} is to derive that 
the gradient with respect to $\omega$ of two eigenvalues $E(\omega)$ and $E'(\omega)$ near two distinct energies $E$ and $E'$ are not co-linear with a good probability. In the discrete Anderson case, this statement will hold true if the gradients of $E(\omega)$ and $E'(\omega)$ are distinct. 
In deed, the gradients of $E(\omega)$ and $E'(\omega)$ have non-negative components and their $l^1-$norm are always equal to $1$. So, if they are co-linear, they should be the same.
\vskip 1pt
\noindent Unfortunately, the $l^1-$norm of the gradient of an eigenvalue of our finite-volume operator $H_{\omega}(\Lambda)$ is not a constant w.r.t. $\omega$ anymore (it is even not bounded from below by a positive constant uniformly w.r.t. $\omega$). 
Moreover, to prove the above key point for the discrete Anderson model, ~\cite{FK2011} exploits the diagonal structure of the potential which can not be used for the present case. So, a different approach in the proof is needed to carry out Theorem \ref{T:decorrelation}. This approach is contained in Lemma \ref{L:mainlemma}.
\vskip 1pt
\noindent In addition, 
Theorem 1.12 in ~\cite{FGFK} implies directly the following result for the model (\ref{eq:1.1}):
\begin{theorem}\label{T:fur} [Theorem 1.12, ~\cite{FGFK}]  
Pick $0<E_0 \in I$ such that the density of states $\nu$ is continuous and positive at $E_0.$\\
Consider two sequences of positive energies, say $(E_{\Lambda})_{\Lambda},$ $(E'_{\Lambda})_{\Lambda}$ such that 
\begin{enumerate}
\item $E_{\Lambda}\xrightarrow[\Lambda\rightarrow \mathbb Z^d]{} E_0$ and $E'_{\Lambda}\xrightarrow[\Lambda\rightarrow \mathbb Z^d]{} E_0,$
\item $|\Lambda| |N(E_{\Lambda})-N(E'_{\Lambda})|\xrightarrow[\Lambda\rightarrow \mathbb Z^d]{}+\infty.$
\end{enumerate}
Then, the point processes $\Xi(\xi,E_{\Lambda},\omega,\Lambda)$ and $\Xi(\xi,E'_{\Lambda},\omega,\Lambda)$ converges weakly respectively to two independent Poisson point processes in $\mathbb R$ with intensity the Lebesgue measure. 
\end{theorem}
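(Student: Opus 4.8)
The plan is to obtain the statement as a direct application of the abstract criterion, Theorem 1.12 of \cite{FGFK}, so that the entire task reduces to checking that the off-diagonal model (\ref{eq:1.1}) meets the hypotheses of that general theorem on a neighborhood of $E_0$. That theorem is used as a black box: for a discrete ergodic random operator it produces the asymptotic independence of the two rescaled eigenvalue point processes attached to energy sequences $E_\Lambda, E'_\Lambda$ converging to $E_0$ and obeying the quantile-separation condition $|\Lambda|\,|N(E_\Lambda)-N(E'_\Lambda)|\to\infty$, provided that, locally near $E_0$, the operator enjoys complete localization, a Wegner estimate, a Minami-type estimate, and a decorrelation estimate at distinct energies. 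Accordingly I would verify these inputs for (\ref{eq:1.1}) and then quote Theorem 1.12 of \cite{FGFK} verbatim.

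Localization on $I$ holds by the standing hypothesis that the finite-volume fractional-moment criteria for localization are satisfied there, and the Wegner estimate is Theorem \ref{T:Wegner}, which moreover makes $N$ absolutely continuous with bounded density $\nu$, continuous and positive at $E_0$ by assumption. The Minami-type input is available for the model: it underlies the single-energy Poisson convergence of Theorem \ref{T:oldpoisson} established in \cite{DM2011}. The genuinely new and decisive ingredient is the decorrelation estimate, which is supplied by Theorem \ref{T:decorrelation}; it must, however, be used in a form that is uniform as the two energies range over a small window around $E_0$ and that persists when the energies are allowed to approach one another.

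The main obstacle is exactly this uniformity together with the near-collision regime, since Theorem \ref{T:decorrelation} is phrased for two \emph{fixed} distinct energies, whereas here $E_\Lambda$ and $E'_\Lambda$ both tend to $E_0$ and may become arbitrarily close. The resolution splits into two regimes. As long as the two energies remain separated by a fixed amount, the proof of Theorem \ref{T:decorrelation} — built on the non-collinearity of the gradients $\nabla_\omega E(\omega)$ and $\nabla_\omega E'(\omega)$ delivered by Lemma \ref{L:mainlemma} — can be run with constants that are uniform over a compact energy interval containing $E_0$, so decorrelation holds throughout that interval. When instead the energies get close, the hypothesis $|\Lambda|\,|N(E_\Lambda)-N(E'_\Lambda)|\to\infty$ forces the two rescaled windows apart in the limit, and this, combined with the Minami estimate, plays the role of the distinct-energy decorrelation. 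Checking that these two regimes assemble into the single uniform hypothesis demanded by Theorem 1.12 of \cite{FGFK} — rather than proving any essentially new estimate — is where the care lies; once it is in place, the asserted independence of the two limiting Poisson processes with intensity the Lebesgue measure is immediate from the cited theorem.
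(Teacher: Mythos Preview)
Your overall plan --- invoke Theorem~1.12 of \cite{FGFK} as a black box and verify its hypotheses for the model~(\ref{eq:1.1}) --- is exactly what the paper does; indeed, the paper offers no proof of Theorem~\ref{T:fur} beyond the sentence ``Theorem~1.12 in \cite{FGFK} implies directly the following result.'' So at that level you and the paper agree.

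Where you diverge is in the list of hypotheses you claim must be checked. You assert that the ``genuinely new and decisive ingredient'' is the decorrelation estimate of Theorem~\ref{T:decorrelation}, and you then spend the bulk of the argument worrying about making that estimate uniform as $E_\Lambda$ and $E'_\Lambda$ approach one another. This is a misreading of what Theorem~1.12 of \cite{FGFK} actually requires. That theorem needs only localization, a Wegner estimate, and a Minami estimate --- precisely the inputs collected in Section~\ref{S:Pre} --- and \emph{not} a decorrelation estimate at distinct fixed energies. The whole point of the quantile-separation hypothesis $|\Lambda|\,|N(E_\Lambda)-N(E'_\Lambda)|\to\infty$ is that, once both energies live near the single point $E_0$, the two rescaled windows around $E_\Lambda$ and $E'_\Lambda$ eventually become disjoint at the natural scale, and then the Minami estimate alone controls the joint event. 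You in fact say this yourself in the ``close'' regime, but then graft on an unnecessary ``separated by a fixed amount'' regime that cannot persist when both sequences converge to $E_0$.

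So the gap is not a missing estimate but an extraneous one: drop the appeal to Theorem~\ref{T:decorrelation} and to Lemma~\ref{L:mainlemma} entirely, and simply note that Theorems~\ref{T:Wegner} and~\ref{T:Minami} together with Proposition~\ref{P:localised} place the model squarely inside the framework of \cite{FGFK}, whence Theorem~1.12 there applies verbatim. The decorrelation estimate of the present paper is needed for Theorem~\ref{T:Poisson} (fixed distinct energies), not for Theorem~\ref{T:fur}.
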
 
\noindent In Theorem \ref{T:fur}, instead of fixing two distinct energies $E$ and $E'$, one considers two sequences of positive energies $\{E_{\Lambda}\},$ $\{E'_{\Lambda}\}$ which tend to each other as $|\Lambda|\rightarrow \infty$. In addition, one assumes that, roughly speaking, there are sufficiently many eigenvalues of $H_{\omega}(\Lambda)$ between $E_{\Lambda}$ and $E'_{\Lambda}$ as $\Lambda$ large. Then, the asymptotic independence of two point processes associated to $E_{\Lambda}$ and $E'_{\Lambda}$ is obtained. (See Section \ref{S:Pre} for the reason we only consider positive energies in the present paper).\\   
Besides, it is known that the existence of an integrated density of states defined as in (~\ref{eq:1.2}) implies that the average distance (mean spacing) between eigenlevels is of order $|\Lambda|^{-1}.$\\  
Thus, according to Theorem ~\ref{T:fur}, in the localized regime, eigenvalues separated by a distance that is asymptotically infinite with respect to the mean spacing between eigenlevels behave like independent random variables. In other words, there are no interactions between distinct eigenvalues, except at a very short distance.\\
\text{\bf Notation:} In the present paper, we use Dirac's notations: If $\varphi$ is a vector in a Hilbert space $\mathcal H,$ we denote by $|\varphi\rangle\langle \varphi| =\langle \varphi, \cdot\rangle_{\mathcal H} \varphi $ the projection operator on $\varphi.$
Besides, throughout the present paper, the symbol $\|\cdot\|$ stands for the $l^2-$norm $\|\cdot\|_{2}$ in some finite dimensional Hilbert space.
\section{Preliminaries}\label{S:Pre}
We recall here a Wegner-type estimate and a Minami-type estimate for the model (\ref{eq:1.1}) which are essentially important for the proofs of Poisson statistics as well as those of decorrelation estimates.
\begin{theorem} \label{T:Wegner}[Wegner estimate, Theorem 2.1, \cite{DM2011}] 
$$ \mathbb P( \text{dist} (E, \sigma(H_{\omega}(\Lambda))) \leqslant \epsilon) \leq \dfrac{2d \|s\rho(s)\|_{\infty}}{E-\epsilon} \epsilon |\Lambda|
 $$
for all intervals $\Lambda\subset \mathbb Z$ and $0<\epsilon <E$.
\end{theorem}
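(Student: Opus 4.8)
The plan is to reduce the probability to an expected eigenvalue count and then run a weighted spectral averaging argument tailored to the linear, homogeneous dependence of $H_\omega(\Lambda)$ on the bond weights. Writing $I=[E-\epsilon,E+\epsilon]$, Chebyshev's inequality gives
$$\mathbb P\big(\text{dist}(E,\sigma(H_\omega(\Lambda)))\le\epsilon\big)\le\mathbb P\big(\text{Tr}\,\mathbf 1_I(H_\omega(\Lambda))\ge1\big)\le\mathbb E\big[\text{Tr}\,\mathbf 1_I(H_\omega(\Lambda))\big],$$
so it suffices to bound the expected number of eigenvalues in $I$.

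First I would record the algebraic structure. From the quadratic form $\langle u,H_\omega u\rangle=\sum_m\omega_m|u(m+1)-u(m)|^2$ one reads off $H_\omega(\Lambda)=\sum_m\omega_mP_m$, where $P_m=|\phi_m\rangle\langle\phi_m|$ with $\phi_m=\delta_m-\delta_{m+1}$ the bond vectors and the sum runs over the $d\,|\Lambda|$ nearest-neighbour bonds of $\Lambda$. Two consequences drive the proof: since $\partial_{\omega_m}H_\omega=P_m\ge0$, every eigenvalue $E_k(\omega)$ is nondecreasing in each $\omega_m$ with $\partial_{\omega_m}E_k=\langle\psi_k,P_m\psi_k\rangle$; and since $H_\omega$ is homogeneous of degree one in $\omega$, Euler's identity $\sum_m\omega_mP_m=H_\omega$ holds.

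The core estimate is then
$$(E-\epsilon)\,\text{Tr}\,\mathbf 1_I(H_\omega)\le\text{Tr}\big[H_\omega\mathbf 1_I(H_\omega)\big]=\sum_m\omega_m\,\text{Tr}\big[P_m\mathbf 1_I(H_\omega)\big]=\sum_m\omega_m\,\partial_{\omega_m}\text{Tr}\,g(H_\omega),$$
where the first inequality uses $E_k\ge E-\epsilon>0$ on $I$, the middle equality is Euler's identity, and in the last I take $g$ to be the Lipschitz ramp with $g'=\mathbf 1_I$ (so $0\le g\le2\epsilon$ and $g$ is constant off $I$), using $\partial_{\omega_m}\text{Tr}\,g(H)=\text{Tr}[g'(H)P_m]$. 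Here the spectral positivity $H_\omega\ge E-\epsilon$ on the window $I$ plays the role of the usual covering condition $\sum_mP_m\ge c>0$, which fails for this model since $\sum_mP_m=-\Delta$ reaches $0$; this is precisely why the energy must be positive and why the bound degenerates like $1/(E-\epsilon)$ as $E\downarrow0$. I would then take the expectation against $\prod_j\rho(\omega_j)$ and integrate the $m$-th term in the single variable $\omega_m$. As $\partial_{\omega_m}\text{Tr}\,g(H)\ge0$, the nonnegative weight may be bounded pointwise by $\omega_m\rho(\omega_m)\le\|s\rho(s)\|_\infty$, leaving $\int_{\alpha_0}^{\beta_0}\partial_{\omega_m}\text{Tr}\,g(H)\,d\omega_m=\text{Tr}\,g(H)\big|_{\omega_m=\beta_0}-\text{Tr}\,g(H)\big|_{\omega_m=\alpha_0}$.

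It remains to see that this boundary difference is at most $2\epsilon$. Raising $\omega_m$ from $\alpha_0$ to $\beta_0$ is a positive rank-one perturbation, so the eigenvalues move up and interlace; summing the increments $g(E_k(\beta_0))-g(E_k(\alpha_0))$ of the monotone ramp over the disjoint, ordered intervals $[E_k(\alpha_0),E_k(\beta_0)]$ yields at most the total variation $\|g\|_{TV}=2\epsilon$. Each of the $d\,|\Lambda|$ bond terms therefore contributes at most $2\epsilon\,\|s\rho(s)\|_\infty$, and dividing by $E-\epsilon$ gives $\mathbb E[\text{Tr}\,\mathbf 1_I(H_\omega)]\le 2d\,\|s\rho(s)\|_\infty\,\epsilon\,|\Lambda|/(E-\epsilon)$, as claimed. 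I expect the main obstacle to be conceptual rather than computational: the breakdown of the covering condition for off-diagonal disorder. The device of multiplying by the weights $\omega_m$ and invoking Euler's identity to trade the covering condition for the spectral lower bound $E-\epsilon$ is the key point, and it is also what produces the weighted norm $\|s\rho(s)\|_\infty$ in place of $\|\rho\|_\infty$.
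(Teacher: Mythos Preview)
The paper does not prove this statement: Theorem~\ref{T:Wegner} is quoted without proof from \cite{DM2011} in the Preliminaries section, so there is no in-paper argument to compare against. Your proposal is therefore not a reproduction but a self-contained proof, and it is essentially the argument of \cite{DM2011}.

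The proof is correct. The crucial insight---that the usual covering condition $\sum_m \partial_{\omega_m}H\ge c>0$ fails here (since $\sum_m P_m=-\Delta$ is not bounded below away from zero) and must be replaced by multiplying through by the weights $\omega_m$ and invoking the homogeneity identity $\sum_m\omega_m P_m=H_\omega$ together with the spectral lower bound $H_\omega\ge E-\epsilon$ on the window---is exactly right, and it explains both the factor $(E-\epsilon)^{-1}$ and the appearance of $\|s\rho(s)\|_\infty$ in place of $\|\rho\|_\infty$. The rank-one/interlacing step bounding the boundary difference by the total variation of the ramp is standard spectral averaging and is carried out correctly. One cosmetic remark: the operators $P_m=|\phi_m\rangle\langle\phi_m|$ with $\phi_m=\delta_m-\delta_{m+1}$ are rank-one but not projections (their nonzero eigenvalue is $2$, cf.\ the $\Pi_\gamma$ in Lemma~\ref{L:perturbation}); this does not affect any step of your argument, since all you use is positivity and rank one.
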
 
\noindent Roughly speaking, the above Wegner estimate means that the probability that we can find an eigenvalue of the finite-volume operator  $H_{\omega}(\Lambda)$ in a interval (suppose that the length of this interval is small) is bounded by the length of this interval times $|\Lambda|.$\\  
Following is the Minami estimate:
\begin{theorem}\label{T:Minami}[Minami estimate, Theorem 3.1, \cite{DM2011}] 
There exists $C>0$ such that, for all intervals $J=[a,b] \subset (0, +\infty),$ and $\Lambda \subset \mathbb Z,$ we have
$$ \mathbb P (\# \{ \sigma (H_{\omega}(\Lambda))\cap J  \}   \geqslant 2 ) \leqslant \beta_0 \|\rho\|_{\infty}\|s\rho(s)\|_{\infty} (|J||\Lambda|)^2/2a^2. $$
\end{theorem}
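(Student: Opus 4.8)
The plan is to follow the classical route to a Minami estimate, namely to control the second factorial moment of the eigenvalue counting function, while replacing the diagonal rank-one structure of the usual Anderson model by the bond rank-one structure of (\ref{eq:1.1}). Write $N:=N_J(\omega)=\operatorname{Tr}\mathbf 1_J\big(H_\omega(\Lambda)\big)$ for the number of eigenvalues of $H_\omega(\Lambda)$ in $J=[a,b]\subset(0,+\infty)$. Since $N$ is a non-negative integer, $\mathbf 1_{\{N\ge 2\}}\le\tfrac12 N(N-1)$, so it suffices to prove the second-moment bound $\mathbb E[N(N-1)]\le \beta_0\|\rho\|_\infty\|s\rho(s)\|_\infty\,(|J||\Lambda|)^2/a^2$. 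This first reduction is immediate from the Chebyshev-type inequality above, and the whole difficulty is concentrated in the second-moment estimate.

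For the latter I would use the determinantal representation of a spectral projection: since $\mathbf 1_J$ is an orthogonal projection, in any orthonormal basis $\{f_\alpha\}$ one has
\[
N(N-1)=\sum_{\alpha,\beta}\det\begin{pmatrix}\langle f_\alpha|\mathbf 1_J|f_\alpha\rangle & \langle f_\alpha|\mathbf 1_J|f_\beta\rangle\\ \langle f_\beta|\mathbf 1_J|f_\alpha\rangle & \langle f_\beta|\mathbf 1_J|f_\beta\rangle\end{pmatrix},
\]
and I would substitute $\mathbf 1_J(H)=\tfrac1\pi\lim_{\epsilon\downarrow0}\int_J\operatorname{Im}(H-E-i\epsilon)^{-1}\,dE$ so that each $2\times2$ determinant becomes a double energy integral over $J\times J$ of a determinant of imaginary parts of Green's functions. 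The structural facts to exploit are that the disorder enters only through the bond projections: with $\phi_n:=e_n-e_{n+1}$ one has $H_\omega=\sum_n\omega_n|\phi_n\rangle\langle\phi_n|$, so $\partial_{\omega_n}H_\omega=|\phi_n\rangle\langle\phi_n|\ge0$ is a monotone rank-one perturbation, the directions satisfy $\phi_n\perp\phi_m$ whenever $|n-m|\ge2$, and $\Delta_0:=\sum_n|\phi_n\rangle\langle\phi_n|$ obeys $\alpha_0\Delta_0\le H_\omega\le\beta_0\Delta_0$. In particular, for any eigenpair $(E,\psi)$ of $H_\omega(\Lambda)$ with $\|\psi\|=1$,
\[
\sum_n|\langle\phi_n,\psi\rangle|^2=\langle\psi,\Delta_0\psi\rangle\ge E/\beta_0 ,
\]
which quantifies how strongly the eigenvalue $E$ responds to the couplings and will be the engine of the averaging.

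The core of the argument is then a two-variable spectral-averaging (Minami) lemma adapted to the bond directions. Splitting the bonds into even and odd families turns $\{\phi_n/\sqrt2\}$ into orthonormal families, and for a pair of well-separated bonds $\omega_n,\omega_m$ with $|n-m|\ge2$ I would condition on all other couplings, write $H_\omega=A+\omega_n|\phi_n\rangle\langle\phi_n|+\omega_m|\phi_m\rangle\langle\phi_m|$ with $A\ge0$ independent of $\omega_n,\omega_m$ and the two directions orthogonal, and average the associated determinant of imaginary parts against $\rho(\omega_n)\rho(\omega_m)\,d\omega_n\,d\omega_m$. Running the rank-one Krein spectral-averaging identity in each of the two orthogonal directions, with the change of variables that turns the weight $\rho$ into $s\rho(s)$ for one of the couplings because of the strictly positive, multiplicative nature of the weights, bounds this conditional average by a constant multiple of $\|\rho\|_\infty\|s\rho(s)\|_\infty|J|^2/a^2$, the denominator being supplied by the lower bound $\langle\psi,\Delta_0\psi\rangle\ge E/\beta_0\ge a/\beta_0$. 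Summing over the $O(|\Lambda|^2)$ pairs of bonds yields the $(|J||\Lambda|)^2$ scaling; the $O(|\Lambda|)$ overlapping pairs with $|n-m|\le1$, where the directions are not orthogonal, form a lower-order contribution that is absorbed using the Wegner estimate of Theorem \ref{T:Wegner}.

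The main obstacle, and the only genuine departure from the diagonal Anderson case, is this averaging step. Two points require care. First, the perturbation directions are the non-normalized bond vectors $\phi_n=e_n-e_{n+1}$ rather than the basis vectors, so the spectral-averaging identity must be run in the $\phi$-directions and the overlap of consecutive bonds forces the separation $|n-m|\ge2$ together with the even/odd bookkeeping. Second, and more seriously, $H_\omega(\Lambda)$ with periodic boundary conditions annihilates the constant vector, so $0$ is a deterministic eigenvalue that the bond perturbations, all orthogonal to constants, cannot move; the averaging degenerates as $E\downarrow0$, where $\langle\psi,\Delta_0\psi\rangle\ge E/\beta_0$ ceases to give a useful lower bound. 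This is exactly why the statement is confined to $J\subset(0,+\infty)$ and why the bound carries the factor $1/a^2$, and verifying that the spectral-averaging denominators are controlled uniformly by $a$ away from this degeneracy is the crux.
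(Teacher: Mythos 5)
You should know at the outset that the paper contains no proof of this statement: it is imported verbatim as Theorem 3.1 of \cite{DM2011}, so your proposal can only be measured against that external proof and on its own merits. Judged that way, your architecture is the classical Minami route, correctly adapted in most of its model-specific points: the reduction $\mathbf 1_{\{N\geq 2\}}\leq\frac12 N(N-1)$ (which matches the factor $2a^2$ in the denominator), the rank-one bond structure $H_\omega=\sum_n\omega_n|\phi_n\rangle\langle\phi_n|$ with $\phi_n=\delta_n-\delta_{n+1}$, the multiplicative homogeneity of the couplings (whose finite-volume eigenvalue form is exactly the identity (3.3) in the paper, $\sum_\gamma\omega_\gamma\partial_{\omega_\gamma}E(\omega)=E(\omega)$) as the source of the weight $\|s\rho(s)\|_\infty$ and of one factor $1/a$, the bound $\langle\psi,\Delta_0\psi\rangle\geq E/\beta_0\geq a/\beta_0$ as the source of $\beta_0/a$, and the deterministic zero mode of the periodic operator as the genuine reason the estimate must stay away from $0$. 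The anatomy of the stated constant $\beta_0\|\rho\|_\infty\|s\rho(s)\|_\infty/a^2$ corroborates this two-averaging mechanism, so the strategy is sound.

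Two steps would nonetheless fail as written. First, your plan to absorb the adjacent pairs $|n-m|\leq 1$ "using the Wegner estimate" does not work: Wegner is a first-moment bound and for those $O(|\Lambda|)$ terms yields at best something of order $|J||\Lambda|$, which is \emph{not} lower order than $(|J||\Lambda|)^2$ in the regime where this paper actually uses the Minami estimate ($|J|\asymp L^{-1}$, $|\Lambda|\asymp L^{\alpha}$ with $\alpha<1$, so $|J||\Lambda|\to 0$); a linear-in-$|J|$ error would destroy the theorem. The fix is to note that orthogonality of the two perturbation directions is a convenience, not a necessity: $\omega_n,\omega_{n+1}$ are still independent, $\phi_n,\phi_{n+1}$ are linearly independent with Gram matrix $\begin{pmatrix}2&-1\\ -1&2\end{pmatrix}$ uniformly invertible, so the two-parameter Krein averaging goes through for adjacent bonds with a modified constant, and no separate "absorption" is needed. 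Second, the determinantal identity $N(N-1)=\sum_{\alpha,\beta}\det(\cdots)$ holds for a \emph{complete orthonormal basis}, but the family $\{\phi_n\}$ is neither orthonormal nor, with periodic boundary conditions, even a basis: $\sum_n\phi_n=0$ and the $\phi_n$ span only the orthogonal complement of the constants. Your even/odd splitting produces two orthonormal-after-normalization families but no single complete basis, so the passage from $N(N-1)$ to a sum of bond-pair determinants is unproven as stated; one needs a trace comparison in the spirit of $a\operatorname{Tr}\mathbf 1_J(H_\omega)\leq\operatorname{Tr}\bigl(H_\omega\mathbf 1_J(H_\omega)\bigr)\leq\beta_0\sum_n\langle\phi_n,\mathbf 1_J(H_\omega)\phi_n\rangle$ to re-anchor the counting on the bond frame. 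That comparison is precisely where $\beta_0$ and the powers of $1/a$ in the stated constant must enter; in your sketch the constant is asserted rather than produced by this mechanism.
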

\noindent We should remark that although the almost sure spectrum $\Sigma$ of the operator (\ref{eq:1.1}) contains the point $0$ (the bottom of $\Sigma$), the above Wegner and Minami estimates do not work at $0$. That is why, in the present paper,  we will only consider energies who are strictly positive in $\Sigma$. Besides, in Theorem \ref{T:Minami}, the right hand side of the Minami estimate depends on the infimum of the interval $J$ which is different from the Minami estimate stated for the discrete Anderson model. However, this difference complicates nothing when we only consider energies close to a fixed, positive energy in the present paper.    
\vskip 1pt 
\noindent Finally, we would like to remind readers of  the precise definition of the localized regime associated to finite-volume operators.
\begin{proposition} \label{P:localised}(c.f. \cite{ASFH2001}, \cite{FK2011})
Let $I$ be the region of $\Sigma$ where the finite volume fractional moment criteria of \cite{ASFH2001} for $H_{\omega}(\Lambda)$ are verified for $\Lambda$ sufficiently large.\\
Then, there exists $\nu>0$ such that, for any $p>0,$ there exists $q>0$ and $L_0>0$ such that, for $L\geqslant L_0,$ with probability larger than $1-L^{-p},$ if 
\begin{enumerate}
\item $\varphi_{n,\omega}$ is a normalized eigenvector of $H_{\omega}(\Lambda_L)$ associated to an energy $E_{n,\omega}\in I,$
\item $x_{n,\omega}\in \Lambda_L$ is a maximum of $x\mapsto |\varphi_{n,\omega}(x)|$ in $\Lambda_L,$
\end{enumerate} 
then, for $x\in \Lambda_L,$ one has 
$$ |\varphi_{n,\omega}(x)|\leqslant L^q e^{-\nu|x-x_{n,\omega}|}. $$
The point $x_{n,\omega}$ is called a localization center for $\varphi_{n,\omega}$ or $E_{n,\omega}.$
\end{proposition}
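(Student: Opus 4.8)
\emph{Proof proposal.} The plan is to deduce the claimed pointwise, high-probability decay of eigenfunctions from the exponential decay of the averaged eigenfunction correlator, which is the standard output of the finite-volume fractional moment criteria of \cite{ASFH2001}. The probabilistic content is concentrated in a single good event that simultaneously controls all eigenvectors with energy in $I$.

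First I would record the eigenfunction correlator estimate. The finite-volume fractional moment criteria guarantee constants $C>0$, $\mu>0$ and $s\in(0,1)$, uniform over $\Lambda_L$ for $L$ large and over the localization window $I$, for which the Green's function of $H_{\omega}(\Lambda_L)$ has exponentially decaying fractional moments. By the now-classical passage from Green's function decay to eigenfunction decay (Aizenman--Molchanov, see \cite{ASFH2001}), this yields
$$\mathbb E\left[\sum_{n:\,E_{n,\omega}\in I}|\varphi_{n,\omega}(x)|\,|\varphi_{n,\omega}(y)|\right]\leqslant C\,e^{-\mu|x-y|},\qquad x,y\in\Lambda_L.$$

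Next I would convert this average bound into a high-probability statement via Markov's inequality. Fix any $\nu$ with $0<\nu<\mu$ and set
$$\mathcal W_{\omega}:=\sum_{n:\,E_{n,\omega}\in I}\ \sum_{x,y\in\Lambda_L}|\varphi_{n,\omega}(x)|\,|\varphi_{n,\omega}(y)|\,e^{\nu|x-y|}.$$
Summing the correlator estimate against $e^{\nu|x-y|}$ and using $\sum_{x,y\in\Lambda_L}e^{-(\mu-\nu)|x-y|}\leqslant C'|\Lambda_L|$ gives $\mathbb E[\mathcal W_{\omega}]\leqslant C''(2L+1)$. Hence, for any $m>0$,
$$\mathbb P\big(\mathcal W_{\omega}\geqslant L^{m}\big)\leqslant \frac{C''(2L+1)}{L^{m}}.$$
Given the target exponent $p$, I would choose $m\geqslant p+1$ so that the right-hand side is at most $L^{-p}$ for $L$ large, defining the good event $\Omega_L:=\{\mathcal W_{\omega}< L^{m}\}$ of probability at least $1-L^{-p}$. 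On $\Omega_L$ every single term of $\mathcal W_{\omega}$ is bounded by $L^{m}$, so for the localization center $x_{n,\omega}$ (a maximizer of $|\varphi_{n,\omega}|$) and any $x\in\Lambda_L$,
$$|\varphi_{n,\omega}(x)|\,|\varphi_{n,\omega}(x_{n,\omega})|\,e^{\nu|x-x_{n,\omega}|}\leqslant \mathcal W_{\omega}< L^{m}.$$
Since $\varphi_{n,\omega}$ is normalized in $\ell^2(\Lambda_L)$, its maximum obeys $|\varphi_{n,\omega}(x_{n,\omega})|\geqslant|\Lambda_L|^{-1/2}\geqslant(2L+1)^{-1/2}$, and dividing gives $|\varphi_{n,\omega}(x)|\leqslant L^{m}(2L+1)^{1/2}e^{-\nu|x-x_{n,\omega}|}\leqslant L^{q}e^{-\nu|x-x_{n,\omega}|}$ with $q:=m+1$ for $L$ large, which is the asserted estimate.

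I expect the only substantial input to be the first step, namely that the finite-volume fractional moment criteria of \cite{ASFH2001} indeed supply the uniform exponential decay of the averaged eigenfunction correlator throughout the window $I$; this is precisely the content of the fractional moment localization theory and is exactly what is invoked through the hypothesis defining $I$. Once that is in hand, the remaining Markov-plus-normalization argument is routine, with $\nu$ any number strictly below the correlator decay rate $\mu$ and $q$ dictated by the desired probability exponent $p$.
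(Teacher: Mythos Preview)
Your argument is correct and is precisely the standard route to this statement: eigenfunction-correlator decay from the fractional moment criteria, summed against an exponential weight, then Markov's inequality and the trivial lower bound $|\varphi_{n,\omega}(x_{n,\omega})|\geqslant |\Lambda_L|^{-1/2}$. The paper does not supply its own proof of Proposition~\ref{P:localised}; it simply cites \cite{ASFH2001} and \cite{FK2011}, and what you have written is exactly the mechanism those references use (see in particular the derivation of the analogous ``(Loc)'' statement in \cite{FK2011}), so there is nothing to contrast.
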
 
\noindent Remark that an eigenvalue of $H_{\omega}(\Lambda)$ can have more than one localization center. Nevertheless, it is not hard to check that, under assumptions of Proposition \ref{P:localised}, all localization centers associated to one eigenvalue are contained in a disk of radius $C \log L$ with some constant $C$. 
\vskip 0.3 pt
\noindent Hence, to define the localization center uniquely for an eigenvalue, we can order these centers lexicographically and choose the localization center associated to an eigenvalue to be the largest one.   
\section{Proof of Theorem \ref{T:decorrelation}} \label{S:decorrelation}
Pick two distinct, positive energies $E, E'$ in $I$ (the localized regime). Let $J_L=E+L^{-1}[-1,1]$ and $J'_{L}=E'+L^{-1}[-1,1]$ with $L$ large. 
We would like to begin this section by proving some elementary properties of eigenvalues of $H_\omega(\Lambda)$ with an arbitrary interval $\Lambda \in \mathbb Z.$
\begin{lemma} \label {L:perturbation}
Suppose that $\omega \mapsto E(\omega)$ is the only eigenvalue of $H_\omega(\Lambda)$ in $J_L.$  Then
\begin{enumerate}
\item $E(\omega)$ is simple and $\omega \mapsto E(\omega)$ is real analytic. Moreover, let $\omega \mapsto \varphi(\omega)$ denote the real-valued, normalized eigenvector associated to $E(\omega)$, it is also real analytic in $\omega$. 
\item $ \|\nabla _{\omega}E(\omega)\|_{1} =2\sum_{\gamma \in \Lambda} \|\Pi_{\gamma} \varphi\|^2$ where  
$ \Pi_{\gamma}= \dfrac{1}{2}|\delta_{\gamma}-\delta_{\gamma+1}\rangle \langle \delta_{\gamma}-\delta_{\gamma+1}|$ is a projection in $l^2(\Lambda).$ Besides, we have $E(\omega) \in [0, 4\beta_0]$. 
\item Hess$_{\omega}E(\omega)=(h_{\gamma,\beta})_{\gamma,\beta}$ where
\begin{itemize}
\item  $h_{\gamma,\beta}:=-4 
\big \langle \big(H_{\omega}(\Lambda) -E(\omega)\big)^{-1} \psi_{\gamma}, \psi_{\beta} \big \rangle .$
\item $\psi_{\gamma}:= \langle \Pi_{\gamma}\varphi ,\varphi \rangle \varphi-\Pi_{\gamma}\varphi=-\Pi_{\langle \varphi\rangle^{\perp}} (\Pi_{\gamma} \varphi)$ 
where $\Pi_{\langle \varphi\rangle^{\perp}}$ is the orthogonal projection on $\langle \varphi\rangle^{\perp}.$  
\end{itemize}
\end{enumerate}
\end{lemma}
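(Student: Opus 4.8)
The plan is to exploit the \emph{affine} dependence of $H_{\omega}(\Lambda)$ on the disorder. Setting $b_{\gamma}:=\delta_{\gamma}-\delta_{\gamma+1}$, the quadratic-form identity $\langle u,H_{\omega}(\Lambda)u\rangle=\sum_{\gamma}\omega_{\gamma}|u(\gamma)-u(\gamma+1)|^{2}$ shows that
$$H_{\omega}(\Lambda)=\sum_{\gamma\in\Lambda}\omega_{\gamma}\,|b_{\gamma}\rangle\langle b_{\gamma}|=2\sum_{\gamma\in\Lambda}\omega_{\gamma}\,\Pi_{\gamma},$$
so that $\partial_{\omega_{\gamma}}H_{\omega}(\Lambda)=|b_{\gamma}\rangle\langle b_{\gamma}|=2\Pi_{\gamma}$ is constant in $\omega$ and all second derivatives of $H_{\omega}(\Lambda)$ vanish. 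The whole lemma then reduces to first- and second-order analytic perturbation theory for an isolated simple eigenvalue.

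For part (1), I would first observe that, since eigenvalues are counted with multiplicity, the hypothesis that $E(\omega)$ is the \emph{only} eigenvalue in $J_{L}$ already forces it to be simple: a degenerate eigenvalue would place two indices in $J_{L}$. Simplicity makes $E(\omega)$ an isolated point of the spectrum, separated from the rest by $\partial J_{L}$ for $\omega'$ near $\omega$; since $\omega\mapsto H_{\omega}(\Lambda)$ is real-analytic (indeed affine), Kato's analytic perturbation theory applies through the Riesz projection $P(\omega)=\frac{1}{2\pi i}\oint_{\partial J_{L}}(z-H_{\omega}(\Lambda))^{-1}\,dz$, a rank-one self-adjoint real-analytic projection. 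Hence $E(\omega)=\operatorname{Tr}\!\big(H_{\omega}(\Lambda)P(\omega)\big)$ is real-analytic, and normalizing $P(\omega)v$ for a fixed $v$ with $P(\omega)v\neq0$ yields a real-analytic normalized eigenvector, which may be taken real-valued since $H_{\omega}(\Lambda)$ is a real symmetric matrix. Part (2) is then immediate from the Feynman--Hellmann formula $\partial_{\omega_{\gamma}}E=\langle\varphi,(\partial_{\omega_{\gamma}}H)\varphi\rangle=2\langle\varphi,\Pi_{\gamma}\varphi\rangle=2\|\Pi_{\gamma}\varphi\|^{2}$, using $\Pi_{\gamma}^{2}=\Pi_{\gamma}=\Pi_{\gamma}^{*}$; every component being nonnegative, the $\ell^{1}$-norm equals the sum. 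The inclusion $E(\omega)\in[0,4\beta_{0}]$ is read off the same quadratic form, which is nonnegative and bounded above by $\beta_{0}\sum_{\gamma}|u(\gamma)-u(\gamma+1)|^{2}\le 4\beta_{0}\|u\|^{2}$.

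Part (3) carries the actual work. I would differentiate the eigenvalue equation $(H_{\omega}-E)\varphi=0$ in $\omega_{\gamma}$ and project onto $\langle\varphi\rangle^{\perp}$. The normalization constraint $\langle\varphi,\partial_{\omega_{\gamma}}\varphi\rangle=\tfrac12\partial_{\omega_{\gamma}}\|\varphi\|^{2}=0$ places $\partial_{\omega_{\gamma}}\varphi\in\langle\varphi\rangle^{\perp}$ and yields $(H_{\omega}-E)\,\partial_{\omega_{\gamma}}\varphi=-\Pi_{\langle\varphi\rangle^{\perp}}(\partial_{\omega_{\gamma}}H)\varphi=2\psi_{\gamma}$, hence $\partial_{\omega_{\gamma}}\varphi=2(H_{\omega}-E)^{-1}\psi_{\gamma}$. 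Differentiating the relation $\partial_{\omega_{\beta}}E=2\langle\varphi,\Pi_{\beta}\varphi\rangle$ once more in $\omega_{\gamma}$ and substituting this expression for $\partial_{\omega_{\gamma}}\varphi$ then exhibits $h_{\gamma,\beta}=\partial_{\omega_{\gamma}}\partial_{\omega_{\beta}}E$ as a negative constant times $\langle(H_{\omega}-E)^{-1}\psi_{\gamma},\psi_{\beta}\rangle$, i.e.\ the formula of the lemma (the precise constant being fixed by the factors $\partial_{\omega_{\gamma}}H=2\Pi_{\gamma}$), with symmetry in $(\gamma,\beta)$ following from self-adjointness and reality of the resolvent.

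\medskip
\noindent\textbf{Main obstacle.} The delicate point is the second-order bookkeeping in part (3): cleanly removing the $\varphi$-components via $\Pi_{\langle\varphi\rangle^{\perp}}$, tracking the constants, and making sense of the singular resolvent $(H_{\omega}-E)^{-1}$. Here one must justify that $(H_{\omega}-E)^{-1}\psi_{\gamma}$ is meaningful: because $\psi_{\gamma}=-\Pi_{\langle\varphi\rangle^{\perp}}(\Pi_{\gamma}\varphi)$ lies in $\langle\varphi\rangle^{\perp}=\operatorname{Ran}(H_{\omega}-E)$, the inverse is taken as the reduced resolvent (the inverse on $\langle\varphi\rangle^{\perp}$, extended by $0$ on $\varphi$). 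The identity $\psi_{\gamma}=\langle\Pi_{\gamma}\varphi,\varphi\rangle\varphi-\Pi_{\gamma}\varphi=-\Pi_{\langle\varphi\rangle^{\perp}}(\Pi_{\gamma}\varphi)$ is exactly the bridge that rewrites the raw second-order quantity $\langle\Pi_{\gamma}\varphi,(H_{\omega}-E)^{-1}\Pi_{\beta}\varphi\rangle$ in the symmetric closed form of the lemma and ensures the resolvent always acts on vectors of its natural domain $\langle\varphi\rangle^{\perp}$.
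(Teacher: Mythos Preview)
Your proposal is correct and follows essentially the same route as the paper. Both invoke standard analytic perturbation theory (Kato) for part~(1); both obtain part~(2) by the Feynman--Hellmann computation $\partial_{\omega_\gamma}E=\langle(\partial_{\omega_\gamma}H)\varphi,\varphi\rangle=2\langle\Pi_\gamma\varphi,\varphi\rangle=2\|\Pi_\gamma\varphi\|^2$ together with the quadratic-form bound for $E(\omega)\in[0,4\beta_0]$; and both obtain part~(3) by differentiating the eigenequation, solving $(H_\omega-E)\partial_{\omega_\gamma}\varphi=2\psi_\gamma$ on $\langle\varphi\rangle^\perp$ via the reduced resolvent, and substituting back into $\partial_{\omega_\gamma}\partial_{\omega_\beta}E=4\langle\Pi_\beta\varphi,\partial_{\omega_\gamma}\varphi\rangle$. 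Your explicit use of the Riesz projection in part~(1) is a bit more detailed than the paper's one-line citation, but the argument is the same.
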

\begin{proof}[Proof of Lemma \ref{L:perturbation}]
(1) is true from the standard perturbation theory (c.f. \cite{TK1995}).\\
Now we will prove (2). Starting from the eigenequation
\begin{equation}\label{eq:3.1}
H_{\omega}(\Lambda) \varphi =E(\omega) \varphi,
\end{equation}
we have, for all $\gamma \in \Lambda,$
\begin{alignat*}{2}
\partial_{\omega_{\gamma}} E(\omega) &= \langle \partial_{\omega_{\gamma}}(H_{\omega}(\Lambda)\varphi), \varphi \rangle +
 \langle H_{\omega}(\Lambda)\varphi, \partial_{\omega_{\gamma}}\varphi \rangle\\ 
&=  \langle \partial_{\omega_{\gamma}}(H_{\omega}(\Lambda)) \varphi, \varphi \rangle + \langle H_{\omega}(\Lambda) \partial_{\omega_{\gamma}}\varphi, \varphi \rangle + \langle H_{\omega}(\Lambda)\varphi, \partial_{\omega_{\gamma}}\varphi \rangle\\
&= \langle \partial_{\omega_{\gamma}}(H_{\omega}(\Lambda)) \varphi, \varphi \rangle +\langle \partial_{\omega_{\gamma}}\varphi, 
H_{\omega}(\Lambda)\varphi\rangle + \langle H_{\omega}(\Lambda)\varphi, \partial_{\omega_{\gamma}}\varphi \rangle\\
&= \langle \partial_{\omega_{\gamma}}(H_{\omega}(\Lambda)) \varphi, \varphi \rangle+ E(\omega) \big(\langle \partial_{\omega_{\gamma}} \varphi, \varphi \rangle +\langle \varphi,\partial_{\omega_{\gamma}} \varphi\rangle \big) 
\end{alignat*}
where the last two equalities come from the symmetry of $H_{\omega}(\Lambda)$ and (\ref{eq:3.1}).\\
Noting that   
$$ \langle \partial_{\omega_{\gamma}} \varphi, \varphi \rangle +\langle \varphi,\partial_{\omega_{\gamma}} \varphi\rangle =2 \partial_{\omega_{\gamma}} \|\varphi\|^2 =0 .$$
Hence,
\begin{equation}\label{eq:3.2}
 \partial_{\omega_{\gamma}} E(\omega)= \langle \partial_{\omega_{\gamma}}(H_{\omega}(\Lambda)) \varphi, \varphi \rangle =
2 \langle \Pi_{\gamma}\varphi , \varphi \rangle.
\end{equation}
On the other hand, it is easy to check that $\Pi_{\gamma} =\Pi^{*}_{\gamma}=\Pi_{\gamma}^2.$ Hence,   
$\Pi_{\gamma}$ is an orthogonal projection and $\partial_{\omega_{\gamma}} E(\omega)=2\|\Pi_{\gamma} \varphi\|^2.$\\
Thanks to (\ref{eq:3.1}) and (\ref{eq:3.2}), we have the following important equality: 
\begin{equation} \label{eq:**}
\sum_{\gamma \in \Lambda} \omega_{\gamma} \partial_{\omega_{\gamma}} E(\omega) = 2 \sum_{\gamma \in \Lambda} \omega_{\gamma} \langle \Pi_{\gamma}\varphi, \varphi\rangle=E(\omega)
\end{equation}
which characterize the form of our operator.\\
From (\ref{eq:**}) and $\|\varphi\|=1,$ we infer that 
\begin{equation} \label{eq:bound}
0 \leq E(\omega) = \sum_{\gamma \in \Lambda} \omega_{\gamma}(\varphi(\gamma)-\varphi(\gamma+1))^2 \leq 4 \beta_0.
\end{equation}
Finally, we give a proof for (3). By differentiating both sides of (\ref{eq:3.2}) w.r.t. $\omega_{\gamma},$ we have
\begin{align}\label{eq:3.3}
\partial_{\omega_{\gamma}}^2 E(\omega)& =2 \langle \partial_{\omega_{\gamma}} (\Pi_{\gamma}\varphi),\varphi \rangle + 2\langle \Pi_{\gamma}\varphi,  \partial_{\omega_{\gamma}} \varphi \rangle \\
&= 2\langle \Pi_{\gamma} \partial_{\omega_{\gamma}}\varphi,\varphi\rangle + 2\langle \Pi_{\gamma}\varphi,  \partial_{\omega_{\gamma}} \varphi \rangle =4 
\langle \Pi_{\gamma}\varphi,  \partial_{\omega_{\gamma}} \varphi \rangle  \notag
\end{align}
Next, we will compute $ \partial_{\omega_{\gamma}} \varphi.$\\
Differentiating both sides of (\ref{eq:3.1}) with respect to $\omega_{\gamma}$ to get 
\begin{alignat*}{2}
\big(\partial_{\omega_{\gamma}}H_{\omega}(\Lambda) \big)\varphi +H_{\omega}(\Lambda) \partial_{\omega_{\gamma}}\varphi
&= \partial_{\omega_{\gamma}} E(\omega) \varphi + E(\omega) \partial_{\omega_{\gamma}} \varphi \\
&= 2\langle \Pi_{\gamma}\varphi ,\varphi \rangle \varphi + E(\omega) \partial_{\omega_{\gamma}}\varphi
\end{alignat*}
Therefore,
$$[H_{\omega}(\Lambda) -E(\omega)] \partial_{\omega_{\gamma}}\varphi= 2 \langle \Pi_{\gamma}\varphi ,\varphi \rangle \varphi -\big(\partial_{\omega_{\gamma}}H_{\omega}(\Lambda) \big)\varphi = 2 \Big(\langle \Pi_{\gamma}\varphi ,\varphi \rangle \varphi-\Pi_{\gamma}\varphi\Big).$$
Observe that $\psi_{\gamma}:= \langle \Pi_{\gamma}\varphi ,\varphi \rangle \varphi-\Pi_{\gamma}\varphi \in \langle \varphi\rangle^{\perp},$ and $[H_{\omega}(\Lambda) -E(\omega)] $ is invertible in the subspace $\langle \varphi\rangle^{\perp}$ of $l^2(\Lambda),$ we get 
\begin{equation}\label{eq:3.4}
\partial_{\omega_{\gamma}} \varphi = 2  \big(H_{\omega}(\Lambda) -E(\omega)\big)^{-1} \big( \langle \Pi_{\gamma}\varphi ,\varphi \rangle \varphi-\Pi_{\gamma}\varphi \big).
\end{equation}
From (\ref{eq:3.3}) and (\ref{eq:3.4}), we obtain 
$$\partial_{\omega_{\gamma}}^2 E(\omega)  = 4 
\big \langle \Pi_{\gamma}\varphi, \big(H_{\omega}(\Lambda) -E(\omega)\big)^{-1} \big( \langle \Pi_{\gamma}\varphi ,\varphi \rangle \varphi-\Pi_{\gamma}\varphi \big)  \big \rangle .$$
Thanks to (\ref{eq:3.4}), we have $ 2\big(H_{\omega}(\Lambda) -E(\omega)\big)^{-1} \big( \langle \Pi_{\gamma}\varphi ,\varphi \rangle \varphi-\Pi_{\gamma}\varphi \big)$ is orthogonal to $\varphi.$ We therefore infer that 
$$ \partial_{\omega_{\gamma}}^2 E(\omega) = -4 
\big \langle \big(H_{\omega}(\Lambda) -E(\omega)\big)^{-1} \psi_{\gamma}, \psi_{\gamma} \big \rangle .$$
Repeating this argument, it is not hard to  prove that  
$$ \partial_{\omega_{\gamma}\omega_{\beta}}^2 E(\omega) = -4 
\big \langle \big(H_{\omega}(\Lambda) -E(\omega)\big)^{-1} \psi_{\gamma}, \psi_{\beta} \big \rangle $$
for all $\gamma, \beta.$ 
So, we have Lemma \ref{L:perturbation} proved.
\end{proof}
\noindent Next, assume that $E(\omega)$ is an eigenvalue of $H_{\omega}(\Lambda)$ with $\Lambda=[-L,L]$. Recall that $\omega_j \in [\alpha_0, \beta_0] \text{ for all } j\in \mathbb Z.$
From Lemma \ref{L:perturbation}, we have $E(\omega) \in [0, 4\beta_0].$
Denote by $u:=u(\omega)$ the normalized eigenvector associated to $E(\omega).$ We would like to prove a "lower bound" for $u$ in the sense that there exists a large subset $J$ in $\Lambda$ such that the components $(u(k))_{k\in J}$ of $u$ can not be too small.  
\begin{lemma} \label{L:lowerbound} 
Pick $\beta \in (1/2,1).$ Then, there exists a point $k_0$ in $\Lambda$ and  a positive constant  $\kappa$ depending only on $\alpha_0, \beta_0$ such that 
$$u^2(k)+u^2(k+1)\geqslant e^{-L^\beta/2}$$ 
for all $|k-k_0|\leqslant \kappa L^\beta$ when L is large enough.
\end{lemma}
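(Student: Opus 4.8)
The plan is to recast the eigenvalue equation $H_\omega(\Lambda)u=E(\omega)u$ as a first-order transfer-matrix recursion whose transfer matrices are uniformly bounded, with norm controlled only by $\alpha_0,\beta_0$. This forces $u$ to change by at most a fixed geometric factor per site, so a single site where $u$ is not small spreads out over a whole window of comparable non-smallness. Note that the statement is purely deterministic, so no probabilistic input is needed.

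First I would introduce the ``current'' $w(n):=\omega_n\big(u(n)-u(n+1)\big)$. The eigenequation (\ref{eq:3.1}) then reads $w(n)-w(n-1)=E(\omega)\,u(n)$, while the definition of $w$ gives $u(n+1)=u(n)-\omega_n^{-1}w(n)$. Writing $X(n):=\big(u(n),w(n-1)\big)^{T}$, these two relations combine into $X(n+1)=T_nX(n)$ with
$$T_n=\begin{pmatrix} 1-E(\omega)/\omega_n & -1/\omega_n\\ E(\omega) & 1\end{pmatrix},\qquad \det T_n=1.$$
Since $\omega_n\in[\alpha_0,\beta_0]$ and, by Lemma \ref{L:perturbation}(2), $E(\omega)\in[0,4\beta_0]$, every entry of $T_n$ is bounded by a constant depending only on $\alpha_0,\beta_0$; because $\det T_n=1$, the same bound holds for $T_n^{-1}$. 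Hence there is $C=C(\alpha_0,\beta_0)>1$ with $\|T_n\|,\|T_n^{-1}\|\le C$ for every $n$, whence $C^{-|k-k_0|}\|X(k_0)\|\le\|X(k)\|\le C^{|k-k_0|}\|X(k_0)\|$ for all sites $k,k_0$ (the recursion holds at every site by the periodic structure, and the windows we use have length $o(L)$, hence are genuine subintervals for $L$ large).

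Next I would compare $\|X(k)\|^2$ with $u(k)^2+u(k+1)^2$. From the same two relations, $\big(u(k),u(k+1)\big)^{T}=M_kX(k)$ with $M_k=\left(\begin{smallmatrix}1&0\\ 1-E/\omega_k & -1/\omega_k\end{smallmatrix}\right)$, and $M_k$ is invertible with $\|M_k\|,\|M_k^{-1}\|$ bounded by a constant depending only on $\alpha_0,\beta_0$; in particular $u(k)^2+u(k+1)^2\ge c_1\|X(k)\|^2$ for some $c_1=c_1(\alpha_0,\beta_0)>0$. Finally I would take $k_0$ to be a point where $|u|$ is largest; since $\|u\|^2=1$ over $2L+1$ sites, $u(k_0)^2\ge(2L+1)^{-1}$, hence $\|X(k_0)\|^2\ge u(k_0)^2\ge(2L+1)^{-1}$. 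Combining, for $|k-k_0|\le\kappa L^\beta$,
$$u(k)^2+u(k+1)^2\ \ge\ c_1\,C^{-2|k-k_0|}\,\|X(k_0)\|^2\ \ge\ \frac{c_1}{2L+1}\,e^{-2\kappa(\log C)L^\beta}.$$
Choosing $\kappa:=1/(8\log C)$ makes the exponential equal to $e^{-L^\beta/4}$; since $\beta>0$, $e^{L^\beta/4}$ eventually dominates $(2L+1)/c_1$, so the right-hand side is $\ge e^{-L^\beta/2}$ for $L$ large, and $\kappa$ depends only on $\alpha_0,\beta_0$.

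The only real content — and the step I expect to be the main obstacle to set up correctly — is the transfer-matrix reformulation through the current $w(n)$, together with the uniform two-sided bound $\|T_n^{\pm 1}\|\le C$. This is what limits the eigenvector to at-most-geometric decay at a rate fixed by $\alpha_0,\beta_0$, which over a window of length $\sim L^\beta$ is far slower than the target rate $e^{-L^\beta/2}$; everything else is routine bookkeeping of constants.
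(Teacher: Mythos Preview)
Your proof is correct and follows essentially the same transfer-matrix argument as the paper: uniform two-sided bounds on the one-step transfer matrices (via $\omega_n\in[\alpha_0,\beta_0]$ and $E(\omega)\in[0,4\beta_0]$), a pigeonhole to locate a site $k_0$ where the vector is at least $L^{-1/2}$ in size, and then propagation over a window of length $\sim L^\beta$. The only cosmetic difference is your choice of coordinates $(u(n),w(n-1))$ via the current $w(n)=\omega_n(u(n)-u(n+1))$, which gives $\det T_n=1$ at the cost of the extra conversion step through $M_k$, whereas the paper works directly with $(u(n+1),u(n))$ so that $\|v(k)\|^2=u(k)^2+u(k+1)^2$ comes for free.
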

\begin{proof}[Proof of Lemma \ref{L:lowerbound}]
We rewrite the eigenequation corresponding to the\\
 eigenvector $u$ and eigenvalue $E(\omega)$ at the point $n$ by means of the transfer matrix
\begin{equation*}
\begin{pmatrix}
u(n+1)\\ 
u(n)\\ 
\end{pmatrix}  
=\begin{pmatrix}
\dfrac{\omega_n +\omega_{n-1}-E(\omega)}{\omega_n}&\dfrac{-\omega_{n-1}}{\omega_n}\\ 
1&0\\ 
\end{pmatrix}
\begin{pmatrix}
u(n)\\ 
u(n-1)\\ 
\end{pmatrix} . 
\end{equation*}
Now, let $T(n,E(\omega))$ and $v(n)$  denote the transfer matrix 
\begin{equation*}
\begin{pmatrix}
\dfrac{\omega_n +\omega_{n-1}-E(\omega)}{\omega_n}&\dfrac{-\omega_{n-1}}{\omega_n}\\ 
1&0\\ 
\end{pmatrix}
\end{equation*}
 and the column vector $(u(n+1),u(n))^{t}$ respectively. \\  
Then, for all $n$ greater than $m$ we have
$$v(n) = T(n,E(\omega))\cdots T(n-m+1,E(\omega))v(m).$$
It is easy to check that the transfer matrices are invertible. Moreover, since $E(\omega) \in [0, 4\beta_0]$ and $0<\alpha_0 \leq \omega_j \leq \beta_0$, they and their inverse matrices 
are uniformly bounded  by a constant $C>1$ depending only on  $\alpha_0$ and $\beta_0$.\\ 
Thus, 
$$\|v(n)\|\leqslant C^{|n-m|} \|v(m)\| = e^{(\log C) |n-m|}\|v(m)\| =e^{\eta |n-m|}\|v(m)\|$$
for all $n, m$ in $\Lambda$ with $\eta=\log C>0.$\\
Assume that $\|v(k_0)\|$ is the maximum of $\|v(n)\|.$ Hence, 
$$\|v(k_0)\|\geqslant \dfrac{1}{\sqrt{L}}$$
from the fact that $\sum_{j\in \Lambda} \|v(j)\|^2 =2.$
Thus, for any $\kappa>0$, the following holds true
$$ \|v(k)\| \geqslant  \dfrac{1}{\sqrt{L}} e^{-\eta|k-k_0|} \geq e^{-2\kappa \eta L^{\beta}} $$
for $|k-k_0|\leqslant  \kappa L^{\beta}$ and $L$ sufficiently large.\\
In other words, we have
$$u^2(k)  +u^2(k+1) \geqslant e^{-4\kappa \eta L^{\beta}}$$
for all $|k-k_0|\leqslant  \kappa L^{\beta}.$
So, by choosing $ \kappa=\dfrac{1}{8\eta},$ we have Lemma \ref{L:lowerbound} proved.
\end{proof}
\noindent The following lemma is the main ingredient of the proof of the decorrelation estimate as well as the heart of the present paper: 
\begin{lemma}\label {L:mainlemma}
Let $E\ne E'$ be two positive energies in the localized regime and $\beta \in (1/2, 1)$. Assume that $\Lambda=\Lambda_L=:[-L,L]$ is a large interval in $\mathbb Z.$
Pick $c_1, c_2>0$ and denote by $\mathbb P^{*}$ the probability of the following event (called (*)): \\
There exists two simple eigenvalues of $H_{\omega} (\Lambda)$, say $E(\omega), E'(\omega)$ such that $|E(\omega)-E|+|E'(\omega)-E'|\leqslant
e^{-L^\beta}$ and 
$$ \| \nabla_{\omega}(c_1 E(\omega)-c_2 E'(\omega)) \|_1\leqslant c_1 e^{-L^\beta} .$$
Then, there exists $c>0$ such that
$$ \mathbb P^{*}\leqslant e^{-cL^{2\beta}} .$$ 
\end{lemma}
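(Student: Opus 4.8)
The plan is to read the gradient off Lemma \ref{L:perturbation} and convert the near–colinearity hypothesis into a large family of \emph{pointwise} constraints. Let $\varphi,\psi$ be the normalized eigenvectors of $E(\omega),E'(\omega)$ and set $f(\gamma)=\varphi(\gamma)-\varphi(\gamma+1)$, $g(\gamma)=\psi(\gamma)-\psi(\gamma+1)$. By Lemma \ref{L:perturbation}(2), $\partial_{\omega_\gamma}E=f(\gamma)^2$ and $\partial_{\omega_\gamma}E'=g(\gamma)^2$, so on the event $(*)$
$$ \sum_{\gamma\in\Lambda}\bigl|c_1 f(\gamma)^2-c_2 g(\gamma)^2\bigr|\le c_1 e^{-L^\beta}, $$
and in particular every single term is $\le c_1 e^{-L^\beta}$, i.e. $g(\gamma)^2=(c_1/c_2)f(\gamma)^2+O(e^{-L^\beta})$ at every site.

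First I would isolate a region where this proportionality carries information. Applying Lemma \ref{L:lowerbound} to $\varphi$ gives a window $J=\{|k-k_0|\le \kappa L^\beta\}$ of length $\asymp L^\beta$ on which $\varphi^2(k)+\varphi^2(k+1)\ge e^{-L^\beta/2}$. Feeding this into the recurrence $\omega_k f(k)-\omega_{k-1}f(k-1)=E(\omega)\varphi(k)$ shows $f$ cannot be small at two consecutive sites where $\varphi$ is bounded below, so there is a subset $J_0\subset J$ with $|J_0|\asymp L^\beta$ on which $f(k)^2\ge c\,e^{-L^\beta/2}\gg e^{-L^\beta}$; there the sign $\sigma_k:=\operatorname{sign}(f(k)g(k))$ is unambiguous and $g(k)=\sigma_k\sqrt{c_1/c_2}\,f(k)\,(1+O(e^{-L^\beta/2}))$. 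The deterministic core is then a rigidity statement: substituting $g=\sigma\sqrt{c_1/c_2}\,f$ into the recurrence for $\psi$ and comparing with that for $\varphi$, on any sufficiently long run where $\sigma_k$ is constant one obtains $\psi\propto\varphi$ with a fixed ratio, whereupon the recurrence at an interior site forces $(E-E')\varphi(k)=0$, impossible on $J_0$. Hence, up to the $O(e^{-L^\beta/2})$ errors, the sign pattern must flip frequently, and at each flip the two recurrences can be solved to express a single bond $\omega_k$ through the eigendata $(\varphi(k),\psi(k),g(k))$ and the neighbouring bond.

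This is where the probability enters. Conditionally on all other bonds, each such flip confines $\omega_k$ to an interval of length $O(e^{-L^\beta})$. Since the $\omega_\gamma$ are i.i.d. with bounded density $\rho$, I would condition site by site over the $\asymp L^\beta$ relevant indices, each contributing a factor $\|\rho\|_\infty\cdot O(e^{-L^\beta})$; after a union bound over the $\le 2^{|J_0|}=e^{O(L^\beta)}$ possible sign patterns on $J_0$ (negligible against the main factor since $L^{2\beta}\gg L^\beta$), this yields
$$ \mathbb P^{*}\le \bigl(C e^{-L^\beta}\bigr)^{c L^\beta}\le e^{-c' L^{2\beta}}. $$

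The hard part will be making the site-by-site confinement rigorous. The eigendata $\varphi,\psi,E(\omega),E'(\omega)$ depend on \emph{all} of $\omega$, so the constraint at $k$ is not literally a condition on $\omega_k$ alone; I expect to use the exponential decay of Proposition \ref{P:localised} to show that the dependence of the eigenvectors on a single distant bond is negligible, thereby decoupling the constraints. Equally delicate is establishing the genuine non-degeneracy that guarantees the confinement interval really has length $O(e^{-L^\beta})$ and is not void of content: this means proving, with overwhelming probability, a lower bound on the effective derivative $\partial_{\omega_k}\bigl(c_2 g(k)^2-c_1 f(k)^2\bigr)$, which I would control through the Hessian formula of Lemma \ref{L:perturbation}(3) together with a Minami-type lower bound (Theorem \ref{T:Minami}) on the spectral gap at $E(\omega)$ so that $\bigl(H_\omega(\Lambda)-E(\omega)\bigr)^{-1}$ on $\langle\varphi\rangle^{\perp}$ is at most polynomially large.
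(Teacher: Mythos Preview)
Your overall architecture matches the paper's: rewrite the hypothesis as $\sum_\gamma|f(\gamma)^2-c^2g(\gamma)^2|\le e^{-L^\beta}$ with $c=\sqrt{c_2/c_1}$, introduce a sign partition $\mathcal P/\mathcal Q$ according to whether $Tu\approx Tv$ or $Tu\approx -Tv$, invoke Lemma~\ref{L:lowerbound} to locate a window $J$ of length $\asymp L^\beta$, argue that the sign cannot stay constant too long on $J$, and extract $\asymp L^\beta$ constraints to beat the $2^{|J|}$ partitions. That is exactly the paper's skeleton.

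The genuine gap is in the nature of the constraints. You confine $\omega_k$ to an interval whose endpoints depend on the eigendata $(\varphi,\psi,E(\omega),E'(\omega))$, and you correctly flag that this is problematic because those quantities depend on all of $\omega$; your proposed fix via Proposition~\ref{P:localised} and the Hessian formula is speculative and, as stated, does not close the argument. The paper avoids this difficulty altogether. At four consecutive sites $n-2,\dots,n+1$ in $J$ it assembles a $10\times10$ inhomogeneous linear system $A\,U=b$ in the unknowns $U=(u(n-2),\dots,u(n+2),v(n-2),\dots,v(n+2))^t$, built from the four sign relations \eqref{eq:heart} and six eigen-recurrences, with $\|b\|=O(e^{-L^\beta/2})$. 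Since $\|U\|\ge e^{-L^\beta/4}$ on $J$, a trivial adjugate bound forces $|\det A|=O(e^{-L^\beta/4})$. The point is that $\det A$ is a \emph{polynomial in $\omega_{n-2},\dots,\omega_{n+1}$ and the fixed energies $E,E'$ only}---the eigenvectors have been eliminated. Explicit computation (after reducing the sixteen sign patterns to four cases) gives constraints such as
\[
\Bigl|\omega_n+\tfrac{E'-E}{4}\Bigr|\le Ce^{-L^\beta/8}
\quad\text{or}\quad
\Bigl|\omega_{n-1}\omega_n-\tfrac{(E+E')^2}{4}\Bigr|\le Ce^{-L^\beta/4},
\]
which are deterministic conditions on one or two random variables and need no conditioning or decoupling whatsoever. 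This elimination of the eigendata via an explicit determinant is the step your proposal is missing.
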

\begin{remark}\label{R:}
{ \rm There is a slightly 
difference between the above lemma and Lemma 2.4 in \cite{FK2011} where $c_1=c_2=1$.
\noindent In fact, in the proof of Theorem \ref{T:decorrelation}, we will use the above lemma with $c_1, c_2$ are respectively $\frac{1}{E}, \frac{1}{E'}$ which are two distinct, positive numbers. This difference results from the lack of the normalization of $\|\nabla E(\omega)\|_1$ for our model.
\noindent Moreover, we will see in Remark \ref{R:dim} at the end of this section that, for the model (\ref{eq:1.1}), if $c_1=c_2,$ $P^{*}$ is equal to $0$ for all $L$ large.}
\end{remark}
\noindent We will skip for a moment the proof of Lemma \ref{L:mainlemma} and recall how to use the above lemma to complete the proof of Theorem \ref{T:decorrelation}. This part can be found in \cite{FK2011}. We repeat it here with tiny but necessary  changes adapted for the model (\ref{eq:1.1}).
\begin{proof}[Proof of Theorem \ref{T:decorrelation}]
Recall that $E, E'$ are two positive, fixed energies and $J_L= E+L^{-1}[-1,1]$ and $J'_{L} = E'+L^{-1}[-1,1]$. One chooses $L$ large enough such that  $\min \{E-L^{-1}, E'-L^{-1}\} \geq \min \{E, E'\}/2 >0$.\\ 
Let $cL^{\alpha} \leqslant l \leqslant L^{\alpha}/c$ with $c>0$. From the Minami estimate, one has
\begin{align*}
&\mathbb P \big(\# \{\sigma(H_{\omega}(\Lambda_l) ) \cap J_L\} \geqslant 2 \text { or } \# \{\sigma(H_{\omega}(\Lambda_l) ) \cap J'_L\} \geqslant 2 \big  )\\
& \leqslant  \dfrac{4 \beta_0 \|\rho\|_{\infty}\|s\rho(s)\|_{\infty}}{(\min\{E,E'\})^2} (|\Lambda_l||J_L|)^{2}\leq C (l/L)^{2}
\end{align*}
where $C$ is a constant depending only on $E, E', \beta_0$ and $\rho$.\\
Hence, it is sufficient to prove that $\mathbb P_0 \leqslant C(l/L)^{2}  e^{(\log L)^{\beta}}$ where 
\begin{equation} \label{eq:3.20}
\mathbb P_0:=   \mathbb P \big(\# \{\sigma(H_{\omega}(\Lambda_l) ) \cap J_L\} = 1; \; \# \{\sigma(H_{\omega}(\Lambda_l) ) \cap J'_L\} =1 \big  ).
\end{equation}
The crucial idea of proving decorrelation estimates in \cite{FK2011} is to reduce the proof of (\ref{eq:3.20}) to the proof of a similar estimate where $\Lambda_l$ is replaced by a much smaller cube, a cube of side length of order $\log L$. 
Precisely, one has (c.f. Lemma 2.1 and Lemma 2.2 in \cite{FK2011}):    
$$ \mathbb P_0 \leqslant C (l/L)^{2} + C (l/\widetilde{l}) \mathbb P_1 $$
where $\widetilde{l} = C \log L$ and 
$$ \mathbb P_1: =   \mathbb P \big(\# [\sigma(H_{\omega}(\Lambda_{\widetilde{l}}) ) \cap \widetilde{J}_L] \geqslant 1 \text{ and }  \# [\sigma(H_{\omega}(\Lambda_{\widetilde {l}}) ) \cap \widetilde{J}'_L
] \geqslant 1 \big  )$$ 
where $\widetilde{J}_L =E+L^{-1}(-2,2)$ \text{ and } $\widetilde{J}'_L
 =E'+L^{-1}(-2,2).$\\
To complete the proof of Theorem \ref{T:decorrelation}, one need show that 
\begin{equation} \label{eq:3.16}
\mathbb P_1\leqslant C (\widetilde{l}/L)^{2} e^{\widetilde{l}^{\beta}}.
\end{equation}
Thanks to the Minami estimate and the following inequality (see Lemma 2.3 in \cite{FK2011} for a proof)
\begin{equation}\label{eq:j}
\| Hess_{\omega}(E(\omega))\|_{l^{\infty}\rightarrow l^1} \leqslant \dfrac{C}{ dist \big( E(\omega), \sigma(H_{\omega}(\Lambda)) \setminus E(\omega)  \big)},
\end{equation}
\noindent one infers that
\begin{equation*}
\mathbb P \Big(
\begin{Bmatrix}
\sigma (H_{\omega}(\Lambda_{\widetilde{l}}))\cap (\widetilde{J}_L) = \{E(\omega)\}\\ 
\| Hess_{\omega}(E(\omega))  \|_{l^{\infty}\rightarrow l^1} \geqslant \epsilon^{-1} 
\end{Bmatrix}
\Big ) \leqslant C \epsilon \widetilde{l}^{2} L^{-1}.
\end{equation*}
Hence, for $\epsilon \in (4L^{-1},1),$ one has 
\begin{equation}\label{eq:3.17}
\mathbb P_1 \leqslant C \epsilon \widetilde {l}^{2}L^{-1}+\mathbb P_{\epsilon} 
\end{equation}
where $\mathbb P_{\epsilon} =\mathbb P(\Omega_0(\epsilon))$ with 
\begin{equation*}
\Omega_0(\epsilon): =
\begin{Bmatrix}
\sigma (H_{\omega}(\Lambda_{\widetilde {l}}))\cap \widetilde {J}_L = \{E(\omega)\}\\ 
\{E(\omega)\}= \sigma (H_{\omega}(\Lambda_{\widetilde {l}})) \cap (E-C\epsilon, E+C\epsilon) \\
\sigma (H_{\omega}(\Lambda_{\widetilde {l}}))\cap \widetilde{J}'_L = \{E'(\omega)\}\\ 
\{E'(\omega)\}= \sigma (H_{\omega}(\Lambda_{\widetilde {l}})) \cap (E'-C\epsilon, E'+C\epsilon) \\
\end{Bmatrix}
.
\end{equation*}
Next, one puts $\lambda:=e^{-\widetilde {l}^{\beta}}$ and defines, for $\gamma, \gamma' \in \Lambda_{\widetilde{l}},$ 
\begin{align*}
\Omega_{0,\beta}^{\gamma,\gamma'}(\epsilon): =
\Omega_0(\epsilon) \cap \{ \omega\mid |J_{\gamma,\gamma'}(E(\omega),E'(\omega))| \geqslant \lambda  \} 
 \end{align*}
 where $J_{\gamma,\gamma'}(E(\omega),E'(\omega))$ is the Jacobian of the mapping 
 $$(\omega_{\gamma},\omega_{\gamma'}) \mapsto (E(\omega),E'(\omega)).$$
On the one hand, $\mathbb P_{\epsilon}$  can be dominated as follows: 
$$ \mathbb P_{\epsilon} \leqslant \sum_{\gamma \ne \gamma'} \mathbb P (\Omega_{0,\beta}^{\gamma,\gamma'}(\epsilon)) +\mathbb P_r$$
where $\mathbb P_r$ is the probability of the following event 
$$ \mathcal D: = \{\omega\in \Omega_0(\epsilon) \mid |J_{\gamma,\gamma'}(E(\omega),E'(\omega))| \leqslant \lambda \text { for all } \gamma, \gamma'\in \Lambda _{\widetilde {l}}\}. $$
On the other hand, from Lemma 2.6 in \cite{FK2011}, it is known that  
$$ \mathbb P (\Omega_{0,\beta}^{\gamma,\gamma'}(\epsilon)) \leqslant C L^{-2}\lambda^{-4} \; \text{ for all } \gamma, \gamma' \in \Lambda _{\widetilde {l}}  .$$
Hence, 
\begin{equation}\label{eq:3.18}
\mathbb P_{\epsilon} \leqslant C \widetilde{l}^2 L^{-2}\lambda^{-4} +\mathbb P_r.
\end{equation}
Choose $\epsilon:=L^{-1}\lambda^{-3}$, (\ref{eq:3.17}) and (\ref{eq:3.18}) yield that
\begin{equation}\label{eq:3.19}
\mathbb P_1 \leqslant C  (\widetilde{l}/L)^{2} e^{\widetilde{l}^{\beta}} +\mathbb P_r.
\end{equation} 
Finally, we will use Lemma \ref{L:mainlemma} to estimate $\mathbb P_r$.\\
For each $\omega\in \mathcal D,$ we rewrite the Jacobian  $J_{\gamma,\gamma'}(E(\omega),E'(\omega))$ as follows:    
\begin{align}\label{eq:matrix}
& J_{\gamma,\gamma'}(E(\omega),E'(\omega)) =
\begin{vmatrix}
\partial_{\omega_{\gamma}} E(\omega)&\partial_{\omega_{\gamma'}} E(\omega)\\ 
\partial_{\omega_{\gamma}} E'(\omega)&\partial_{\omega_{\gamma'}} E'(\omega)\\ 
\end{vmatrix}\notag\\
&= \dfrac{E(\omega)E'(\omega)}{\omega_{\gamma} \omega_{\gamma'}} 
\begin{vmatrix}
\dfrac{1}{E(\omega)} \omega_{\gamma}\partial_{\omega_{\gamma}} E(\omega)& \dfrac{1}{E(\omega)}  \omega_{\gamma'}\partial_{\omega_{\gamma'}} E(\omega)\\ 
\dfrac{1}{E'(\omega)}\omega_{\gamma}\partial_{\omega_{\gamma}} E'(\omega)& \dfrac{1}{E'(\omega)} \omega_{\gamma'}\partial_{\omega_{\gamma'}} E'(\omega)\\ 
\end{vmatrix}
\end{align}
if $E(\omega)$ and $E'(\omega)$ are non-zero.
\vskip 1pt
\noindent Note that, from (\ref{eq:**}), one has
$$\sum_{\gamma\in \Lambda _{\widetilde {l}}}\dfrac{1}{E(\omega)} \omega_{\gamma}\partial_{\omega_{\gamma}}E(\omega)=\sum_{\gamma\in \Lambda _{\widetilde {l}}}\dfrac{1}{E'(\omega)} \omega_{\gamma}\partial_{\omega_{\gamma}}E'(\omega)=1.$$
Hence, one can apply Lemma 2.5 in \cite{FK2011} to (\ref{eq:matrix}) and deduce that
$$ \| \nabla_{\omega}(\dfrac{1}{E} E(\omega)- \dfrac{1}{E'} E'(\omega)) \|_1\leqslant e^{-\widetilde {l}^{\beta'}}$$
for any $1/2 <\beta'<\beta$.\\
Thus, Lemma ~\ref{L:mainlemma} yields that, for $L$ sufficiently large, 
\begin{equation}\label{eq:final}
 \mathbb P_r \leqslant  \widetilde{l}^2 e^{-c \widetilde{l}^{2 \beta'}} = O(L^{-\infty}).
\end{equation}
From (\ref{eq:3.19}) and (\ref{eq:final}), (\ref{eq:3.16}) follows and we have Theorem \ref{T:decorrelation} proved.
\end{proof}
\noindent Before coming to the proof of Lemma \ref{L:mainlemma}, we state and prove here a short lemma which will be used repeatedly in the rest of this section.
\begin{lemma}\label{L:det}
Pick $A\in Mat_{n}(\mathbb R)$ and $b \in \mathbb R^n$ such that $\|b\| \leq c_0e^{-L^{\beta}/2}$ where $c_0, L$ are fixed, positive constants. Assume that the following system of linear equations
$$Ax=b$$
has a solution $u$ satisfying $\|u\| \geq e^{-L^{\beta}/4}$.\\
Then, 
$$|\det A| \leq c_0\max \{1, ||\text{adj}(A)||\}e^{-L^{\beta}/4}$$
where $\text{adj}(A)$ is the adjugate of the matrix $A$.
\end{lemma}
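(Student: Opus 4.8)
The plan is to avoid inverting $A$ altogether and instead to exploit the universal adjugate identity
$$\text{adj}(A)\, A = (\det A)\, I,$$
which holds for \emph{every} square matrix, singular or not. This is exactly what the situation demands: since the conclusion asserts that $\det A$ is small, $A$ may well be (nearly) singular, so any argument routed through $A^{-1}$ would be circular. The identity above lets me extract a factor of $\det A$ without ever dividing by it, which is the whole point.

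Concretely, I would start from the hypothesis $Au=b$ and left-multiply both sides by $\text{adj}(A)$. Using the identity, the left-hand side collapses to $\text{adj}(A)\,A\,u=(\det A)\,u$, so that
$$(\det A)\, u = \text{adj}(A)\, b.$$
Taking $l^2$-norms and using submultiplicativity of the operator norm then gives
$$|\det A|\,\|u\| = \|\text{adj}(A)\, b\| \leq \|\text{adj}(A)\|\,\|b\|.$$

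It remains only to insert the two quantitative hypotheses and rearrange. Since $\|b\|\leq c_0 e^{-L^{\beta}/2}$ and $\|u\|\geq e^{-L^{\beta}/4}>0$, dividing through by $\|u\|$ yields
$$|\det A| \leq \frac{\|\text{adj}(A)\|\,\|b\|}{\|u\|} \leq \frac{c_0\,\|\text{adj}(A)\|\, e^{-L^{\beta}/2}}{e^{-L^{\beta}/4}} = c_0\,\|\text{adj}(A)\|\, e^{-L^{\beta}/4},$$
which is a fortiori bounded by $c_0 \max\{1,\|\text{adj}(A)\|\}\, e^{-L^{\beta}/4}$, as claimed.

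I do not expect any genuine obstacle here; this is an elementary linear-algebra estimate. The only two points meriting care are, first, that the adjugate identity must be invoked in its general form (valid even when $A$ is singular), so that the step $(\det A)u=\text{adj}(A)b$ is legitimate regardless of whether $A$ is invertible; and second, that the $\max$ with $1$ appearing in the statement is purely cosmetic — it is harmlessly weaker than the sharper bound $c_0\|\text{adj}(A)\|\,e^{-L^{\beta}/4}$ that the computation actually produces, and is presumably written in this form only to streamline its later application in the proof of Lemma~\ref{L:mainlemma}.
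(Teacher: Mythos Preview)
Your proof is correct. The paper establishes the same bound by contradiction: it assumes $|\det A| > c_0\max\{1,\|\text{adj}(A)\|\}e^{-L^{\beta}/4}$, infers that $A$ is invertible, writes $u=A^{-1}b$ with $\|A^{-1}\|=\|\text{adj}(A)\|/|\det A|$, and derives a contradictory upper bound on $\|u\|$. Your direct use of the identity $\text{adj}(A)\,A=(\det A)\,I$, valid for every square matrix, sidesteps both the contradiction framework and the invertibility discussion; the underlying arithmetic is identical, but your route is slightly more economical and makes explicit that no case distinction on $\det A$ is needed. Your closing remark about the cosmetic role of the $\max\{1,\cdot\}$ is also accurate.
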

\begin{proof}[Proof of Lemma \ref{L:det}]
Assume by contradiction that 
$$|\det A| > c_0 \max \{1, ||\text{adj}(A)||\} e^{-L^{\beta}/4}>0.$$ 
As a result, $A$ is invertible. Hence, $u=A^{-1}b$ is the unique solution of the system $Ax=b.$ We therefore infer that 
\begin{align*}
\|u\| \leq \|A^{-1}\| \|b\| = \dfrac{1}{|\det A|} \|\text{adj}(A)\|\|b\|\leq \dfrac{\max \{1, ||\text{adj}(A)||\}}{|\det A|}\|b\|. 
\end{align*}  
Hence, 
$$|\det A| \leq c_0 \max \{1, ||\text{adj}(A)||\} e^{-L^{\beta}/2} e^{L^{\beta}/4}=c_0 \max \{1, ||\text{adj}(A)||\} e^{-L^{\beta}/4}$$
which is a contradiction. 
\end{proof}
\noindent To complete the present section, we state here the proof of Lemma \ref{L:mainlemma}.
\vskip 2pt
\begin{proof}[Proof of Lemma \ref{L:mainlemma}]
Let $u:=u(\omega)$ and $v:=v(\omega)$ be normalized eigenvectors associated to $E(\omega)$ and $E'(\omega).$ By Lemma \ref{L:perturbation}, we have
$$ \nabla_{\omega}E(\omega)=\big(2 \|\Pi_{\gamma}u\|^2\big)_{\gamma\in \Lambda} \text{   and   } \nabla_{\omega}E'(\omega)=\big(2\|\Pi_{\gamma}v\|^2\big)_{\gamma\in \Lambda}  $$
We introduce the linear operator $T$ from $l^2(\Lambda)$ to $l^2(\Lambda)$ defined as follows
 $$ Tu(n)=u(n)-u(n+1) $$
where $u=(u(n))_{n} \in l^2(\Lambda).$ Recall that $\Lambda= \Lambda_L=\mathbb Z/ L\mathbb Z,$ i.e. we use periodic boundary conditions here.\\ 
Assume that $\{\omega\}_{j\in \Lambda}$ belongs to the event (*). We thus have
\begin{align*}
c_1e^{-L^\beta}&\geqslant \| \nabla_{\omega}(c_1E(\omega)-c_2E'(\omega)) \|_1 =\sum_{n}|(\sqrt{c_1}Tu(n))^2- (\sqrt{c_2}Tv(n))^2| \\
 & =\sum_{n}|\sqrt{c_1}Tu(n)-\sqrt{c_2}Tv(n)||\sqrt{c_1}Tu(n)+\sqrt{c_2}Tv(n)| .
\end{align*} 
Hence, $e^{-L^\beta} \geqslant \sum_{n}|Tu(n)-cTv(n)||Tu(n)+cTv(n)| $ with $c=\sqrt{c_2}/\sqrt{c_1}>0.$\\
Then, there exists a partition of $\Lambda=\{-L,\ldots,L\},$ say $\mathcal{P}\subset \Lambda$ and $\mathcal{Q}\subset \Lambda$ 
such that $\mathcal{P} \cup \mathcal{Q}  =\Lambda$,  $\mathcal{P} \cap \mathcal{Q} =\emptyset $ and  
\begin{itemize}
\item 
$\text{ for } n \in \mathcal P, |Tu(n)-cTv(n)| \leqslant e^{-L^{\beta}/2}$
\item 
$ \text { for } n \in \mathcal Q,$  $ |Tu(n)+cTv(n)|\leqslant e^{-L^{\beta}/2}$
\end{itemize}
From now on, we put $v(n):=cv(n).$ This abuse of notation changes nothing thanks to the linearity of the operator $T$.\\
Hence, we obtain that 
\begin{equation} \label{eq:heart}
\begin{cases}
|Tu(n) =Tv(n)+ \text{O}(e^{-L^{\beta}/2}) & \text{ if } n \in \mathcal P\\
|Tu(n)=-Tv(n) +\text{O}(e^{-L^{\beta}/2})& \text { if } n \in \mathcal Q 
\end{cases}
\end{equation}
From Lemma \ref{L:lowerbound}, there exists  $c_3>0$ depending only on $\alpha_0, \beta_0$ and an interval $J$ of the length $c_3L^{\beta}$ such that
\begin{equation}\label{eq:*}
|u(k)|^2+|u(k+1)|^2 \geqslant 2 e^{-L^{\beta}/2}
\end{equation}
for all $k\in J.$\\ 
Now, we decompose 
\begin{equation} \label{eq:decompose}
\mathcal{P} \cap J=\cup \mathcal P_j  \text { and } \mathcal{Q} \cap J=\cup \mathcal Q_j
\end{equation}
where $\mathcal P_j \text { and } \mathcal Q_j$ are intervals in $\mathbb Z.$\\
We will divide the rest of the proof into some lemmata. First of all, in the Lemma \ref{L:length}, we show a restriction on the length of each interval $\mathcal P_j$ and $\mathcal Q_j$ in $\mathbb Z.$ We will make use of this lemma later to prove a "reduction" lemma (Lemma \ref{L:reduce}).
Next, In Lemma \ref{L:system}, with any four consecutive points in $J,$ we explain how to form a (inhomogeneous) $10\times 10$ system of linear equations from (\ref{eq:heart}) and eigenequations for $u$ and $v$. 
Finally, we show some restrictions  on $\{\omega_j\}_{j\in \Lambda}$ in Lemma \ref{L:condition}. Thanks to this lemma and Lemma \ref{L:reduce}, Lemma \ref{L:mainlemma} follows.
\begin{lemma}\label{L:length}
Assume that $\{\omega_j\}_{j\in \Lambda}$ belongs to the event (*) defined in Lemma \ref{L:mainlemma}:
$H_{\omega} (\Lambda)$ has two simple eigenvalues $E(\omega), E'(\omega)$ such that $|E(\omega)-E|+|E'(\omega)-E'|\leqslant
e^{-L^\beta}$ and 
$$ \| \nabla_{\omega}(c_1 E(\omega)-c_2 E'(\omega)) \|_1\leqslant c_1 e^{-L^\beta} .$$
Denote by $u, v$ normalized eigenvectors associated to $E(\omega), E'(\omega)$ respectively and 
consider the decomposition $\{\mathcal P_i, \mathcal Q_j\}$ in (\ref{eq:decompose}). Then, any $\mathcal P_j$ or $\mathcal Q_j$ can not contain more than four points. 
\end{lemma}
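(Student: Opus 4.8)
The plan is to argue by contradiction on a single block of the decomposition (\ref{eq:decompose}), say $\mathcal P_j=\{p,p+1,\dots,q\}$; the block $\mathcal Q_j$ is handled identically after replacing $v$ by $-v$, since (\ref{eq:heart}) only flips sign there. Throughout I abbreviate $\varepsilon:=e^{-L^\beta/2}$ and keep track of the $O(\varepsilon)$ errors, all of whose implicit constants depend only on $\alpha_0,\beta_0,E,E'$ and not on $L$. I extract two approximate identities valid on the block. First, from (\ref{eq:heart}) one has $Tu(n)-Tv(n)=O(\varepsilon)$ for $n\in\mathcal P_j$; since $Tw(n)=w(n)-w(n+1)$, this reads $(u-v)(n)-(u-v)(n+1)=O(\varepsilon)$, so telescoping across the consecutive indices $p,\dots,q+1$ shows that $u-v$ is constant up to $O(\varepsilon)$ on the whole block. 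Second, I feed (\ref{eq:heart}) into the eigenequations: writing the eigenequation for $u$ in transfer form, $\omega_nTu(n)-\omega_{n-1}Tu(n-1)=E(\omega)u(n)$, and substituting $Tu(n)=Tv(n)+O(\varepsilon)$ and $Tu(n-1)=Tv(n-1)+O(\varepsilon)$ (legitimate whenever $n,n-1\in\mathcal P_j$), the bracket collapses by the eigenequation for $v$ to $\omega_nTv(n)-\omega_{n-1}Tv(n-1)=E'(\omega)v(n)$, giving $E(\omega)u(n)=E'(\omega)v(n)+O(\varepsilon)$ at every interior index $n\in\{p+1,\dots,q\}$.

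These two relations pin $u$ down. At an interior index the pair $(u(n),v(n))$ solves the linear system $u(n)-v(n)=C+O(\varepsilon)$, $\;E(\omega)u(n)-E'(\omega)v(n)=O(\varepsilon)$, whose determinant is $E'(\omega)-E(\omega)$; since $E\neq E'$ and $E(\omega),E'(\omega)$ lie within $e^{-L^\beta}$ of $E,E'$, this determinant is bounded away from $0$ uniformly in $L$. Cramer's rule then forces $u(n)$ to equal a single constant up to $O(\varepsilon)$ on $\{p+1,\dots,q\}$, and likewise $v(n)$. Consequently, as soon as the block is long enough that $\{p+1,\dots,q\}$ contains three consecutive indices $n-1,n,n+1$, one gets $Tu(n)=u(n)-u(n+1)=O(\varepsilon)$ and $Tu(n-1)=O(\varepsilon)$; feeding these back into $E(\omega)u(n)=\omega_nTu(n)-\omega_{n-1}Tu(n-1)$ and using that $E(\omega)$ is bounded below yields $u(n)=O(\varepsilon)$, hence $u(n)$ and $u(n+1)$ are both $O(e^{-L^\beta/2})$. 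This contradicts the lower bound (\ref{eq:*}) of Lemma~\ref{L:lowerbound}, namely $u^2(n)+u^2(n+1)\geq 2e^{-L^\beta/2}$, because $n\in J$ and $e^{-L^\beta}\ll e^{-L^\beta/2}$ for $L$ large. Hence the block cannot house such a triple of locally-constant interior indices, which caps its length by a fixed small constant, in particular at most four points.

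The routine parts are the endpoint bookkeeping that converts ``contains three consecutive interior indices'' into the stated numerical bound, and checking that the telescoped and linearly-solved $O(\varepsilon)$ errors never inflate past $e^{-L^\beta/2}$ (they do not, since the block lives inside $J$, whose length is only $O(L^\beta)$). The one genuinely load-bearing step, and the one I would be most careful about, is the nondegeneracy of the $2\times2$ system: it is precisely the separation $E\neq E'$ that keeps the determinant $E'(\omega)-E(\omega)$ bounded away from zero, so that the combination of ``$u-v$ locally constant'' and ``$E(\omega)u\approx E'(\omega)v$'' collapses $u$ to a local constant and then, via the eigenequation, to something of size $O(e^{-L^\beta/2})$. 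Without the separation of the two energies this collapse fails, which is exactly the mechanism that prevents the two gradients from being forced into colinearity over anything longer than a short block.
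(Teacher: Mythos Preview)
Your proof is correct and follows essentially the same mechanism as the paper's: both combine the relation $Tu\approx Tv$ on a $\mathcal P$-block with the eigenequations to obtain $Eu(n)\approx E'v(n)$ at interior indices, then exploit $E\neq E'$ to force $Tu=O(\varepsilon)$ and finally $u=O(\varepsilon)$ at two consecutive sites, contradicting~(\ref{eq:*}). Your packaging via telescoping plus a $2\times2$ Cramer step is slightly cleaner than the paper's triple-by-triple derivation (and in fact yields the marginally sharper bound ``at most three points''), but the underlying argument is the same.
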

\begin{proof}[Proof of Lemma \ref{L:length}]
Thanks to the equivalent role of $\mathcal P$ and $\mathcal Q,$ it is sufficient to prove Lemma \ref{L:length} for $\{\mathcal P_j\}_j$.\\
Assume by contradiction that there exists an interval $\mathcal P_j$ contain at least five consecutive points, say
$\mathcal P_j=\{n-2, n-1,n, n+1, n+2,\ldots, m\}$ with $m\geqslant n+2.$\\
First of all, thanks to (\ref{eq:heart}), we have
\begin{align}
Tu(n-2)&=Tv(n-2)+O(e^{-L^{\beta}/2}).\notag\\
Tu(n-1)&=Tv(n-1)+O(e^{-L^{\beta}/2}).\notag\\ \label{eq:3.5} 
Tu(n)&=Tv(n) +O(e^{-L^{\beta}/2}).\\
Tu(n+1)&=Tv(n+1)+O(e^{-L^{\beta}/2}).\notag \\
Tu(n+2)&=Tv(n+2)+O(e^{-L^{\beta}/2}).\notag
\end{align}
Next, consider the triple of consecutive points $\{n-2, n-1, n\} \in \mathcal P_j.$ Using the eigenequations for $u$ and $v$ at the point $(n-1)$ and take the hypothesis 
$|E(\omega)-E|+|E'(\omega)-E'|\leq e^{-L^{\beta}}$ into account, we deduce  
\begin{equation}\label{eq:3.6}
Eu(n-1)= \omega_{n-1} Tu(n-1) - \omega_{n-2}Tu(n-2)+O(e^{-L^{\beta}/2}),
\end{equation}
\begin{equation}\label{eq:3.30}
 E'v(n-1)= \omega_{n-1} Tv(n-1) - \omega_{n-2}Tv(n-2)+O(e^{-L^{\beta}/2}).
 \end{equation}
 Hence, (\ref{eq:3.6}), (\ref{eq:3.30}) and the first two equations in (\ref{eq:3.5}) yield  
\begin{equation}\label{eq:3.7}
Eu(n-1) = E' v(n-1) + O(e^{-L^{\beta}/2}).
\end{equation}
Similarly, we have
\begin{equation}\label{eq:3.8} 
Eu(n)=E'v(n)+O(e^{-L^{\beta}/2}).\\ 
\end{equation}
Combining (\ref{eq:3.7}), (\ref{eq:3.8}) and the second equation in (\ref{eq:3.5}), we obtain
$$\Big(1-\dfrac{E}{E'}\Big) u(n) = \Big(1-\dfrac{E}{E'}\Big) u(n-1)+O(e^{-L^{\beta}/2})$$ 
which implies that 
\begin{equation}\label{eq:3.10}
Tu(n-1) \leq Ce^{-L^{\beta}/2}
\end{equation}
where $C$ is a positive constant depending only on $E, E', \alpha_0$ and $\beta_0$. \\
\noindent Repeating again the above argument for other triples of consecutive points in $\mathcal P_j$, we obtain 
\begin{equation}\label{eq:3.11}
Tu(n) \leq Ce^{-L^{\beta}/2},
\end{equation}
and 
\begin{equation}\label{eq:3.15}
Tu(n+1) \leq Ce^{-L^{\beta}/2}.
\end{equation}
\noindent On the other hand, we have the following eigenequations for $u$ at the point $n$ and $n+1$
\begin{align*}
Eu(n)= \omega_{n} Tu(n) - \omega_{n-1}Tu(n-1) +O(e^{-L^{\beta}/2}),\\
Eu(n+1)= \omega_{n+1} Tu(n+1) - \omega_{n}Tu(n)+O(e^{-L^{\beta}/2}).
\end{align*}
Hence, combining the above equations and (\ref{eq:3.10})-(\ref{eq:3.15}), we infer that there exists a positive constant $C$ being independent of $L$ such that 
$$ |u(n)|^2 +|u(n+1)|^2 \leqslant Ce^{-L^{\beta}}$$
which contradicts (\ref{eq:*}) if we choose $L$ large enough.\\ 
Hence, an interval $\mathcal P_j$ or $\mathcal Q_j$ can not contain more than four points in $\mathbb Z $ and 
we have Lemma \ref{L:length} proved.\\ 
\end{proof}
\noindent From the proof of Lemma \ref{L:length}, we reach to the following conclusion: 
\begin{remark}\label{R:proportion}
{\rm If two consecutive points ordered increasingly belong to some interval $\mathcal P_j$ ($n-2,n-1$ for instance), the value of $u$ at the latter point ($n-1$ in this case) is proportional to the value of $v$ at that point (as in (\ref{eq:3.7})) up to an exponentially small error. Moreover, if we have three consecutive points ordered increasingly in some interval $\mathcal P_j$ ($n-2,n-1,n$), the middle point ($n-1$) always satisfies an inequality of the form (\ref{eq:3.10}). Finally, if three points $n-2,n-1,n$ belong to some $\mathcal Q_i,$ we will have almost the same conclusion except that $E'$ need replacing by $-E'$ in  (\ref{eq:3.7}) and (\ref{eq:3.8}).}
\end{remark}     
\begin{lemma}\label{L:system}
Let $J$ be the subinterval of $\Lambda$ where (\ref{eq:*}) holds and $n-2, n-1,n,n+1$ be four consecutive points in $J$. Assume the same hypotheses as in Lemma \ref{L:length} and put $U:=(u(n-2),\ldots,u(n+2),v(n-2),\ldots,v(n+2))^t$. 
Then, from (\ref{eq:heart}) and eigenequations for $u$ and $v$, we can form a $10\times 10$ system of linear equations which admits $U$ as one of its solutions. 
\end{lemma}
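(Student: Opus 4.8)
The statement is essentially a construction, so the plan is to list every linear relation satisfied, up to an exponentially small error, by the ten numbers making up $U$, and then to single out exactly ten of them whose supports stay inside the five-index window $\{n-2,\dots,n+2\}$. There are two families of such relations: the near-proportionality relations (\ref{eq:heart}) coming from the smallness of $\nabla_\omega(c_1E(\omega)-c_2E'(\omega))$, and the eigenequations for $u$ and $v$. I would first rewrite the eigenequations through the operator $T$: at an interior index $k$ the equation $H_\omega(\Lambda)u=E(\omega)u$ reads
$$\omega_k\,Tu(k)-\omega_{k-1}\,Tu(k-1)=E(\omega)\,u(k),$$
and since $|E(\omega)-E|\le e^{-L^\beta}$ and $|u(k)|\le 1$ one may replace $E(\omega)$ by $E$ at the cost of an $O(e^{-L^\beta})$ term, exactly as in (\ref{eq:3.6}); the analogous identity holds for $v$ with $E'$. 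For each fixed $k$ this yields one linear equation in $u(k-1),u(k),u(k+1)$ (respectively $v(k-1),v(k),v(k+1)$), with coefficients built from $E$, $E'$ and the local weights $\omega_{k-1},\omega_k$.

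I would then choose the ten rows as follows. Applying (\ref{eq:heart}) at the four points $k=n-2,n-1,n,n+1$ gives four equations of the form $Tu(k)\mp Tv(k)=O(e^{-L^\beta/2})$, the sign depending on whether $k\in\mathcal P$ or $k\in\mathcal Q$; these involve precisely $u(n-2),\dots,u(n+2)$ and $v(n-2),\dots,v(n+2)$. Applying the $u$-eigenequations at $k=n-1,n,n+1$ gives three further equations, supported in $u(n-2),\dots,u(n+2)$, and the three $v$-eigenequations at the same indices give three more in the $v$-variables. This is $4+3+3=10$ equations in the ten unknowns of $U$; the eigenequations are taken only at the three interior indices precisely so that $k\pm1$ never leaves the window. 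Collecting the error terms into a vector $b$ and the coefficients into a matrix $A=A(\omega,E,E')$, one gets $A\,U=b$ with $\|b\|=O(e^{-L^\beta/2})$ (the larger of the two error scales), and $U$ solves this system by construction.

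There is no analytic obstacle here; the only step needing care is the combinatorial bookkeeping, namely checking that the four $T$-relations and the six eigenequations really close up inside $U$, and recording how the entries of $A$ depend on the local weights and on the $\mathcal P/\mathcal Q$ labels of the four points. That dependence, together with the lower bound (\ref{eq:*}) which forces $\|U\|\gtrsim e^{-L^\beta/4}$, is exactly the input that Lemma \ref{L:condition} will combine with Lemma \ref{L:det} to convert the smallness of $\det A$ into constraints on $\{\omega_j\}$; this exploitation, however, belongs to the later lemmas rather than to the construction asked for here.
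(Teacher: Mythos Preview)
Your construction is correct and proves the lemma as stated: the four relations from (\ref{eq:heart}) at $k=n-2,\dots,n+1$ together with the three $u$-eigenequations and the three $v$-eigenequations at $k=n-1,n,n+1$ give ten linear equations whose supports stay inside $\{n-2,\dots,n+2\}$, and $U$ solves them up to an $O(e^{-L^\beta/2})$ right-hand side.

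The paper makes a slightly different choice for the last three rows. Rather than using the raw $v$-eigenequations $E'v(k)=\omega_kTv(k)-\omega_{k-1}Tv(k-1)+O(e^{-L^\beta/2})$, it first substitutes the $T$-relations (\ref{eq:heart}) into them to obtain equations of the form $E'v(k)=\pm\omega_kTu(k)\mp\omega_{k-1}Tu(k-1)+O(e^{-L^\beta/2})$, and it goes one step further: whenever two consecutive points among $n-2,\dots,n+1$ lie in the same $\mathcal P_j$ or $\mathcal Q_j$, the corresponding substituted equation is replaced by the simpler proportionality $v(j)=\pm(E/E')u(j)+O(e^{-L^\beta/2})$ derived in Remark~\ref{R:proportion}. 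All of these replacements are row operations (adding multiples of the first four rows to the last three), so the resulting matrix has the same determinant as yours. The paper's preprocessing buys nothing at the level of this lemma; its sole purpose is to put the matrices $A_0,\dots,A_3$ of Lemma~\ref{L:condition} into a form where the $10\times10$ determinants can be reduced by hand to small explicit factors. If you keep your version you will have to perform those same row reductions inside the determinant computations later.
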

\begin{proof} [Proof of Lemma \ref{L:system}]
From (\ref{eq:heart}), for each of these four points, we have an equation of the form 
\begin{equation} \label{eq:3.22}
Tu(k) =\pm Tv(k) + O(e^{-L^{\beta}/2}), \;k = \overline{n-2,n+1}
\end{equation}
where the choice of $(+)$ or $(-)$ sign depends on whether $k$ belongs to $\mathcal P$ or $\mathcal Q$.  
So, we have $4$ (inhomogeneous) linear equations in hand.\\
On the other hand, we have $6$ eigenequations of eigenvectors $u$ and $v$ at the points $n-1, n$ and $n+1.$ Hence, apparently, we have $10$ linear equations corresponding to $10$ variables $\{u(n-2),\ldots,u(n+2),v(n-2),\ldots,v(n+2)\}.$ However, there is a couple of things here which should be made clearer.\\ 
First of all, to form our systems of linear equations, we use the following three eigenequations w.r.t. $u$ 
\begin{equation} \label{eq:3.25}
Eu(k)= \omega_{k} Tu(k) - \omega_{k-1}Tu(k-1)+O(e^{-L^{\beta}/2})
\end{equation}
where $k= \overline{n-1,n+1}.$\\
Next, we consider the eigenequations of $v$ at $k= \overline{n-1,n+1}$ 
 \begin{equation} \label{eq:3.23}
 E'v(k)= \omega_{k} Tv(k) - \omega_{k-1}Tv(k-1)+O(e^{-L^{\beta}/2}).
\end{equation}
Instead of using directly (\ref{eq:3.23}) for our $10\times 10$ systems of linear equations, we substitute (\ref{eq:3.22})
into the right hand side of (\ref{eq:3.23}) to get 
\begin{equation} \label{eq:3.24} 
 E'v(k)= \pm \omega_{k} Tu(k) \mp \omega_{k-1}Tu(k-1)+O(e^{-L^{\beta}/2})
\end{equation}
and (\ref{eq:3.24}) will be used for our $10\times 10$ systems of linear equations.\\
In Lemma \ref{L:condition}, we will write down these $10\times 10$ systems of linear equations as follows: 
The first four equations come from (\ref{eq:3.22}). Then, we write down the equations in (\ref{eq:3.25}) and (\ref{eq:3.24}). The fifth equation is (\ref{eq:3.25}) and the sixth is (\ref{eq:3.24}) with $k=n$. The seventh is (\ref{eq:3.25}) and the eighth is (\ref{eq:3.24}) with $k=n+1$. Lastly, (\ref{eq:3.25}) and (\ref{eq:3.24}) with $k=n-1$ are the ninth and the tenth equation in turn.\\
Finally, we make an important remark in case there exists an interval $\mathcal P_j$ or $\mathcal Q_j$ contains
at least two consecutive points of these four points, says  $j-1, j$. According to Remark \ref{R:proportion}, we have 
\begin{equation} \label{eq:3.26}
v(j) = \pm \dfrac{E}{E'}u(j)+O(e^{-L^{\beta}/2})  
\end{equation}
where (\ref{eq:3.26}) takes $(+)$ sign iff $j-1$ and $j\in \mathcal P.$\\ 
Whenever (\ref{eq:3.26}) holds true, we will use it to replace (\ref{eq:3.24}) w.r.t. $k=j$ in our systems of linear equations. This replacement simplifies these $10 \times 10$ systems of linear equations and makes them easier to analyze.
\end{proof}

\begin{definition}
A point $n \in J$ is an interior point of $J$ if the interval $[n-2, n+2]$ belongs to $J$.
\end{definition}
\noindent Let $n-2, n-1,n,n+1$ be interior points in $J$. 
We consider all possible $10 \times 10$ systems of linear equations which we can get from these points as in Lemma \ref{L:system}. We have four points, each of them can belong to $\mathcal P$ or $\mathcal Q.$ Hence, the number of choices for four points' belonging to $\mathcal P$ or $\mathcal Q$ equals $2^4=16$ which is also the total number of $10\times 10$ systems of linear equations obtained in Lemma \ref{L:system}. 
Furthermore, we have the following useful observation:
\begin{lemma} \label{L:reduce}
Assume the same hypotheses as in Lemma \ref{L:length}. 
Let $n-2, n-1,n,n+1$ be interior points in $J$ and $\{\mathcal P_i, \mathcal Q_j\}$ be the decomposition in (\ref{eq:decompose}). Then, we will only need to analyze $10\times 10$ systems of linear equations corresponding to the following four cases:
\begin{itemize}
\item[] \textit{First case: $n-2, n-1,n \in \mathcal P_j$ and $n+1\in \mathcal Q_j,$}\\
\begin{tikzpicture}
\draw[dotted] (-1,0) -- (4,0);
\draw [black, fill=black] (0,0) circle (0.1);
\draw (0, -0.1) node[below] {$n-2$};
\draw [black, fill=black] (1,0) circle (0.1);
\draw (1, -0.1) node[below] {$n-1$};
\draw [black, fill=black] (2,0) circle (0.1);
\draw (2, -0.17) node[below] {$n$};
\draw [black] (3,0) circle (0.1);
\draw (3, -0.1) node[below] {$n+1$};
\end{tikzpicture}
\item[] \textit{Second case: $n-2,n-1 \in \mathcal Q_j$ and $n,n+1\in \mathcal P_j,$}\\
\begin{tikzpicture}
\draw[dotted] (-1,0) -- (4,0);
\draw [black] (0,0) circle (0.1);
\draw (0, -0.1) node[below] {$n-2$};
\draw [black] (1,0) circle (0.1);
\draw (1, -0.1) node[below] {$n-1$};
\draw [black, fill=black] (2,0) circle (0.1);
\draw (2, -0.17) node[below] {$n$};
\draw [black, fill=black] (3,0) circle (0.1);
\draw (3, -0.1) node[below] {$n+1$};
\end{tikzpicture}
\item[] \textit{Third case: $n-2,n-1 \in \mathcal Q_j,$ $n\in \mathcal P_j$ and $n+1 \in \mathcal Q_{j+1},$}\\
\begin{tikzpicture}
\draw[dotted] (-1,0) -- (4,0);
\draw [black] (0,0) circle (0.1);
\draw (0, -0.1) node[below] {$n-2$};
\draw [black] (1,0) circle (0.1);
\draw (1, -0.1) node[below] {$n-1$};
\draw [black, fill=black] (2,0) circle (0.1);
\draw (2, -0.17) node[below] {$n$};
\draw [black] (3,0) circle (0.1);
\draw (3, -0.1) node[below] {$n+1$};
\end{tikzpicture}
\item[] \textit{Forth case: $n-2\in \mathcal Q_j, \; n-1 \in \mathcal P_j, \;  n\in \mathcal Q_{j+1} \;$ and $n+1 \in \mathcal P_{j+1}.$}\\
\begin{tikzpicture}
\draw[dotted] (-1,0) -- (4,0);
\draw [black] (0,0) circle (0.1);
\draw (0, -0.1) node[below] {$n-2$};
\draw [black, fill=black] (1,0) circle (0.1);
\draw (1, -0.1) node[below] {$n-1$};
\draw [black] (2,0) circle (0.1);
\draw (2, -0.17) node[below] {$n$};
\draw [black, fill=black] (3,0) circle (0.1);
\draw (3, -0.1) node[below] {$n+1$};
\end{tikzpicture}
\end{itemize} 
\end{lemma}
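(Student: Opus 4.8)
The plan is to reduce the sixteen a priori sign-patterns of the window $\{n-2,n-1,n,n+1\}$ to the four displayed ones by exploiting the one genuine symmetry of the problem together with the structural facts already established. First I would use that $-v$ is again a normalized eigenvector of $H_\omega(\Lambda)$ for $E'(\omega)$: replacing $v$ by $-v$ interchanges the two defining conditions of $\mathcal P$ and $\mathcal Q$ in \eqref{eq:heart}, so a pattern and its pointwise complement (every site switched $\mathcal P\leftrightarrow\mathcal Q$) yield the same $10\times 10$ system up to relabelling. This identifies the sixteen patterns in pairs and leaves eight classes. Moreover, by Lemma \ref{L:system} the only data the system actually retains is whether membership in $\mathcal P/\mathcal Q$ changes across each of the three interior bonds of the window — the switch structure — because whenever two consecutive window points share a run we discard the $v$-eigenequation \eqref{eq:3.24} in favour of the proportionality \eqref{eq:3.26} of Remark \ref{R:proportion}. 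Thus the eight classes are precisely the eight switch structures of a four-point window.

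Next I would cut these eight down to four using Lemma \ref{L:length}, which bounds every $\mathcal P_j$ and $\mathcal Q_j$ by four points, together with the freedom to translate the window: since $n-2,\dots,n+1$ are interior points of $J$ and $|J|\sim c_3L^\beta$ while each run is of bounded length, $J$ contains run boundaries and the window may be shifted by one site without leaving $J$. A window lying inside a single run (the pattern $PPPP$) has that run of length exactly four by Lemma \ref{L:length}, so shifting it by one site onto the adjoining boundary produces $PPPQ$, the first case. A window whose central run has length two flanked by points of the other type (shape $P\,QQ\,P$) becomes, after shifting onto the adjacent boundary, either $QQPP$ or $QQPQ$ according to whether the neighbouring run has length $\ge 2$ or $1$, i.e. the second or third case. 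Finally, a run of length three or four always yields $PPPQ$, while a window meeting only singleton runs yields the fully alternating $QPQP$, the fourth case. In this way every switch structure that is not already one of $\{PPPQ,\,QQPP,\,QQPQ,\,QPQP\}$ (up to $\mathcal P\leftrightarrow\mathcal Q$) is carried onto one of them, so only these four systems need be analysed in Lemma \ref{L:condition}.

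I expect the main difficulty to be bookkeeping rather than conceptual. One has to verify that the translation step reaches each of the four admissible boundary situations without omission, and — more delicately — that the $\pm$ signs appearing in \eqref{eq:3.22} and \eqref{eq:3.24}, which are exactly what distinguishes the four systems from one another, are transported correctly both when the window is shifted and when \eqref{eq:3.26} is substituted. Keeping track of these signs through all the sub-cases is the part that must be carried out with care, whereas the symmetry and the length constraint that drive the reduction are immediate from the definition of $\mathcal P,\mathcal Q$ and from Lemma \ref{L:length}.
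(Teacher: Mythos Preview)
Your approach is the same as the paper's: use the $\mathcal P\leftrightarrow\mathcal Q$ symmetry (i.e.\ $v\mapsto -v$) to cut the sixteen sign-patterns to eight, then translate the four-point window and invoke Lemma~\ref{L:length} to reduce the four non-listed classes to the four listed ones. Your enumeration, however, omits one of those four residual classes: the pattern $QPQQ$ (equivalently $PQPP$), with window runs of lengths $1,1,2$, falls under none of your headings (no run of length $\geq 3$, not all singletons, and the length-two run is not ``central, flanked on both sides''). The paper treats this as its \emph{Seventh case}, shifting left to $\{n-3,n-2,n-1,n\}$ and splitting on whether $n-3\in\mathcal Q$ (Third case) or $n-3\in\mathcal P$ (Fourth case, via $\mathcal P\leftrightarrow\mathcal Q$). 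Once you add this sub-case your argument is complete and matches the paper's line for line.
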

\begin{proof}[Proof of Lemma \ref{L:reduce}]
As mentioned above, we have a total of $16$ systems of linear equations to analyze. Thanks to the equivalent role of $\mathcal P$ and $\mathcal Q,$ we only have to consider a half of them. Apart from four cases listed above, the other cases could be:\\
 \textit{Fifth case:} {\it Assume that  all of these four points belong to some $\mathcal P_j.$}\\ 
\begin{tikzpicture}
\draw[dotted] (-1,0) -- (4,0);
\draw [black, fill=black] (0,0) circle (0.1);
\draw (0, -0.1) node[below] {$n-2$};
\draw [black, fill=black] (1,0) circle (0.1);
\draw (1, -0.1) node[below] {$n-1$};
\draw [black, fill=black] (2,0) circle (0.1);
\draw (2, -0.17) node[below] {$n$};
\draw [black, fill=black] (3,0) circle (0.1);
\draw (3, -0.1) node[below] {$n+1$};
\end{tikzpicture}

\noindent Hence, From Lemma \ref{L:length}, $n+2 \in \mathcal Q_j.$ We consider four points $n-1,n, n+1, n+2$ and come back to \textit{First case}.\\ 
\noindent \textit{Sixth case:} {\it Suppose that $n-2\in \mathcal Q_j, \; n-1,n \in \mathcal P_j, \;$ and $n+1 \in \mathcal Q_{j+1}.$}\\
\begin{tikzpicture}
\draw[dotted] (-1,0) -- (4,0);
\draw [black] (0,0) circle (0.1);
\draw (0, -0.1) node[below] {$n-2$};
\draw [black, fill=black] (1,0) circle (0.1);
\draw (1, -0.1) node[below] {$n-1$};
\draw [black, fill=black] (2,0) circle (0.1);
\draw (2, -0.17) node[below] {$n$};
\draw [black] (3,0) circle (0.1);
\draw (3, -0.1) node[below] {$n+1$};
\end{tikzpicture}

\noindent We consider the point $n+2.$ 
If $n+2$ belongs to $\mathcal Q_{j+1}$, we consider four points $n-1,n,n+1,n+2$ and come back to \textit{Second case} because of the equivalent role of $\mathcal{P}$ and $\mathcal Q$. Otherwise, $n+2$ belongs to $\mathcal P_{j+1}.$ In this case, we consider four points $n-1,n,n+1,n+2$ and come back to \textit{Third case}.\\
\noindent \textit{Seventh case:} {\it Assume that $n-2\in \mathcal Q_j,$ $n-1 \in \mathcal P_j$ and $n, n+1 \in \mathcal Q_{j+1}.$}\\
\begin{tikzpicture}
\draw[dotted] (-1,0) -- (4,0);
\draw [black] (0,0) circle (0.1);
\draw (0, -0.1) node[below] {$n-2$};
\draw [black, fill=black] (1,0) circle (0.1);
\draw (1, -0.1) node[below] {$n-1$};
\draw [black] (2,0) circle (0.1);
\draw (2, -0.17) node[below] {$n$};
\draw [black] (3,0) circle (0.1);
\draw (3, -0.1) node[below] {$n+1$};
\end{tikzpicture}

\noindent In this case, we consider four points $n-3, n-2, n-1, n.$ If $n-3$ also belongs to $\mathcal Q_j,$ we come back to \textit{Third case}. Otherwise, we come back to \textit{Forth case} on account of the equivalent role of $\mathcal P$ and $\mathcal Q$.\\
\noindent \textit{Eighth case:} {\it Suppose that $n-2 \in Q_j$ and $n-1, n, n+1\in \mathcal P_j$ for some $j.$}\\ 
\begin{tikzpicture}
\draw[dotted] (-1,0) -- (4,0);
\draw [black] (0,0) circle (0.1);
\draw (0, -0.1) node[below] {$n-2$};
\draw [black, fill=black] (1,0) circle (0.1);
\draw (1, -0.1) node[below] {$n-1$};
\draw [black, fill=black] (2,0) circle (0.1);
\draw (2, -0.17) node[below] {$n$};
\draw [black, fill=black] (3,0) circle (0.1);
\draw (3, -0.1) node[below] {$n+1$};
\end{tikzpicture}

\noindent If $n+2 \in \mathcal Q_{j+1},$ we consider four points $n-1, n, n+1, n+2$ and come back to \textit{First case}.  
Otherwise, $n+2$ still belongs to $\mathcal P_j,$ hence $n+3 \in \mathcal Q_{j+1}$ according to Lemma \ref{L:length}. On the other hand, $n+3$ still belongs to $J$ since $n+1$ is the interior point of $J$. Hence, we consider four points $n, n+1, n+2, n+3$ and come back to \textit{First case}.\\ 
To conclude, we only need to analyze $4$ special cases. The other cases can be reduced to those ones.  
\end{proof}
\noindent Now, we come to the final stage in the proof of Lemma \ref{L:mainlemma} where we deduce the restrictions  on r.v.'s $\omega_j.$
\begin{lemma}\label{L:condition}
Assume hypotheses as in Lemma \ref{L:length}. Let $J$ be the interval defined in (\ref{eq:heart}) and $n-2, n-1, n, n+1$ be four interior points of $J$. Assume that these four points correspond to one of the four cases listed in Lemma \ref{L:reduce}.
Then, one of the following restrictions  on r.v.'s holds true
\begin{itemize}
\item[$(i)$] $\Big|\omega_n +\dfrac{E'-E}{4}\Big|\leqslant C e^{-L^{\beta}/8},$
\item[$(ii)$] $\Big|\omega_{n-1} +\dfrac{E'-E}{4}\Big|\leqslant C e^{-L^{\beta}/8},$
\item[$(iii)$] $\Big|\omega_{n-1}\omega_n -\dfrac{(E+E')^2}{4}\Big|\leqslant C e^{-L^{\beta}/4}.$
\end{itemize} 
\end{lemma}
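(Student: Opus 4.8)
The plan is to treat the four cases of Lemma~\ref{L:reduce} one at a time, and in each to exploit the $10\times 10$ linear system $AU=b$ produced by Lemma~\ref{L:system}, where $U=(u(n-2),\dots,u(n+2),v(n-2),\dots,v(n+2))^{t}$. The guiding principle is that this system has an \emph{honest} solution $U$ whose norm is bounded below, while its right-hand side $b$ (which collects all the $O(e^{-L^{\beta}/2})$ errors appearing in (\ref{eq:heart}), (\ref{eq:3.25}) and (\ref{eq:3.24})) is exponentially small. Lemma~\ref{L:det} then forces $\det A$ to be exponentially small, and the restrictions $(i)$--$(iii)$ will fall out of an explicit factorization of $\det A$ regarded as a polynomial in $\omega_{n-1},\omega_{n}$.

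Concretely, I would first record that $\|b\|\leq c_{0}e^{-L^{\beta}/2}$ with $c_{0}$ depending only on $E,E',\alpha_{0},\beta_{0}$, since every entry of $b$ is one of the controlled errors and the coefficients multiplying them are bounded. Next, because $n$ is an interior point of $J$, the lower bound (\ref{eq:*}) gives $|u(n)|^{2}+|u(n+1)|^{2}\geq 2e^{-L^{\beta}/2}$, so $\|U\|\geq \sqrt{2}\,e^{-L^{\beta}/4}\geq e^{-L^{\beta}/4}$, which is exactly the hypothesis of Lemma~\ref{L:det}. Finally, since every entry of $A$ is built from $\pm 1$, the weights $\omega_{j}\in[\alpha_{0},\beta_{0}]$, the energies $E,E'\in[0,4\beta_{0}]$ and the ratio $E/E'$, each $9\times 9$ minor is bounded by a constant independent of $L$, whence $\max\{1,\|\text{adj}(A)\|\}\leq C$. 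Lemma~\ref{L:det} then yields $|\det A|\leq C\,e^{-L^{\beta}/4}$ in all four cases.

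The heart of the argument is the evaluation of $\det A$. In each case I would first apply the proportionality relation (\ref{eq:3.26}) at every point $j$ for which $j-1$ and $j$ lie in the same block $\mathcal P_{i}$ or $\mathcal Q_{i}$; this expresses the corresponding $v$-component through the matching $u$-component, replaces some rows of (\ref{eq:3.24}), and substantially trims the system. Expanding the resulting determinant along the rows coming from (\ref{eq:heart}) and from the eigenequations at $n-1,n,n+1$, one checks that the dependence on the peripheral weights $\omega_{n-2},\omega_{n+1}$ and on the boundary components $u(n\pm2),v(n\pm2)$ collects into a factor bounded away from $0$, while the remaining factor equals, up to a nonzero bounded constant, either the product $\bigl(\omega_{n}+\tfrac{E'-E}{4}\bigr)\bigl(\omega_{n-1}+\tfrac{E'-E}{4}\bigr)$ (the cases producing $(i)$/$(ii)$) or the quantity $\omega_{n-1}\omega_{n}-\tfrac{(E+E')^{2}}{4}$ (the case producing $(iii)$).

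Combining this factorization with $|\det A|\leq Ce^{-L^{\beta}/4}$ then finishes the proof: if $\det A$ factors as the product of the two linear terms, at least one of $|\omega_{n}+\tfrac{E'-E}{4}|$, $|\omega_{n-1}+\tfrac{E'-E}{4}|$ is $\leq C e^{-L^{\beta}/8}$, giving $(i)$ or $(ii)$; if it factors through $\omega_{n-1}\omega_{n}-\tfrac{(E+E')^{2}}{4}$, we obtain $(iii)$ directly with error $e^{-L^{\beta}/4}$. I expect the genuine obstacle to be precisely the determinant computation: one must track the signs attached to each point's membership in $\mathcal P$ or $\mathcal Q$ in (\ref{eq:3.24}), verify that the peripheral data truly decouples into a harmless factor, and confirm that the two algebraic normal forms above are the only ones arising across the four cases of Lemma~\ref{L:reduce}. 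The bookkeeping is delicate but mechanical once the system is written out case by case.
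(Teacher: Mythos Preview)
Your approach is essentially the paper's: set up the $10\times10$ system, invoke Lemma~\ref{L:det} via the lower bound on $\|U\|$ coming from (\ref{eq:*}), and read off the restrictions from an explicit factorization of $\det A$ in each of the four cases. The preliminary bounds on $\|b\|$, $\|U\|$ and $\|\mathrm{adj}(A)\|$ are handled exactly as you describe.

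One correction to your expectations about the outcome of the determinant computation. You predict that the nontrivial factor is always either the product $\bigl(\omega_{n}+\tfrac{E'-E}{4}\bigr)\bigl(\omega_{n-1}+\tfrac{E'-E}{4}\bigr)$ or the quadratic $\omega_{n-1}\omega_{n}-\tfrac{(E+E')^{2}}{4}$. In fact three patterns occur: in the First and Third cases the determinant carries a \emph{single} linear factor $\bigl|\omega_{n}+\tfrac{E'-E}{4}\bigr|$ (times $\omega_{n-2}\omega_{n+1}$ and a nonzero constant in $E,E'$), so $(i)$ follows with the stronger bound $Ce^{-L^{\beta}/4}$; in the Second case one obtains $\bigl|\omega_{n-1}\omega_{n}-\tfrac{(E+E')^{2}}{4}\bigr|$, giving $(iii)$; only in the Fourth case does the determinant factor as $\bigl|4\omega_{n-1}+E'-E\bigr|\cdot\bigl|4\omega_{n}+E'-E\bigr|$ times a bounded-away-from-zero prefactor, and it is precisely this product structure that forces the weaker exponent $e^{-L^{\beta}/8}$ in $(i)$/$(ii)$. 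So your dichotomy misses the single-linear-factor cases, and the source of the $e^{-L^{\beta}/8}$ is the Fourth case alone. This does not affect the correctness of your strategy, but you should be aware that the actual case-by-case computation (which the paper carries out explicitly, reducing each $10\times10$ determinant to a $4\times4$ one) does not match your predicted normal forms.
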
 
\begin{proof}[Proof of Lemma \ref{L:condition}]
For each of four cases in Lemma \ref{L:reduce}, we consider the corresponding system of linear equations formed in Lemma \ref{L:system} and compute its determinant. This yields some restrictions  on r.v.'s.\\ 
Recall that $U:=(u(n-2),\ldots, u(n+2), v(n-2), \ldots, v(n+2))^{t}.$\\ 
\textit{First case: Assume that three points $n-2, n-1,n \in \mathcal P_j$ and the other one $n+1\in \mathcal Q_j.$}\\
\noindent Since $n-2, n-1, n\in \mathcal P_j,$ two equations in (\ref{eq:3.24}) associated to $n-1, n$ will be replaced by
two equations of the type (\ref{eq:3.26}) with (+) sign in our system. Hence, according to Lemma \ref{L:system}, $U$
satisfies the following system of linear equations:
\begin{equation} \label{eq:A0}
A_0 U=b_0 
\end{equation}
where $b_0=(b_0^j)_{1\leqslant j\leqslant10}$ with 
$\|b_0\|\leqslant e^{-L^{\beta}/2}$
 and $A_{0}$ is the $10 \times 10$ matrix of the block form $(A_0^1|A_0^2)$ with
{\small
\begin{equation*}
A_0^1:=
\begin{pmatrix}
1&-1&0&0&0\\ 
0&1&-1&0&0\\ 
0&0&1&-1&0 \\
0&0&0&1&-1\\ 
0&-\omega_{n-1}&\omega_{n-1}+\omega_{n}-E&-\omega_n&0\\ 
0&0&\dfrac{E}{E'}&0&0\\ 
0&0&-\omega_n&\omega_n+\omega_{n+1}-E&-\omega_{n+1}\\ 
0&0&\omega_n&\omega_{n+1}-\omega_n&-\omega_{n+1}\\ 
-\omega_{n-2}&\omega_{n-2}+\omega_{n-1}-E&-\omega_{n-1}&0&0\\ 
0&\dfrac{E}{E'}&0&0&0\\
\end{pmatrix}
\end{equation*}  
}
and
\begin{equation*}
A_0^2:=
\begin{pmatrix}
1&-1&0&0&0\\
0&1&-1&0&0\\
0&0&1&-1&0\\
0&0&0&-1&1\\
0&0&0&0&0\\
0&0&-1&0&0\\
0&0&0&0&0\\
0&0&0&E'&0\\
0&0&0&0&0\\
0&-1&0&0&0\\
\end{pmatrix}.
\end{equation*}
\noindent Let $\text{adj}(A_0)$ be the adjugate of $A_0.$ It is easy to see that  
$$ \max\{1,\|\text{adj}(A_0)\|  \} \leqslant M_0$$
where $M_0$ is a positive constant depending only on $E, E', \alpha_0, \beta_0.$\\
Hence, thanks to Lemma \ref{L:det}, we have
$$|\det A_0| \leq M_0e^{-L^{\beta}/4}.$$
By an explicit computation in Appendix \ref{S:A}, we have
$$ |\det A_0| = \dfrac{4E}{E'}(E+E')\omega_{n-2}\omega_{n+1} \Big|\omega_{n} +\dfrac{E'-E}{4}\Big| \leqslant M_0 e^{-L^{\beta}/4}.$$ 
Therefore, from the fact that $E, E'>0$ and $\omega_j \geq \alpha_0>0$, the following condition on $\omega$ holds true with
$L$ sufficiently large:  
$$\Big|\omega_n +\dfrac{E'-E}{4}\Big|\leqslant C e^{-L^{\beta}/4}. \eqno{(I)}$$
\noindent \textit{Second case: $n-2,n-1 \in \mathcal Q_j$ and $n,n+1\in \mathcal P_j.$}\\
\noindent In the present case, since $n-2, n-1 \in \mathcal Q_j,$ we use (\ref{eq:3.26}) with (-) sign w.r.t. $n-1$ to replace the equation in (\ref{eq:3.24}) w.r.t. $n-1$ for our system of linear equations. Besides, since $n, n+1 \in \mathcal P_j,$ the equation in  (\ref{eq:3.24}) w.r.t. $n+1$ will be replaced by (\ref{eq:3.26}) with (+) sign at $n+1$.\\
Hence, according to Lemma \ref{L:system}, we have the following $10\times 10$ system of linear equations:  
\begin{equation} \label{eq:A1}
A_{1}U=b_1 
\end{equation}
where $b_1=(b_1^j)_{1\leqslant j\leqslant10}$ with 
$\|b_1\|\leqslant 10 e^{-L^{\beta}/2}$
 and $A_{1}= (A_1^1|A_1^2)$ where
{\small
\begin{equation*}   
A_1^1:=
\begin{pmatrix}
1&-1&0&0&0\\
0&1&-1&0&0\\ 
0&0&1&-1&0\\ 
0&0&0&1&-1\\ 
0&-\omega_{n-1}&\omega_{n-1}+\omega_{n}-E&-\omega_n&0\\ 
0&\omega_{n-1}&\omega_{n}-\omega_{n-1}&-\omega_n&0\\ 
0&0&-\omega_n&\omega_n+\omega_{n+1}-E&-\omega_{n+1}\\ 
0&0&0&\dfrac{E}{E'}&0\\ 
-\omega_{n-2}&\omega_{n-2}+\omega_{n-1}-E&-\omega_{n-1}&0&0\\ 
0&-\dfrac{E}{E'}&0&0&0\\ 
\end{pmatrix}
\end{equation*}
}
and 
\begin{equation*}
A_1^2:=
\begin{pmatrix}
-1&1&0&0&0\\ 
0&-1&1&0&0\\
0&0&1&-1&0\\
0&0&0&1&-1\\
0&0&0&0&0\\
0&0&-E'&0&0\\
0&0&0&0&0\\
0&0&0&-1&0\\
0&0&0&0&0\\
0&-1&0&0&0\\
\end{pmatrix}.
\end{equation*}
 
\noindent Again, by using Lemma \ref{L:det}, we infer that 
$$|\det A_1| \leqslant M_1 e^{-L^{\beta}/4}$$
where $M_1=M_1(E,E',\alpha_0,\beta_0) >0.$\\ 
Compute the determinant of $A_1$ (See Appendix \ref{S:A}), we obtain 
$$ |\det A_1| = \dfrac{4E}{E'}\omega_{n-2}\omega_{n+1} \Big|\omega_{n-1}\omega_n -\dfrac{(E+E')^2}{4}\Big| \leqslant M_1 e^{-L^{\beta}/4}.$$ 
Hence, take $\omega_j \geqslant \alpha_0>0$ and $E, E' >0$ into account, we have   
$$\Big|\omega_{n-1}\omega_n -\dfrac{(E+E')^2}{4}\Big|\leqslant C e^{-L^{\beta}/4} \eqno{(II)} $$
as $L$ sufficiently large.\\
\noindent \textit{Third case: $n-2,n-1 \in \mathcal Q_j,$ $n\in \mathcal P_j$ and $n+1 \in \mathcal Q_{j+1}.$}\\
\noindent According to \textbf{Step 2} and Lemma \ref{L:det}, we have 
$$ A_{2} U =b_2 $$
where $A_2=(A_2^1|A_2^2)$ is the $10\times 10$ matrix defined by
{\small 
\begin{equation*}
A_2^1:=
\begin{pmatrix}
1&-1&0&0&0\\ 
0&1&-1&0&0\\ 
0&0&1&-1&0\\ 
0&0&0&1&-1\\ 
0&-\omega_{n-1}&\omega_{n-1}+\omega_{n}-E&-\omega_n&0\\ 
0&\omega_{n-1}&\omega_{n}-\omega_{n-1}&-\omega_n&0\\ 
0&0&-\omega_n&\omega_n+\omega_{n+1}-E&-\omega_{n+1}\\ 
0&0&-\omega_n&\omega_n-\omega_{n+1}&\omega_{n+1}\\ 
-\omega_{n-2}&\omega_{n-2}+\omega_{n-1}-E&-\omega_{n-1}&0&0\\ 
0&-\dfrac{E}{E'}&0&0&0\\ 
\end{pmatrix}
\end{equation*}
}
and
\begin{equation*}
A_2^2:=
\begin{pmatrix}
-1&1&0&0&0\\
0&-1&1&0&0\\
0&0&1&-1&0\\
0&0&0&-1&1\\
0&0&0&0&0\\
0&0&-E'&0&0\\
0&0&0&0&0\\
0&0&0&-E'&0\\
0&0&0&0&0\\
0&-1&0&0&0\\
\end{pmatrix}.
\end{equation*}
\noindent such that $|\det A_{2}| \leq  M_2 e^{-L^{\beta}/4}$ where $M_2$ is some positive constant.\\
Then, by an explicit computation, we obtain that
$$|\det A_{2}| = 4E (E+E') \omega_{n-2}\;\omega_{n+1}\Big|\omega_n + \dfrac{E'-E}{4}\Big| \leqslant M_2 e^{-L^{\beta}/4}$$
which immediately yields that   
$$\Big|\omega_{n} + \dfrac{E'-E}{4}\Big| \leqslant C e^{-L^{\beta}/4} \eqno{(III)}$$
as $L\rightarrow +\infty$.\\
\noindent \textit{Forth case: Suppose that $n-2\in \mathcal Q_j, \; n-1 \in \mathcal P_j, \;  n\in \mathcal Q_{j+1} \;$ and $n+1 \in \mathcal P_{j+1}.$}\\
\noindent In this case, $U$ satisfies the following system of linear equations:
$$ A_{3} U =b_3$$
where $A_3=(A_3^1|A_3^2)$ is the $10\times 10$ matrix defined by
{\small
\begin{equation*}
A_3^1:=
\begin{pmatrix}
1&-1&0&0&0\\ 
0&1&-1&0&0\\ 
0&0&1&-1&0\\ 
0&0&0&1&-1\\ 
0&-\omega_{n-1}&\omega_{n-1}+\omega_{n}-E&-\omega_n&0\\ 
0&-\omega_{n-1}&\omega_{n-1}-\omega_{n}&\omega_n&0\\ 
0&0&-\omega_n&\omega_n+\omega_{n+1}-E&-\omega_{n+1}\\ 
0&0&\omega_n&\omega_{n+1}-\omega_{n}&-\omega_{n+1}\\ 
-\omega_{n-2}&\omega_{n-2}+\omega_{n-1}-E&-\omega_{n-1}&0&0\\ 
\omega_{n-2}&\omega_{n-1}-\omega_{n-2}&-\omega_{n-1}&0&0\\ 
\end{pmatrix}
\end{equation*}
}
and 
\begin{equation*}
A_3^2:=
\begin{pmatrix}
-1&1&0&0&0\\
0&1&-1&0&0\\
0&0&-1&1&0\\
0&0&0&1&-1\\
0&0&0&0&0\\
0&0&-E'&0&0\\
0&0&0&0&0\\
0&0&0&-E'&0\\
0&0&0&0&0\\
0&-E'&0&0&0\\
\end{pmatrix}
\end{equation*}
\noindent and $|\det A_3| \leq M_3 e^{-L^{\beta}/4}.$\\
We compute 
$$|\det A_{3}| = EE'  \omega_{n-2}\times \omega_{n+1}\times \; \Big|4\omega_{n-1}+ E'-E\Big| \times \; \Big|E'-E +4\omega_n\Big|.$$ 
Hence, at least one of the two following conditions on $\omega$ must be satisfied:
\begin{itemize}
\item $\Big|\omega_{n-1}+ \dfrac{E'-E}{4}\Big| \leqslant C e^{-L^{\beta}/8}, \quad (IV)$
\item $\Big |\omega_n+ \dfrac{E'-E}{4}\Big| \leqslant C e^{-L^{\beta}/8}.$
\end{itemize}
From $(I)-(IV)$, Lemma \ref{L:condition} follows. 
\end{proof}
\noindent Lemmata \ref{L:reduce} and \ref{L:condition} yield that if we consider any $4$ consecutive interior points of $J,$ we obtain at least one condition of the types $(i)-(iii).$ Hence,   
the random variables $\{\omega_j\}_{j\in \Lambda}$ must satisfy at least $|J|/8=cL^{\beta}$ conditions of the types $(i)-(iii)$. From the fact that $\omega_n$ are i.i.d. and possess a bounded density, the conditions $(i)-(iii)$ imply that the event (*) defined in Lemma \ref{L:mainlemma} can occur for a given partition $\mathcal{P}$ and $\mathcal Q$ with a probability at most $e^{-cL^{2\beta}}$ for some $c>0.$ Hence, 
$$\mathbb P^{*}\leqslant 2^L e^{-cL^{2\beta}}\leqslant e^{-\widetilde{c}L^{2\beta}} \text{ with } 0<\widetilde{c}<c$$ 
as the number of partitions is bounded by $2^{L}$ and $\beta>1/2.$\\ 
We therefore have Lemma \ref{L:mainlemma} proved.

\end{proof}
\begin{remark} \label{R:dim}
{\rm
Thanks to the equality (\ref{eq:**}), it is not hard to derive the following estimate for the model (\ref{eq:1.1}), 
\begin{equation} \label{eq:3.21}
\dfrac{\Delta E}{2\beta_{0}} |\Lambda|^{-1/2} \leqslant  \| \nabla_{\omega}(E(\omega)-E'(\omega)) \| \leqslant \| \nabla_{\omega}(E(\omega)-E'(\omega)) \|_1  
 \end{equation}
 provided that $|E(\omega)-E|+|E'(\omega)-E'|\leqslant e^{-L^\beta}$
and $\Delta E=|E-E'|$. \\
The above estimate reads that the $l^1-$distance of the gradients of $E(\omega)$ and $E'(\omega)$ is bounded from below by a positive term that is polynomially small w.r.t. the length of the interval $\Lambda.$\\
Now, let $c_1= c_2$ in Lemma \ref{L:mainlemma}.\\
Under the hypotheses in Lemma \ref{L:mainlemma}, the estimate (\ref{eq:3.21}) implies that, for any $\{\omega_j\}_{j\in \Lambda} \text{ belonging to the event (*)}$, 
 $$\dfrac{\Delta E}{2\beta_{0}} |\Lambda|^{-1/2} \leqslant C e^{-c|\Lambda|^{\beta}} $$
which is impossible when $|\Lambda|$ sufficiently large. Hence, for $c_1=c_2,$ $\mathbb P^{*}$ is equal to $0$ .
\vskip 1pt
\noindent Finally, we would like to note that an estimate like (\ref{eq:3.21}) for the discrete Anderson model only holds true for two distinct energies sufficiently far apart from each other. Moreover, that kind of estimate enable us to prove the decorrelation estimate for the discrete Anderson model in \emph{any} dimension (c.f. Lemma 2.4 in \cite{FK2011}). But it is not the case for the model (\ref{eq:1.1}).}     
\end{remark}
\section{Comment on the lower bound of the r.v.'s} \label{S:remark}
In this section, we want to discuss how to relax the hypothesis on the lower bound of random variables $\{\omega_j\}_{j\in \mathbb Z}.$
\vskip 0.5 pt
\noindent Assume that all r.v.'s $\omega_j$ are only non-negative instead of being bounded from below by a positive constant. Precisely, assume that $\omega_j \in [0, \beta_0]$ $\forall j\in \mathbb Z.$ In order to carry out our proof, we have to assume an extra condition on the function of distribution $F(t)$ of r.v.'s $\omega_j:$
\begin{equation}\label{eq:4.1}
F(t):=\mathbb P (\omega_j \leqslant t) \leqslant e^{-t^{-\eta}}
\end{equation}
for all small positive $t$, where $\eta$ is some positive number.\\
Without loss of generality, let $\eta=1.$ The condition (\ref{eq:4.1}) means that the distribution $F(t)$ is exponentially small in a neighborhood of $0$.\\
Now, let $\Lambda=[-L,L]$ be an interval in $\mathbb Z,$ we have 
\begin{equation}\label{eq:4.2}
\mathbb P (\exists \omega_{\gamma} \leqslant e^{-(\log L)^{\delta}} \text{\;with\;} \gamma \in \Lambda) \leqslant (2L+1)e^{-e^{(\log L)^{\delta}}}. 
\end{equation}
Note that the right hand side of (\ref{eq:4.2}) converges to $0$ as $L \rightarrow \infty$. Here $\delta$ is a fixed number in $(0,1)$.\\
Hence, with a probability greater than or equal to $1-(2L+1) e^{-e^{(\log L)^{\delta}}},$\\ 
we have 
\begin{equation}\label{eq:4.3}
\omega_j \geq e^{-(\log L)^{\delta}}>0 \; \forall j \in[-L,L].
\end{equation}
We will use (\ref{eq:4.3}) to prove the following "lower bound" for normalized eigenvectors of $H_{\omega}(\Lambda).$
 \begin{lemma} \label{L:lowerbound1} 
Pick $\beta \in (1/2, 1)$ and a fixed number $\epsilon \in (0,\beta).$ 
Let $\Lambda=[-L,L]$ be a large cube in $\mathbb Z.$ Suppose that $E(\omega)$ is an eigenvalue of $H_{\omega}(\Lambda)$ and $u:=u(\omega)$ is its associated normalized eigenvector.\\
\noindent Then, with a probability greater than or equal to $1-(2L+1) e^{-e^{(\log L)^{\delta}}}$,\\ 
there exists a point $k_0$ in $\Lambda$ such that 
$$u^2(k)+u^2(k+1)\geqslant e^{-L^{\beta}/2}$$ 
for all $|k-k_0|\leqslant \frac{1}{4}L^{\beta-\epsilon}$ as $L$ large enough.
\end{lemma}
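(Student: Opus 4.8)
The plan is to work on the high-probability event on which (\ref{eq:4.3}) holds and to re-run the transfer-matrix argument of Lemma~\ref{L:lowerbound} verbatim, the only change being that the uniform lower bound $\alpha_0$ on the $\omega_j$ is replaced by the much smaller, $L$-dependent quantity $e^{-(\log L)^{\delta}}$. First I would restrict attention to the event $\{\omega_j \geqslant e^{-(\log L)^{\delta}} \text{ for all } j \in \Lambda\}$, which by (\ref{eq:4.3}) has probability at least $1-(2L+1)e^{-e^{(\log L)^{\delta}}}$; the entire remaining argument is then carried out deterministically on this event.

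Writing the eigenequation for $u$ and $E(\omega)$ through the transfer matrices $T(n,E(\omega))$ exactly as in the proof of Lemma~\ref{L:lowerbound}, and setting $v(n)=(u(n+1),u(n))^{t}$, I would bound the entries of $T(n,E(\omega))$ and of its inverse. Since $E(\omega)\in[0,4\beta_0]$, $\omega_j\leqslant\beta_0$ and $\omega_j\geqslant e^{-(\log L)^{\delta}}$ on our event, each entry of $T(n,E(\omega))$ and of $T(n,E(\omega))^{-1}$ is bounded in absolute value by $C e^{(\log L)^{\delta}}$, with $C$ depending only on $\beta_0$: indeed the only troublesome factors are $1/\omega_n$, $\omega_{n-1}/\omega_n$ and $\omega_n/\omega_{n-1}$, all of which are controlled this way. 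Consequently
\begin{equation*}
\|v(n)\|\leqslant e^{\eta_L\,|n-m|}\|v(m)\|,\qquad \eta_L:=\log C+(\log L)^{\delta},
\end{equation*}
for all $n,m\in\Lambda$. The essential new feature is that $\eta_L$ is no longer a fixed constant as in Lemma~\ref{L:lowerbound}, but now grows, albeit only like $(\log L)^{\delta}$.

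As before, I would pick $k_0$ realizing the maximum of $\|v(\cdot)\|$; from $\sum_{j\in\Lambda}\|v(j)\|^2=2$ this gives $\|v(k_0)\|\geqslant 1/\sqrt{L}$. Propagating the bound, for $|k-k_0|\leqslant\frac14 L^{\beta-\epsilon}$ one obtains
\begin{equation*}
\|v(k)\|\geqslant\frac{1}{\sqrt{L}}\,e^{-\eta_L|k-k_0|}\geqslant\frac{1}{\sqrt{L}}\,e^{-\frac14\eta_L L^{\beta-\epsilon}}.
\end{equation*}
Since $u^2(k)+u^2(k+1)=\|v(k)\|^2$, it remains to check that the square of the right-hand side is at least $e^{-L^{\beta}/2}$, i.e.\ that $\tfrac14\eta_L L^{\beta-\epsilon}+\tfrac12\log L\leqslant L^{\beta}/4$ for $L$ large.

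The crux, and the only genuinely new ingredient compared with Lemma~\ref{L:lowerbound}, is this last inequality. Because $\eta_L=O((\log L)^{\delta})$, the dangerous term is $\eta_L L^{\beta-\epsilon}=O\big((\log L)^{\delta}L^{\beta-\epsilon}\big)$, and the whole argument hinges on the fact that
\begin{equation*}
\frac{(\log L)^{\delta}L^{\beta-\epsilon}}{L^{\beta}}=(\log L)^{\delta}L^{-\epsilon}\xrightarrow[L\to\infty]{}0.
\end{equation*}
This is precisely why the interval of control must shrink from the scale $L^{\beta}$ of Lemma~\ref{L:lowerbound} to the shorter scale $L^{\beta-\epsilon}$ with $\epsilon>0$: sacrificing a factor $L^{\epsilon}$ in the length is exactly what absorbs the $(\log L)^{\delta}$ blow-up of the transfer-matrix norms caused by the vanishing lower bound on the $\omega_j$. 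Once this inequality is verified for $L$ large, the desired bound $u^2(k)+u^2(k+1)\geqslant e^{-L^{\beta}/2}$ holds on the whole range $|k-k_0|\leqslant\frac14 L^{\beta-\epsilon}$, which completes the proof on the event (\ref{eq:4.3}).
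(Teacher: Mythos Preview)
Your proof is correct and follows essentially the same approach as the paper's own proof: restrict to the event~(\ref{eq:4.3}), bound the transfer matrices and their inverses by $e^{c(\log L)^{\delta}}$, pick $k_0$ as the location of the maximum of $\|v(\cdot)\|$, and then use $(\log L)^{\delta}L^{-\epsilon}\to 0$ to absorb the extra growth. Your write-up is in fact slightly more explicit than the paper's about why the final inequality holds and about which entries of the transfer matrix require the lower bound on $\omega_j$.
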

\begin{proof}[Proof of Lemma \ref{L:lowerbound1}] 
Consider $\{\omega_j\}_{j\in \Lambda}$ such that (\ref{eq:4.3}) holds true. Using the same notations and proceed as in Lemma \ref{L:lowerbound}, for $n, m \in \Lambda,$ we have
$$v(n) = T(n,E(\omega))\cdots T(n-m+1,E(\omega))v(m)$$
 where $T(n,E(\omega))$ and $v(n)$ are the transfer matrix and column vector defined in the proof of Lemma \ref{L:lowerbound}.
 Thanks to (\ref{eq:4.3}), the transfer matrices $T(n,E(\omega))$ are well defined and invertible. Moreover, they and their inverse matrices are bounded by $C_L:=e^{c(\log L)^{\delta}}$ where $c>0$ depends only on $\beta_0.$\\
Thus, 
\begin{equation}\label{eq:4.4}
\|v(n)\|\leqslant C_L^{|n-m|} \|v(m)\| =e^{ c(\log L)^{\delta}|n-m|}\|v(m)\|
\end{equation}
where $n, m \in \Lambda.$\\
Assume that $\|v(k_0)\|$ is the maximum of $\|v(n)\|.$ Hence, 
 \begin{equation} \label{eq:4.5}
 \|v(k_0)\| \geq \dfrac{1}{\sqrt{L}}
 \end{equation}
 as $u$ is a normalized vector.\\
Pick $\kappa>0$ a fixed number and consider integers $k$ such that $|k-k_0|\leqslant  \kappa L^{\beta-\epsilon}.$ From (\ref{eq:4.4}) and (\ref{eq:4.5}), we have the following inequality  
$$ \|v(k)\| \geqslant  \dfrac{1}{\sqrt{L}} e^{ -c(\log L)^{\delta}|k-k_0|}\geq \dfrac{1}{\sqrt{L}} e^{ -c \kappa (\log L)^{\delta}L^{\beta-\epsilon}}   \geq e^{-\kappa L^{\beta}} $$
when $L$ is sufficiently large.\\
Hence, by choosing $\kappa=1/4,$ we have 
$$u^2(k)  +u^2(k+1) \geqslant e^{-L^{\beta}/2}$$
which completes the Lemma \ref{L:lowerbound1}.
\end{proof}
\noindent Roughly speaking, we obtained almost the same "lower bound" for the normalized eigenvectors of finite volume operators, but with a good probability instead of the probability $1$ as in Lemma \ref{L:lowerbound}.\\ 
Now, let $\beta$ be a fixed number in the interval $(1/2,1).$\\ 
One the one hand, thanks to Lemma \ref{L:lowerbound1}, the argument in proof of Theorem \ref{L:mainlemma} still works out. In deed, in this case, we can proceed as in the proof of Lemma \ref{L:mainlemma} to obtain at least $cL^{\beta-\epsilon}$ restrictions  on r.v.'s $\omega.$ Hence, the upper bound for the probability $\mathbb P^{*}$ in Lemma \ref{L:mainlemma} is now 
 $$B:= 2^{L} (e^{-\widetilde{c}L^{\beta}})^{cL^{\beta-\epsilon}}= 2^{L}e^{-c L^{2\beta -\epsilon}}.$$ 
 If we choose $\epsilon$ in Lemma \ref{L:lowerbound1} small enough such that $2\beta -\epsilon>1$, the upper bound $B$ is exponentially small w.r.t. $L$. Hence, the Lemma \ref{L:mainlemma} still holds true.\\   
 On the other hand, we observe that the term $(2L+1)e^{-e^{(\log L)^{\delta}}}$ (the upper bound of the probability that (\ref{eq:4.3}) fails) is negligible compared to the $C (l/L)^{2}e^{(\log L)^{\beta}}$ (the right hand side of the decorrelation estimate in Theorem \ref{T:decorrelation}) as $L$ large. Hence, we obtain again the decorrelation estimate. In other words, Theorem \ref{T:decorrelation} and Theorem \ref{T:Poisson} still hold true in this case.
 
\section{More than two distinct energies}\label{S:morethan2}
In this section, we would like to show that, following an argument in \cite{FK2011}, we can use Theorem \ref{T:decorrelation} to prove the asymptotic independence for any fixed number of point processes. 
\begin{theorem} \label{T:Poisson1}
For a fixed number $n\geqslant 2,$ consider a finite sequence of fixed, positive energies $\{E_i\}_{1\leqslant i\leqslant n}$ in the localized regime such that $\nu(E_i)>0$ for all $1\leq i \leq n.$\\
Then, as $|\Lambda| \rightarrow +\infty,$ $n$ point processes $\Xi(\xi,E_i,\omega,\Lambda)$ defined as in (\ref{eq:1.3}) converge weakly to $n$ independent Poisson processes.
\end{theorem}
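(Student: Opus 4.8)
The plan is to follow the reduction scheme of Section~3 of \cite{FK2011}, which shows that joint convergence of finitely many rescaled eigenvalue point processes to independent Poisson processes follows from two inputs already at our disposal: the single-energy Poisson convergence of Theorem~\ref{T:oldpoisson}, applied separately at each $E_i$, and the two-energy decorrelation estimate of Theorem~\ref{T:decorrelation}, applied to every pair $(E_i,E_j)$ with $i\ne j$. Since $n$ is fixed, there are only $\binom{n}{2}$ such pairs, and Theorem~\ref{T:decorrelation} applies to each because the $E_i$ are pairwise distinct and positive.

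First I would reformulate the claim at the level of the joint Laplace functional (equivalently, the joint factorial moments). Fixing nonnegative test functions $f_1,\dots,f_n$ with bounded support, I would partition $\Lambda_L$ into $N_L\sim|\Lambda_L|/|\Lambda_\ell|$ disjoint subintervals of side $\ell\sim L^\alpha$, $\alpha\in(0,1)$, and use localization (Proposition~\ref{P:localised}) to replace the eigenvalues of $H_\omega(\Lambda_L)$ lying in the windows $E_i+|\Lambda_L|^{-1}\,\mathrm{supp}(f_i)$ by those of the decoupled operator $\bigoplus_j H_\omega(\Lambda_\ell^{(j)})$, up to an error negligible as $L\to\infty$. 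Because the $\omega_n$ are i.i.d., the contributions of distinct subintervals are independent, so the joint Laplace functional approximately factorizes over subintervals.

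The decorrelation estimate enters to control the diagonal terms in which a single subinterval $\Lambda_\ell^{(j)}$ simultaneously carries eigenvalues near two distinct energies $E_i$ and $E_{i'}$. Theorem~\ref{T:decorrelation} bounds the probability of such an event by $C(\ell/L)^2 e^{(\log L)^\beta}$; summing over the $N_L\sim L/\ell$ subintervals and over the finitely many pairs yields a total of order $(\ell/L)\,e^{(\log L)^\beta}=L^{\alpha-1+o(1)}\to 0$ once $\beta\in(1/2,1)$ and $\alpha\in(0,1)$. Hence, with probability tending to $1$, each subinterval contributes an eigenvalue near at most one of the $E_i$; the windows around the distinct energies are then carried by disjoint, independent blocks of randomness, the joint process splits as a superposition of independent single-energy processes, and Theorem~\ref{T:oldpoisson} identifies each marginal limit, giving the product of independent intensity-one Poisson processes.

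The main obstacle is to check that \emph{pairwise} decorrelation is genuinely sufficient, i.e.\ that no irreducible three-or-more-body correlation survives in the limit. This works because any event in which one subinterval carries eigenvalues near three distinct energies is contained in the corresponding two-energy event, so its probability is already dominated by the pairwise bound, and since $n$ is fixed the union bound over all pairs costs only a constant factor. The remaining care lies in the bookkeeping of the partition argument — controlling the boundary and localization errors in passing from $H_\omega(\Lambda_L)$ to $\bigoplus_j H_\omega(\Lambda_\ell^{(j)})$ and verifying that the rescalings by $|\Lambda_L|\nu(E_i)$ are mutually compatible across the energies — but this is routine once the decorrelation and single-energy inputs are available, and is carried out exactly as in \cite{FK2011}.
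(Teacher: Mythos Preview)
Your proposal is correct and follows essentially the same route as the paper: partition $\Lambda_L$ into boxes of scale $\ell\sim L^\alpha$, pass to Laplace functionals, use independence across boxes, and control the cross-terms in each box by the pairwise decorrelation estimate of Theorem~\ref{T:decorrelation} together with the Wegner bound. The paper makes the ``pairwise suffices'' step explicit by an inclusion--exclusion computation (its Lemma~\ref{L:compute}) showing that, on a single box, $\mathbb E\bigl(e^{\sum_i a_iX_i}\bigr)-\prod_i\mathbb E\bigl(e^{a_iX_i}\bigr)=O\bigl((\ell/L)^{1+\theta}\bigr)$, and then multiplies over the $O(L/\ell)$ boxes; your ``good event'' phrasing (with a union bound over boxes and pairs) encodes the same estimate. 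One small caution: the sentence ``the windows around the distinct energies are then carried by disjoint, independent blocks of randomness'' is heuristic---the blocks are fixed and the assignment of blocks to energies is itself random---so to make the argument airtight you should still pass through the per-box Laplace identity as the paper does, rather than rely on that picture literally.
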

\begin{proof}[Proof of Theorem \ref{T:Poisson1}] 
We will prove in detail the case of $n=3$ with three distinct, positive energies $E, E', E''$.\\
Consider non-empty compact intervals $(U_j)_{1\leqslant j\leqslant J},$ $ (U'_{j})_{1\leqslant j\leqslant J'}, $$(U''_{j})_{1\leqslant j \leqslant J''}$ in $\mathbb R$ and integers $(k_j)_{1 \leqslant j \leqslant J}$, $(k'_{j})_{1 \leqslant j \leqslant J'}$, $(k''_{j})_{1 \leqslant j \leq J''}$ as in Theorem \ref{T:Poisson}.\\ 
Using notations in \cite{FK2011}, one picks $L$ and $l$ such that 
$(2L+1)=(2l+1)(2l'+1)$ where $cL^{\alpha}\leqslant l \leqslant L^{\alpha}/c$ for some $\alpha \in (0,1)$ and $c>0.$ Then, one decomposes 
\begin{equation*}\label{eq:5.0} 
\Lambda:=[-L,L] = \bigcup_{|\gamma|\leqslant l'} \Lambda_l(\gamma) 
\end{equation*}
where 
$\Lambda_l(\gamma):=(2l+1)\gamma +\Lambda_l.$\\ 
Next, for $\Lambda'\subset \Lambda,$ $U\subset \mathbb R$ and $E>0,$ one defines the following Bernoulli r.v.
\begin{equation}\label{eq:5.1}
X(E, U, \Lambda'):=
\begin{cases}
1 & \text{if $H_{\omega}(\Lambda')$ has at least one eigenvalue}\\ 
  &\text{in $E+(\nu(E)|\Lambda|)^{-1}U$,}\\
0 & \text{otherwise}
\end{cases}
\end{equation} 
and put $\Sigma(E, U, l):= 
\sum_{|\gamma|\leqslant l'} X(E, U, \Lambda_l(\gamma)).$
\vskip 1pt
\noindent First of all,  [Lemma 3.2, \cite{FK2011}] is the first ingredient of the proof which tell us that we can actually reduce our problem to consider eigenvalues of finite-volume operators restricted on much smaller intervals. This lemma is still true in the $n$-energy case for all $n\geq 2$.\\
Hence, to complete the proof of the stochastic independence w.r.t. three processes, one only need show that the quantity
\begin{equation}\label{eq:5.2}
\mathbb P \left (
\left \{
\omega; 
\begin{matrix}
&\Sigma(E, U_1, l)=k_1, &\ldots, \Sigma(E, U_j, l)=k_J \\ 
&\Sigma(E', U'_1, l)=k'_1, &\ldots, \Sigma(E', U'_{J'}, l)=k'_{J'}\\    
&\Sigma(E", U''_1, l)=k''_1, &\ldots, \Sigma(E'', U''_{J''}, l)=k''_{J''}
 \end{matrix}
\right\}
\right )
\end{equation}
should be approximated by the product 
\begin{align*}
&\mathbb P \left (
\left \{
\omega; 
\begin{matrix}
&\Sigma(E, U_1, l)=&k_1\\
 &\vdots &\vdots\\
&\Sigma(E, U_j, l)=&k_J 
 \end{matrix}
\right\}
\right ) \times 
\mathbb P \left (
\left \{
\omega; 
\begin{matrix}
&\Sigma(E', U'_1, l)=&k'_1\\
 &\vdots &\vdots\\
&\Sigma(E', U'_{J'}, l)=&k'_{J'}
 \end{matrix}
\right\}
\right )\\
&\times \mathbb P \left (
\left \{
\omega; 
\begin{matrix}
&\Sigma(E'', U''_1, l)=&k''_1\\
 &\vdots &\vdots\\
&\Sigma(E'', U''_{J''}, l)=&k''_{J''}
 \end{matrix}
\right\}
\right )\\
\end{align*}
as $L$ goes to infinity. Indeed, if the above statement is proved, Theorem \ref{T:oldpoisson} and Lemma 3.2 in \cite{FK2011} implies Theorem \ref{T:Poisson1}.\\
By a standard criterion of the convergence of point processes (c.f. e.g.  Theorem 11.1.VIII, \cite{DJD}), the above statement holds true if the following quantity vanishes for all real numbers $t_j, t'_{j'}, t''_{j''}$ when $L$ goes to infinity: 
\begin{align*}
&\mathbb E \left ( e^{-\sum_{j=1}^J t_j \Sigma(E, U_j, l) -\sum_{j'=1}^{J'} t'_{j'} \Sigma(E', U'_{j'}, l) -\sum_{j''=1}^{J''} t''_{j''} \Sigma(E'', U''_{j''}, l)       } \right) -\\
&\mathbb E \left ( e^{-\sum_{j=1}^J t_j \Sigma(E, U_j, l)}\right)
\mathbb E \left ( e^{-\sum_{j'=1}^{J'} t'_{j'} \Sigma(E', U'_{j'}, l) } \right)\mathbb E \left ( e^{-\sum_{j''=1}^{J''} t''_{j''} \Sigma(E'', U''_{j''}, l) } \right).
\end{align*}
\noindent From the fact that $\{\Lambda(\gamma)\}_{|\gamma| \leqslant l'}$ are pairwise disjoint intervals, operators $\{H_{\omega}(\Lambda(\gamma))\}_{|\gamma|\leqslant l'}$ are independent operator-valued r.v.'s. We thus have
\begin{align*}
&\mathbb E \left ( e^{-\sum_{j=1}^J t_j \Sigma(E, U_j, l) -\sum_{j'=1}^{J'} t'_{j'} \Sigma(E', U'_{j'}, l) -\sum_{j''=1}^{J''} t''_{j''} \Sigma(E'', U''_{j''}, l)       } \right)\\
&= \mathbb E \left| \prod_{|\gamma|\leqslant l'} e^{-\sum_j t_j X(E, U_j, \Lambda_l(\gamma)) -\sum_{j'} t'_{j'} X(E', U'_{j'}, \Lambda_l({\gamma})) -\sum_{j''} t''_{j''} X(E'', U''_{j''}, \Lambda_l(\gamma))} \right|\\
&= \prod_{|\gamma|\leqslant l'} \mathbb E  \left|  e^{-\sum_j t_j X(E, U_j, \Lambda_l(\gamma)) -\sum_{j'} t'_{j'} X(E', U'_{j'}, \Lambda_l({\gamma})) -\sum_{j''} t''_{j''} X(E'', U''_{j''}, \Lambda_l(\gamma))} \right|.
\end{align*}
 Our goal is to approximate terms of the form 
 $$\mathbb E  \left(  e^{-\sum_j^J t_j X(E, U_j, \Lambda_l(\gamma)) -\sum_{j'}^{J'} t'_{j'} X(E', U'_{j'}, \Lambda_l({\gamma})) -\sum_{j''}^{J''} t''_{j''} X(E'', U''_{j''}, \Lambda_l(\gamma))} \right)$$
 by the product 
 $$\prod_j^J \mathbb E e^{-t_j X(E, U_j, \Lambda_l(\gamma))}\prod_{j'}^{J'} \mathbb E e^{-t_j X(E', U'_{j'}, \Lambda_l({\gamma}))}\prod_{j''}^{J''} \mathbb E e^{-t''_{j''} X(E'', U''_{j''}, \Lambda_l(\gamma))}$$
as $L$ large enough with the remark that these r.v.'s are not independent.\\
To illustrate our computation for that, we will consider just three arbitrary Bernoulli r.v.'s $X_i=X(E_i, U_i, \Lambda'), \; i \in \{1, 2, 3\}$ and compute explicitly $\mathbb E(e^{\sum_{i=1}^3 a_i X_i})$. 
\begin{lemma}\label{L:compute}
 Let $\{E_i\}_{i=\overline{1,3}}$ be three positive, distinct energies in the localized regime and 
 $\{U_i\}_{i=\overline{1,3}}$ be three 
 non-empty compact intervals in $\mathbb R.$ 
 Assume that $\Lambda'$ is a sub-interval of $\Lambda=[-L, L]$ such that $|\Lambda'|=2l+1=O(L^{\alpha})$ for some $\alpha \in (0,1)$. Consider three Bernoulli r.v.'s $X_i=X(E_i, U_i, \Lambda')$ defined as in (\ref{eq:5.1}), we have 
 \begin{equation}\label{eq:5.3}
 \mathbb E(e^{\sum_{i=1}^3 a_i X_i})=\prod_{i=1}^3 \mathbb E(e^{a_i X_i})+ O\left((l/L)^{1+\theta}\right).
 \end{equation}
 with any $\theta \in (0,1)$ as $L$ large enough .
 \end{lemma}
 \begin{proof}[Proof of Lemma \ref{L:compute}]
Put $A:=\mathbb E(e^{\sum_{i=1}^3 a_i X_i}),$ we have
\begin{align*}
A&= \mathbb P(X_1=X_2=X_3=0) + \sum_{i=1}^3 e^{a_i} \mathbb P
\left(
\begin{matrix}
&X_i=1,\\ 
&X_j=0\;\forall j\ne i
\end{matrix}
\right)\\
&+\sum_{i<j} e^{a_i+a_j} 
\mathbb P\left(
\begin{matrix}
&X_i=X_j=1,\\
&X_k=0; k\ne i,j
\end{matrix}
\right) 
+e^{\sum_{i=1}^3 a_i}\mathbb P(\bigcap_{i=1}^3 \{X_i=1\}).
\end{align*}  
First of all, we rewrite 
\begin{align*} 
&\mathbb P(X_1=X_2=X_3=0)= 1-\sum_{i=1}^3 \mathbb P \left(
\begin{matrix}
&X_i=1,\\ 
&X_j=0\;\forall j\ne i
\end{matrix}
\right)\\
&-\sum_{i<j} \mathbb P\left(
\begin{matrix}
&X_i=X_j=1,\\
&X_k=0; k\ne i,j
\end{matrix}
\right) 
- \mathbb P(\bigcap_{i=1}^3 \{X_i=1\})
\end{align*}
and obtain that
\begin{align*}
&A=1+\sum_{i=1}^3 (e^{a_i}-1) \mathbb P \left(
\begin{matrix}
&X_i=1,\\ 
&X_j=0\;\forall j\ne i
\end{matrix}
\right)\\
&+\sum_{i<j} (e^{a_i+a_j}-1) \mathbb P\left(
\begin{matrix}
&X_i=X_j=1,\\
&X_k=0; k\ne i,j
\end{matrix}
\right)
+(e^{\sum_{i=1}^3 a_i}-1)\mathbb P(\bigcap_{i=1}^3 \{X_i=1\}).
\end{align*}
Next, use
\begin{align*}
&\mathbb P \left(
\begin{matrix}
&X_i=1,\\ 
&X_j=0\;\forall j\ne i
\end{matrix}
\right)
= \mathbb P(X_i=1)-\mathbb P\left(
\begin{matrix}
&X_i=X_{i+1}=1,\\
&X_{i+2}=0
\end{matrix}
\right)\\
 &-\mathbb P\left(
\begin{matrix}
&X_i=X_{i-1}=1,\\
&X_{i+1}=0
\end{matrix}
\right) 
-\mathbb P(\bigcap_{i=1}^3 \{X_i=1\})
\end{align*}
to get 
\begin{align*}
&A=1+\sum_{i=1}^3(e^{a_i}-1) \mathbb P(X_i=1) +\sum_{i<j} (e^{a_i}-1)(e^{a_j}-1)
\mathbb P\left(
\begin{matrix}
X_i=X_j=1,\\ 
X_k=0\; k\ne i,j
\end{matrix}
\right)\\
&+\left(e^{\sum_{i=1}^3 a_i} -1 -\sum_{i=1}^3(e^{a_i}-1)\right)\mathbb P(\bigcap_{i=1}^3 \{X_i=1\}). 
\end{align*}
Using a similar expansion for all terms of the form $\mathbb P\left(
\begin{matrix}
X_i=X_j=1,\\ 
X_k=0\; k\ne i,j
\end{matrix}
\right),\\
$
we obtain the following formula
\begin{align}\label{eq:5.7}
A&=1+\sum_{i=1}^3(e^{a_i}-1) \mathbb P(X_i=1) +\sum_{i<j} (e^{a_i}-1)(e^{a_j}-1)
\mathbb P\left(
X_i=X_j=1
\right) \notag\\
 &+\prod_{i=1}^3(e^{a_i}-1)\mathbb P (\bigcap_{i=1}^3 \{X_i=1\}).
\end{align}
On the other hand, from the observation that  
  $$\mathbb E e^{a_j X_j}=1+ (e^{a_j}-1)\mathbb P(X_j=1) \; \forall j=\{1,2,3\},$$
  we multiply the three equalities above to get 
\begin{align}\label{eq:5.8}
&\prod_{i=1}^3 \mathbb E(e^{a_i X_i})= 
1+\sum_{i=1}^3(e^{a_i}-1) \mathbb P(X_i=1) \notag \\
&+\sum_{i<j} (e^{a_i}-1)(e^{a_j}-1) \mathbb P(X_i=1)\mathbb P(X_j=1)
 +\prod_{i=1}^3(e^{a_i}-1)\mathbb P(X_i=1).
\end{align}
Hence, thanks to (\ref{eq:5.7}) and (\ref{eq:5.8}), we have 
\begin{align}\label{eq:5.4}
&\mathbb E\left(e^{\sum_{i=1}^3 a_i X_i}\right)-\prod_{i=1}^3 \mathbb E(e^{a_i X_i})=\notag\\
&\sum_{i<j} (e^{a_i}-1)(e^{a_j}-1)\left[ \mathbb P(X_i=X_j=1) -\mathbb P(X_i=1)\mathbb P(X_j=1)\right]\\
&+ \prod_{i=1}^3(e^{a_i}-1) \left[ \mathbb P(\bigcap_{j=1}^3 \{X_j=1\}) -\prod_{j=1}^3\mathbb P(X_j=1)\right] \notag.
\end{align}
Theorem \ref{T:decorrelation} and Theorem \ref{T:Wegner} yield, for any $\theta \in(0,1),$  
\begin{align}\label{eq:5.5}
&\mathbb P(X_i=X_j=1)+\mathbb P(X_i=1) \mathbb P(X_j=1) \notag\\
&\leqslant  C (l/L)^2(e^{(\log L)^{\beta}}+1)\leqslant C(l/L)^{1+\theta}
\end{align}
and
\begin{align}\label{eq:5.6}
&\mathbb P(\bigcap_{i=1}^3 \{X_i=1\})+\mathbb P(X_1=1) \mathbb P(X_2=1)\mathbb P(X_3=1) \notag\\
&\leqslant  C (l/L)^2(e^{(\log L)^{\beta}}+(l/L))\leqslant C(l/L)^{1+\theta}
\end{align}
 with $L$ large enough. Note that $C>0$ is a constant depending only on positive, fixed energies $\{E_i\}_{i=1}^3$.\\
From (\ref{eq:5.4})-(\ref{eq:5.6}), Lemma \ref{L:compute} follows.
\end{proof}
\noindent It is easy to see that the computation in Lemma \ref{L:compute} can apply to any finite number of Bernoulli r.v.'s defined as in (\ref{eq:5.1}). More precisely, for each $|\gamma| \leqslant l',$ we have 
\begin{align*}
&\mathbb E  \left(  e^{-\sum_j t_j X(E, U_j, \Lambda_l(\gamma)) -\sum_{j'} t'_{j'} X(E', U'_{j'}, \Lambda_l({\gamma})) -\sum_{j''} t''_{j''} X(E'', U''_{j''}, \Lambda_l(\gamma))} \right)\\
&=\prod_j^J \mathbb E e^{-t_j X(E, U_j, \Lambda_l(\gamma))}\prod_{j'}^{J'} \mathbb E e^{-t'_{j'} X(E', U'_{j'}, \Lambda_l({\gamma}))}\prod_{j''}^{J''} \mathbb E e^{-t''_{j''} X(E'', U''_{j''}, \Lambda_l(\gamma))}\\
& \times (1+O(l/L)^{1+\theta}).
\end{align*}
On the other hand, we also have similar formulas for $\mathbb Ee^{-\sum_j t_j X(E, U_j, \Lambda_l(\gamma))}$,\\ 
$\mathbb Ee^{-\sum_{j'} t'_{j'} X(E', U'_j, \Lambda_l(\gamma))}$ and $\mathbb E e^{-\sum_{j''} t''_j X(E'', U''_{j''}, \Lambda_l(\gamma))}.$\\
We therefore infer that:   
\begin{align*}
&\mathbb E  \left(  e^{-\sum_j t_j X(E, U_j, \Lambda_l(\gamma)) -\sum_{j'} t'_{j'} X(E', U'_{j'}, \Lambda_l({\gamma})) -\sum_{j''} t''_{j''} X(E'', U''_{j''}, \Lambda_l(\gamma))} \right)\\
&= \mathbb Ee^{-\sum_j t_j X(E, U_j, \Lambda_l(\gamma))} \mathbb E e^{-\sum_{j'} t'_{j'} X(E', U'_{j'}, \Lambda_l({\gamma}))} \mathbb E e^{-\sum_{j''} t''_{j''} X(E'', U''_{j''}, \Lambda_l(\gamma))}\\
& \times (1+O(l/L)^{1+\theta}).
\end{align*}
We have an observation that $|\gamma| \leqslant l'$ where $l'=O(L/l)$. 
Hence, by multiplying all above equalities side by side over $|\gamma|\leq l'$, we obtain that:
$$\mathbb E \left ( e^{-\sum_{j=1}^J t_j \Sigma(E, U_j, l) -\sum_{j'=1}^{J'} t'_{j'} \Sigma(E', U_{j'}, l) -\sum_{j''=1}^{J''} t''_{j''} \Sigma(E'', U_{j''}, l)       } \right)$$
is equal to the product of 
\begin{align*}
\mathbb E \left ( e^{-\sum_{j=1}^J t_j \Sigma(E, U_j, l)}\right)
\mathbb E \left ( e^{-\sum_{j'=1}^{J'} t'_{j'} \Sigma(E', U_{j'}, l) } \right)\mathbb E \left ( e^{-\sum_{j''=1}^{J''} t''_{j''} \Sigma(E'', U_{j''}, l) } \right)
\end{align*}
and an error term of the form $(1+x^{1+\theta})^{1/x}$ with $x=O(l/L)$.\\
Note that the above error term tends to 1 as $L$ goes to infinity. Hence, the stochastic independence for three point processes w.r.t. three distinct energies is proved. Finally, it is not hard to see that we can adapt this proof for $n-$energy case with any $n\geq 2$. 
\end{proof}
\appendix
\section{}\label{S:A}
\begin{proof}[Compute the determinant of matrix $A_0$ in (\ref{eq:A0})]
Put $A_0=(a_{ij}),$ we'll give here some details of computing the determinant of $A_0$ by hand (A mathematical software like Maple or Mathematica might be useful for checking the final result of this computation).\\   
First, expand this determinant by its sixth and last column and then by its first column to get 
  $$ |\det A_0| =\omega_{n-2}|\det B_0| $$
where $B_0$ is the $7\times 7$ matrix defined by
\begin{equation*}
\begin{pmatrix}
1&-1&0&0&1&-1&0\\ 
0&1&-1&0&0&1&-1\\ 
-\omega_{n-1}&\omega_{n-1}+\omega_{n}-E&-\omega_n&0&0&0&0\\ 
0&\dfrac{E}{E'}&0&0&0&-1&0\\ 
0&-\omega_n&\omega_n+\omega_{n+1}-E&-\omega_{n+1}&0&0&0\\ 
0&\omega_n&\omega_{n+1}-\omega_n&-\omega_{n+1}&0&0&E'\\
\dfrac{E}{E'}&0&0&0&-1&0&0 
\end{pmatrix}
\end{equation*}
 Now, we compute the determinant of $B_0.$\\
Take the sixth row minus the fifth row and take the first row plus the last row. Next, multiply the second row by $E'$ and take it plus the sixth row. Finally, expand the determinant of $B_0$ by the forth, the fifth and the last column to get   
$$ |\det A_0| =
\omega_{n-2}\omega_{n+1}|\det C_0| $$
where $C_0$ is the $4\times 4$ matrix defined by
\begin{equation*}
\begin{pmatrix}
1+\dfrac{E}{E'}&-1&0&-1\\ 
0&E'+2\omega_n&E-E'-2\omega_n&E'\\ 
-\omega_{n-1}&\omega_{n-1}+\omega_{n}-E&-\omega_n&0\\ 
0&\dfrac{E}{E'}&0&-1\\
\end{pmatrix}   
\end{equation*}
Finally, by an explicit computation for the determinant of $C_0$, we obtain that 
$$ |\det A_0| = \dfrac{4E}{E'}(E+E')\omega_{n-2}\omega_{n+1} \Big|\omega_{n} +\dfrac{E'-E}{4}\Big|.$$ 
\end{proof}

\begin{proof}[Compute the determinant of matrix $A_1$ in (\ref{eq:A1})]
The determinant of $A_1$\\ 
can be computed as follows:
First, expand this determinant by its sixth and last column and then by its fifth and first column to get 
  $$ |\det A_1| =\omega_{n-2}\omega_{n+1}|\det B_1| $$
where $B_1$ is the $6\times 6$ matrix defined by
\begin{equation*}
\begin{pmatrix}
1&-1&0&-1&1&0\\ 
0&1&-1&0&1&-1\\ 
-\omega_{n-1}&\omega_{n-1}+\omega_{n}-E&-\omega_n&0&0&0\\ 
\omega_{n-1}&\omega_{n}-\omega_{n-1}&-\omega_n&0&-E'&0\\ 
0&0&\dfrac{E}{E'}&0&0&-1\\ 
-\dfrac{E}{E'}&0&0&-1&0&0\\ 
\end{pmatrix}
\end{equation*}
Second, take the first row of matrix B minus its last row and take the second row minus the fifth row, then expand the determinant of $B_1$ by its forth and sixth column to obtain
$$ |\det A_1| =  \omega_{n-2}\omega_{n+1}|\det C_1|$$
where $C_1$ is the following $4\times 4$ matrix
\begin{equation*}
\begin{pmatrix}
1+\dfrac{E}{E'}&-1&0&1\\ 
0&1&-1-\dfrac{E}{E'}&1\\ 
-\omega_{n-1}&\omega_{n-1}+\omega_{n}-E&-\omega_n&0\\ 
\omega_{n-1}&\omega_n-\omega_{n-1}&-\omega_n&-E'\\ 
\end{pmatrix}   
\end{equation*}
Finally, by an explicit computation for the determinant of the matrix $C_1$, we infer that
$$ |\det A_1| = \dfrac{4E}{E'}\omega_{n-2}\omega_{n+1} \Big|\omega_{n-1}\omega_n -\dfrac{(E+E')^2}{4}\Big|.$$ 
\end{proof}

{\small
\noindent \emph{Acknowledgements.} 
This work would have been impossible without great help of my advisor, Prof. Fr\'{e}d\'{e}ric Klopp. The author gratefully acknowledges the collaboration and various discussions with him. I am grateful for some corrections and helpful suggestions made by an anonymous referee which definitely improved the presentation of this work.
}
\begin {thebibliography}{9}
\bibitem{AFAK94}
Alexander Figotin and Abel Klein, 
\emph{Localization of electromagnetic and acoustic waves in random media. Lattice models}, 
J. Statist. Phys., 76(3- 4):985–1003, 1994.

\bibitem{ASFH2001}
Michael Aizenman, Jeffrey H.Schenker, Roland M. Friedrich, and Dirk Hundertmark, \emph{   Finite-volume fractional-moment criteria for Anderson localization},
Comm. Math. Phys., 224(1):219-253, 2001. Dedicated to Joel L. Lebowitz.
\bibitem{DJD}
D.J.Daley, D.Vere-Jones,
\emph{An Introduction to the Theory of Point Processes. Vol. II. General Theory
and Structure}, 
2nd edition, Probability and its applications (New York). Springer, New York (2008)
\bibitem{PH}
Peter D. Hislop, 
\emph{Lectures on random Schr\"{o}dinger operators}, In Fourth Summer School in Analysis and
Mathematical Physics, volume 476 of Contemp. Math., pages 41–131. Amer. Math. Soc., Providence, RI,
2008.

\bibitem {DM2011}
Dong Miao,
\emph{Eigenvalue statistics for lattice Hamiltonian of off-diagonal disorder},
J. Stat. Phys (2011), 143: 509-522
DOI 10.1007/s10955-011-0190-2.
\bibitem{Minami}
Nariyuki Minami,
\emph{ Local fluctuation of the spectrum of a multidimensional Anderson tight binding model},
Comm. Math. Phys. Vol. 177, 709-725 (1996).
\bibitem{FGFK}
Fran\c{c}ois Germinet and Fr\'{e}d\'{e}ric Klopp, \emph{Spectral statistics for random 
\vskip 0.1 pt
\noindent Schr\"{o}dinger operators in the localized regime}, To appear in Jour. Eur. Math. Soc. 
\bibitem {FK2011}
 Fr\'{e}d\'{e}ric Klopp,
\emph{ Decorrelation estimates for the eigenvalues of the discrete Anderson model in the localized regime},
Comm. Math. Phys. Vol. 303, pp. 233-260 (2011).   
\bibitem{TK1995}
Tosio Kato,
\emph{Perturbation Theory for Linear Operators.} Springer-Verlag, Berlin, 1995. Reprint of 1980 edition.

\end{thebibliography}

\noindent {\tiny (Trinh Tuan Phong) Laboratoire Analyse, G\'{e}om\'{e}trie et Applications,\\ 
	UMR 7539, Institut Galil\'{e}e, Universit\'{e} Paris 13,\\ 
	99 avenue J.B. Cl\'{e}ment, 93430 Villetaneuse, France\\
\emph{Email}: trinh@math.univ-paris13.fr 
}
\end{document}